\newtheorem{remark}{Remark}
\newtheorem{proposition}{Proposition}
\theoremstyle{nonumberplain}
\newtheorem{proof}{Proof}
\begin{document}
\title{CFAR based NOMP for Line Spectral Estimation and Detection}
\author{Menghuai Xu, Jiang~Zhu, Jun Fang, Ning Zhang and Zhiwei Xu
\thanks{Menghuai Xu, Jiang Zhu and Zhiwei Xu are with the engineering research center of oceanic sensing technology and equipment, Ministry of Education, Ocean College, Zhejiang University, No.1 Zheda Road, Zhoushan, 316021, China (email: \{menghuaixu, jiangzhu16, xuzw\}@zju.edu.cn). Jun Fang is with the National Key Laboratory on Communications, University of Electronic Science and Technology of China, 2006 Xiyuan Avenue, Chengdu, Sichuan, 611731, China (email: JunFang@uestc.edu.cn). Ning Zhang is with the Nanjing Marine Radar Institute, Nanjing, China (email: zhangn\_ee@163.com). The corresponding author is Jiang Zhu (email: jiangzhu16@zju.edu.cn). }}

\maketitle
\begin{abstract}
The line spectrum estimation problem is considered in this paper. We propose a CFAR-based Newtonized OMP (NOMP-CFAR) method which can maintain a desired false alarm rate without the knowledge of the noise variance. The NOMP-CFAR consists of two steps, namely, an initialization step and a detection step.
In the initialization step, NOMP is employed to obtain candidate sinusoidal components. In the detection step, CFAR detector is applied to detect each candidate frequency, and remove the most unlikely frequency component.
Then, the Newton refinements are used to refine the remaining parameters. The relationship between the false alarm rate and the required threshold is established.
By comparing with the NOMP, NOMP-CFAR has only $1$ dB performance loss in additive white Gaussian noise scenario with false alarm probability $10^{-2}$ and detection probability $0.8$ without knowledge of noise variance.
For varied noise variance scenario, NOMP-CFAR still preserves its CFAR property, while NOMP violates the CFAR.
Besides, real experiments are also conducted to demonstrate the detection performance of NOMP-CFAR, compared to CFAR and NOMP.
\end{abstract}
{\bf keywords}: Newtonized orthogonal matching pursuit (NOMP), CFAR, gridless compressed sening, line spectral estimation.
\section{Introduction}

Line spectrum estimation and detection are fundamental problems in signal processing fields, as they arise in various applications such as radar and sonar target estimation and detection \cite{Kayest, Kaydet, Zhang2019}, direction of arrival estimation \cite{yangzaibook}, situational awareness in automotive millimeter wave radar \cite{High-Performance Automotive Radar}, vital signs identification \cite{VDS} and so on.

The classical fast Fourier transform (FFT) based approach is often adopted to obtain the spectrum \cite{DFT-based Estimation}. Then CFAR detection is implemented to perform target detection. However, FFT is susceptible to "off-grid" effects \cite{Modelmismatch}: The signal from the target leaks into several points in the discrete Fourier transform (DFT) grid, unless it lies exactly on the DFT grid. Besides, this technique also suffers intertarget interference, i.e., the weak target is hard to detect given that the frequency of the weak target is close to that of a strong target, which limits the resolution capability. Later, subspace methods such as MUSIC \cite{MUSIC} and ESPRIT \cite{ESPRIT} which exploit the low-rank structure of the auto-correlation matrix are proposed. The capability of resolving multiple closely-spaced frequencies is enhanced especially at high signal to noise ratio (SNR). However, their performance degrades at medium and low SNRs.

More recent techniques using compressed sensing (CS) based methods exploit the sparse structure of the line spectrum in the frequency domain. By discretezing the frequency onto a finite set of grids, on-grid methods such as orthogonal matching pursuit (OMP) \cite{OMP},  least absolute shrinkage and selection operator (Lasso) \cite{Lasso1, Lasso2} are used to solve the line spectrum estimation problems. However, the frequencies are continuous parameters and they can not lie on the grids exactly. As a consequence, on-grid methods suffer from mismatch issues \cite{Modelmismatch}. To overcome the model mismatch, off-grid and grid-less methods are proposed such as RELAX \cite{RELAXAngle, RELAXIMP}, iterative reweighted approach (IRA) \cite{Fang}, variational line spectra estimation (VALSE) \cite{VALSE}, atomic norm soft thresholding (AST) \cite{AST}, Newtonized OMP (NOMP) \cite{MadhowNOMP} and so on. The frequency estimation accuracy of these off-grid and grid-less methods are better than the classical methods such as MUSIC and on-grid methods. However, the computation complexity of IRA and AST are high. For VALSE, it automatically estimates the noise variance, model order and other nuisance parameters. However, it is numerically found that when the number of targets is large, VALSE tend to output spurious components leading to large false alarms. For NOMP, it provides the constant false alarm rate (CFAR) based termination criterion by calculating the threshold with the knowledge of noise variance and a specified probability of false alarm, which is similar to the standard threshold detection. In practice, the interference levels often vary over time. Besides, the standard threshold is sensitive to the changes of the noise variance, and small errors in setting the threshold will have major impacts on radar performance \cite{FundamentalsRadarSP}. Therefore, it is of vital importance to incorporate the CFAR detector into the CS based algorithm such that both the advantages of the CS and CFAR are preserved.

The CFAR detector is incorporated into the NOMP algorithm named as NOMP-CFAR, so that the integrated detector threshold can be adjusted to maintain the desired false alarm rate. Compared to \cite{MadhowNOMP} using CFAR with the knowledge of noise variance similar to standard threshold detection, NOMP-CFAR estimates the interference level from data in real time to main a CFAR even in varied environments. To preserve the super resolution performance of NOMP-CFAR and avoid the target masking effects \cite{FundamentalsRadarSP}, NOMP-CFAR is divided into two steps: Initialization and model order estimation (MOE) or detection step. For the initialization, NOMP is used to obtain a maximum possible number of spectral as a candidate set, which is beneficial for closely spaced weak target detection. For the MOE, the CFAR detector is incorporated and output a soft quantity $\Delta_k$ (\ref{deltadef}), representing the level in which the amplitude of detected frequency exceeds its corresponding threshold. This is different from the conventional CFAR detector as it outputs a binary decision. The candidate frequency is preserved or removed in the candidate set based on the sign of $\Delta_k$. In each iteration, the most unlikely frequency corresponding to the maximum negative $\Delta_k$ is removed. Then the Newton refinements are conducted to improve the estimation accuracy of the parameters of the candidate frequencies.

In this paper, motivated by the high estimation accuracy and low computation of NOMP and its excellent performance in the application fields \cite{mnomp, Guptammw, Marzice, HanTwc, Mamandipoorimaging}, the CFAR is incorporated into the NOMP and NOMP-CFAR is proposed. The main contribution of this work can be summarized in the following three aspects:
\begin{enumerate}
  \item The NOMP-CFAR algorithm is designed, which inherits both the CFAR and super resolution advantages of the CFAR and NOMP. Incorporating the CFAR approach into the NOMP is nontrival as CFAR outputs a soft quantity instead of a simple binary decision. Besides, an initialization step via the NOMP approach is also provided to improve the detection probability, as shown in Subsection \ref{Benefits of Initialization}. Several implement details are also taken into consideration to enhance the performance of NOMP-CFAR. For further details, please refer to Section \ref{CFAR based MDNOMP}.
  \item The performances of NOMP-CFAR in terms of false alarm probability and detection probability are analyzed for the cell averaging (CA) CFAR. Specifically, the relationship between the false alarm rate and the required threshold multiplier is established, and insights into the relationship between CFAR and NOMP are also revealed. The detection probability by ignoring inter-sinusoid interference is derived.
  \item The NOMP-CFAR is also extended to deal with the compressive measurement scenario and the multiple measurement vector setting (MMV).
  \item The performance of NOMP-CFAR algorithm is compared with the NOMP and CFAR detector in numerical simulations and real data experiments. Numerically, it is shown that both NOMP-CFAR and NOMP provide a CFAR for additive white Gaussian noise (AWGN) scenario. For the detection probability $0.76$, NOMP-CFAR yields $1$ dB performance loss, compared to NOMP. For varied noise variance scenario, NOMP-CFAR still preserves its CFAR property, while NOMP violates the CFAR. In real experiments, the detection probability of NOMP-CFAR is higher than that of CFAR and the false alarm is smaller than CFAR.
\end{enumerate}

The outline of the paper is summarized as follows: Section \ref{Problem Setup} introduces the LSE model.
In Section \ref{CFAR based MDNOMP}, NOMP-CFAR algorithm is developed. In addition, NOMP-CFAR is also extended to deal with the compressive measurement model and MMV model. In Section \ref{Numerical Simulation}, numerical experiments are conducted to evaluate the performance of NOMP-CFAR algorithm in terms of estimation accuracy, false alarm probability and detection probability. The real data experiments are shown in Section \ref{Real Experiment}.
And Section \ref{Conclusion} concludes the paper.

Notation:
Let $\mathbf{a}$, $\mathbf{A}$, $\mathcal{A}$ denote the vector, matrix and tensor,  respectively.
For a D-dimension tensor $\mathcal{Y}$, let ${\mathcal{Y}}_{\mathbf{k}}$ denote the ${\mathbf k}$th element of ${\mathcal{Y}}$, where ${\mathbf{k}} \in \mathbb{N}^{D}$ and $\mathbb{N}$ is the set of all natural numbers, and $\tilde{\mathcal{Y}}$ denotes its spectrum. For a set $\mathcal N$, let $|\mathcal N|$ denote its cardinality. For any two frequencies $\omega_l$ and $\omega_k$, the wrap-around distance is defined as ${\rm dist}(\omega_l, \omega_k) \triangleq {\rm min}_{a \in \mathbb{Z}}\left|\omega_k - \omega_l + 2 \pi a\right|$ and $\mathbb{Z}$ is the set of all integers.

\section{Problem Setup} \label{Problem Setup}
Consider a multidimensional line spectral ${\mathcal Z}\in{\mathbb R}^{N_1\times N_2\cdots\times N_d\cdots\times N_D}$ described as
\begin{align}\label{outprodmodel}
{\mathcal Z}=\sum_{k = 1}^{K} x_k \mathbf{a}_{N_1}({\omega}_{1, k}) \circ \mathbf{a}_{N_2}({\omega}_{2, k})\cdots \mathbf{a}_{N_d}({\omega}_{d, k})\cdots\circ \mathbf{a}_{N_D}({\omega}_{D, k}),
\end{align}
where $D$ denotes the dimension (usually $D=1,2,3$), $K$ denotes the number of spectral, $\circ$ denotes the outer product of two matrices, $\mathbf{a}_P(\omega)$ is an array steering vector defined as
\begin{align}\label{the definition of line spectrum}
\mathbf{a}_P(\omega) = \begin{bmatrix}
1, & {\text e}^{\text{j} \omega}, & \cdots, & {\text e}^{\text{j} (P - 1) \omega}
\end{bmatrix}^{\text T}.
\end{align}
The line spectral ${\mathcal Z}$ corrupted by additive noise ${\mathcal \epsilon}$ is described as
\begin{align}
\mathcal{Y}={\mathcal Z}+\mathcal{\varepsilon},
\end{align}
where $\mathcal{Y}$ is the noisy measurement, $\mathcal{\varepsilon}$ denotes the AWGN and its ${\mathbf n}\triangleq[n_1,\cdots,n_d,\cdots,n_D]$th element follows  $\mathcal{\varepsilon}_{{\mathbf n}}\sim {\mathcal {CN}}(0,\sigma^2)$.

By defining ${\boldsymbol \omega}_k = [{\omega}_{1, k}, {\omega}_{2, k}, \cdots,{\omega}_{D, k}]$ and $\mathcal{A}({\boldsymbol \omega}_k) = \mathbf{a}_{N_1}({\omega}_{1, k}) \circ \mathbf{a}_{N_2}({\omega}_{2, k})\cdots \mathbf{a}_{N_d}({\omega}_{d, k})\cdots\circ \mathbf{a}_{N_D}({\omega}_{D, k})$, model (\ref{outprodmodel}) can be simplified as
\begin{align}\label{simple multiple tensor model}
\mathcal{Y} = \sum_{k = 1}^{K} x_k \mathcal{A}({\boldsymbol \omega}_k) + \mathcal{\varepsilon},
\end{align}
Vectorizing $\mathcal{Y}$ yields
\begin{align}\label{multiple vector model}
\mathbf{y} = \sum_{k = 1}^{K} x_k {\mathbf a}_N({\boldsymbol\omega}_k)  + {\boldsymbol\varepsilon},
\end{align}
where $\mathbf{y}={\rm vec}(\mathcal{Y})$,
\begin{align}\label{defineaNomega}
{\mathbf a}_N({\boldsymbol\omega}_k)\triangleq \mathbf{a}_{N_D}({\omega}_{D, k})\otimes\cdots\otimes \mathbf{a}_{N_d}({\omega}_{d, k})\otimes\cdots\otimes\mathbf{a}_{N_2}({\omega}_{2, k})\otimes\mathbf{a}_{N_1}({\omega}_{1, k}),
\end{align}
$\otimes$ denotes the Kronecker product, ${\boldsymbol\varepsilon}={\rm vec}(\mathcal{\varepsilon})$ and $\boldsymbol \varepsilon \sim \mathcal{CN}({\mathbf 0}, \sigma^2 \mathbf{I}_{N})$. By defining ${\mathbf A}$ as
\begin{align}\label{defineA}
\mathbf{A}&=
\begin{bmatrix}
\mathbf{a}_N({\boldsymbol\omega}_1), &\cdots,& \mathbf{a}_N({\boldsymbol\omega}_k), & \cdots, & \mathbf{a}_N({\boldsymbol\omega}_K) \\
\end{bmatrix},
\end{align}
model (\ref{multiple vector model}) is reformulated as a matrix form
\begin{align}\label{receive model}
\mathbf{y} = \mathbf{A} \mathbf{x} + \boldsymbol \varepsilon.
\end{align}

Consequently, the line spectra estimation and detection problem is to obtain $\hat{K}$ and the frequencies $\{(\hat{\omega}_{1,k}, \hat{\omega}_{2,k},\cdots, \hat{\omega}_{D,k})\}_{k=1}^{\hat{K}}$ without the knowledge of noise variance $\sigma^2$.

The SNR in dB of the $k$th target is defined as \cite{MadhowNOMP}
\begin{align}\label{defofSNR}
{\rm SNR}_k=10\log\frac{N |x_k|^2}{\sigma^2},
\end{align}
where $N=\prod\limits_{d=1}^{D}N_d$, which is the nominal SNR value in our simulations and is called ``integrated SNR''. An alternative definition of SNR is the ``per-sample'' SNR given by ${\rm SNR}_{{\rm sample},k}=10\log\frac{|x_k|^2}{\sigma^2}=10\log {\rm SNR}_k-10\log(N)$. It can be seen that the ``integrated SNR'' is the ``per-sample'' SNR plus the coherent integrated gain $10\log(N)$.

\section{NOMP-CFAR Algorithm}\label{CFAR based MDNOMP}
This section develops the NOMP-CFAR algorithm for target detection and estimation. The false alarm probability is often defined clearly for a binary hypothesis testing problem. For the LSE, an intuitively explanation of the false alarm probability is that for the false alarm probability $\bar{\rm P}_{\rm FA}$, the NOMP-CFAR algorithm generates about $N_{\rm MC}\bar{\rm P}_{\rm FA}$ false targets whose frequencies are not near the frequencies of the true targets in $N_{\rm MC}$ Monte Carlo trials. Firstly, the CFAR criterion is proposed and the threshold with the false alarm probability $\bar{\rm P}_{\rm FA}$ is provided. Secondly, the details of NOMP-CFAR which novelly combines the CFAR and NOMP is presented. Note that we have carefully design NOMP-CFAR to ensure that the measured false alarm probability is close to the nominal false alarm probability. Finally, NOMP-CFAR is extended to deal with both the compressive measurement model and an MMV model.
\subsection{CFAR Detector Design}\label{detdesign}
The stopping criterion proposed in \cite{MadhowNOMP} assumes that the noise variance is known and constant. This stopping criterion sets the threshold accurately to guarantee a specified probability of false alarm. In practice, the noise variance is unknown and can often be variable. To provide predictable detection and false alarm behaviour in realistic scenarios, CFAR detection or ``adaptive threshold detection '' is developed.

The idea of CFAR detector is to estimate the noise variance from the data in real time, so that the detection threshold can be adjusted to maintain the desired $\bar{\rm P}_{\rm FA}$. Below we present the details of the CFAR design.

We consider a binary hypothesis testing problem
\begin{subequations}\label{modelbht}
\begin{align}
&{\mathcal H}_0:{\mathcal Y}=\mathcal{\varepsilon},\\
&{\mathcal H}_1:{\mathcal Y}={x}\mathbf{a}_{N_1}({\omega}_{1, k}) \circ \mathbf{a}_{N_2}({\omega}_{2, k})\cdots \mathbf{a}_{N_d}({\omega}_{d, k})\cdots\circ \mathbf{a}_{N_D}({\omega}_{D, k})+\mathcal{\varepsilon},
\end{align}
\end{subequations}
and we choose between the null hypothesis ${\mathcal H}_0$ and the alternative hypothesis ${\mathcal H}_1$, where $\mathcal{\varepsilon}$ is the AWGN with its elements $\mathcal{\varepsilon}_{\mathbf n}$ being $\mathcal{\varepsilon}_{\mathbf n}\sim {\mathcal {CN}}({0},\sigma^2)$. With the knowledge of noise variance $\sigma^2$, a typical detector deciding ${\mathcal H}_1$ is
\begin{align}\label{GLRTnoiseaware}
T({{\mathcal Y}},\sigma^2)=\frac{|\tilde{\mathcal Y}_{\tilde{\mathbf n}_{\rm peak}}|^2}{\sigma^2}>\alpha',
\end{align}
where $\tilde{\mathcal Y}$ denotes the normalized $D$ dimensional DFT of ${\mathcal Y}$ defined as
\begin{align}\label{DdimDFT}
\tilde{\mathcal Y}_{\tilde{\mathbf n}}=\frac{1}{\sqrt{N}}\sum\limits_{n_1=0}^{N_1-1}{\rm e}^{-{\rm j}\frac{2\pi}{N_1}\tilde{n}_1n_1}\sum\limits_{n_2=0}^{N_2-1}{\rm e}^{-{\rm j}\frac{2\pi}{N_2}\tilde{n}_2n_2}\cdots\sum\limits_{n_D=0}^{N_D-1}{\rm e}^{-{\rm j}\frac{2\pi}{N_D}\tilde{n}_Dn_D}{\mathcal Y}_{\mathbf n},
\end{align}
${\tilde{\mathbf n}}_{\rm peak}$ is the peak localization of $|\tilde{\mathcal Y}|^2$ given by
\begin{align}\label{definekpeak}
{\tilde{\mathbf n}}_{\rm peak}=\underset{{\tilde{\mathbf n}}\in \tilde{\mathcal N}}{\operatorname{argmax}} ~|\tilde{\mathcal Y}_{\tilde{\mathbf n}}|^2
\end{align}
where $\tilde{\mathcal N}=\{\tilde{\mathbf n}\in{\mathbb N}^D,0\leq k_d\leq N_d-1,d=1,2,\cdots,D\}$.
In other words, the detector (\ref{GLRTnoiseaware}) decides that the signal is present if the peak value of the spectrum exceeds a threshold.
It is also worth noting that (\ref{GLRTnoiseaware}) can be evaluated efficiently through $D$ dimensional DFT.
Given the false alarm probability $\bar{\rm P}_{\rm FA}$, the threshold multiplier $\alpha'$ can be obtained by \cite{MadhowNOMP}
\begin{align}\label{the threshold of NOMP}
\alpha'_{\rm NOMP} = - \ln \left(1 - (1 - \bar{\rm P}_{\rm FA}) ^ {\frac{1}{N}}\right).
\end{align}

Motivated by the conventional CFAR approach, the noise variance $\sigma^2$ is estimated with the data samples as
\begin{align}\label{sigmaest}
\hat{\sigma}^2 = \frac{1}{N_r}\sum\limits_{{\tilde{\mathbf n}}\in {\mathcal{T}}_{{\tilde{\mathbf n}}_{\rm peak}}} \left|\tilde{\mathcal Y}_{\tilde{\mathbf n}}\right|^2,
\end{align}
where ${\tilde{\mathbf n}}_{\rm peak}$ (\ref{definekpeak}) is the index of the cell under test (CUT) $\tilde{\mathcal Y}_{{\tilde{\mathbf n}}_{\rm peak}}$, ${\mathcal{T}}_{{\tilde{\mathbf n}}_{\rm peak}}$ is the index set of the reference cells, $N_r$ denotes the number of reference cells and $N_r=|{\mathcal{T}}_{{\tilde{\mathbf n}}_{\rm peak}}|$. The reference cells are averaged to estimate the noise variance. In practice, we set guard cells between the reference cells and the CUT. The reason is that a frequency not on the DFT grid exactly might straddle frequency cells. In this case, the energy in the cell adjacent to $\tilde{\mathcal Y}_{{\tilde{\mathbf n}}_{\rm peak}}$ would contain both noise and signal energy. The extra energy from the signal would tend to raise the estimate of the noise variance, resulting in a higher threshold and a lower $\bar{\rm P}_{\rm FA}$ and $\bar{\rm P}_{\rm D}$ than intended. The details of selecting reference cells and guard cells can be referred to \cite{FundamentalsRadarSP}. While in multiple target scenario, the construction of reference cells from \cite{FundamentalsRadarSP} can not be directly borrowed. The reason is that there may some other signals lie on the reference cells. Once NOMP-CFAR estimates these signals, eliminates the effects of these signals, and detects a new signal, directly constructing the reference cells via the CFAR approach tends to make the estimate of the noise variance lower, resulting in a lower threshold and a higher $\bar{\rm P}_{\rm FA}$ than intended, see Subection \ref{AdCFARMDOMP} for further details. Consequently, the cells nearby the other targets should be excluded from the reference cells.

With $\hat{\sigma}^2$ (\ref{sigmaest}), the detector in the case of unknown noise variance is
\begin{align}\label{GLRT_noiseunknown}
T({{\mathcal Y}})=\frac{|\tilde{\mathcal Y}_{{\tilde{\mathbf n}}_{\rm peak}}|^2}{\hat{\sigma}^2}>\alpha.
\end{align}
Given this detector $T({{\mathcal Y}})$ (\ref{GLRT_noiseunknown}), its false alarm probability $\bar{\rm P}_{\rm FA}$ and detection probability $\bar{\rm P}_{\rm D}$ are analyzed. Firstly, we need to calculate the probability density function (PDF) of $\hat{\sigma}^2$ (\ref{sigmaest}). Note that the normalized DFT is a unitary matrix, and the DFT of ${\mathcal Y}$ under the null hypothesis ${\mathcal H}_0$ is still the AWGN, i.e.,
\begin{align}
p(\tilde{\mathcal Y}|{\mathcal H}_0)=\prod\limits_{\tilde{\mathbf n}}{\mathcal{CN}}(\tilde{\mathcal Y}_{\tilde{\mathbf n}};0,\sigma^2)
\end{align}
In general, the exact PDF of $\hat{\sigma}^2$ (\ref{sigmaest}) is hard to obtain as the reference cells depend on the peak localization ${\tilde{\mathbf n}}_{\rm peak}$ of the spectrum. Provided that the reference cells are not too near the peak, the PDF of $\tilde{\mathcal Y}_{\tilde{\mathbf n}}$, ${\tilde{\mathbf n}}\in {\mathcal{T}}_{{\tilde{\mathbf n}}_{\rm peak}}$ is approximated as a Gaussian distribution, i.e.,
\begin{align}\label{assum1}
p(\tilde{\mathcal Y}_{{\mathcal{T}}_{{\tilde{\mathbf n}}_{\rm peak}}}|{\mathcal H}_0)\approx\prod\limits_{\tilde{\mathbf n}\in {\mathcal{T}}_{{\tilde{\mathbf n}}_{\rm peak}}}{\mathcal{CN}}(\tilde{\mathcal Y}_{\tilde{\mathbf n}};0,\sigma^2).
\end{align}
Based on (\ref{assum1}), we proceed to calculate the average false alarm $\bar{\rm P}_{\rm FA}$ and the result is summarized as Proposition \ref{Propfalse}.
\begin{proposition}\label{Propfalse}
The average false alarm probability $\bar{\rm P}_{\rm FA}$ and the required threshold multiplier $\alpha$ for the detector (\ref{GLRT_noiseunknown}) is
\begin{align}\label{simplePoe}
\bar{\rm P}_{\rm FA} = 1 - \frac{1}{(N_r - 1)!} \int_{0}^{+ \infty} \left(1 - {\rm e}^{- \frac{\alpha}{N_r} x} \right)^{N} x^{N_r - 1} {\rm e}^{-x} {\rm d}x,
\end{align}
or
\begin{align}\label{resultPoe}
\bar{\rm P}_{\rm FA} &= 1 - \sum_{n = 0}^{N} (-1)^n \binom{N}{n} \frac{1}{(\frac{n \alpha}{N_r} + 1)^{N_r}} .
\end{align}
\end{proposition}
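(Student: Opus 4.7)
The plan is to characterize the law of the detector statistic $T(\mathcal{Y})=|\tilde{\mathcal{Y}}_{\tilde{\mathbf n}_{\rm peak}}|^2/\hat\sigma^2$ under $\mathcal{H}_0$, then integrate out the denominator. Invoking the approximation in (\ref{assum1}), the DFT cells may be treated as independent, each $\mathcal{CN}(0,\sigma^2)$, so that $Z_{\tilde{\mathbf n}}\triangleq |\tilde{\mathcal{Y}}_{\tilde{\mathbf n}}|^2/\sigma^2$ is $\mathrm{Exp}(1)$. Consequently, $\hat\sigma^2/\sigma^2=S/N_r$ with $S=\sum_{\tilde{\mathbf n}\in\mathcal{T}_{\tilde{\mathbf n}_{\rm peak}}}Z_{\tilde{\mathbf n}}\sim\Gamma(N_r,1)$, whose PDF is $f_S(x)=x^{N_r-1}{\rm e}^{-x}/(N_r-1)!$. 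For the numerator, the peak over $N$ i.i.d.\ $\mathrm{Exp}(1)$ variables has CDF $(1-{\rm e}^{-u})^N$ for $u\ge 0$.

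Next, I would treat the peak cell and the reference set as (approximately) independent, which is consistent with (\ref{assum1}) and with the use of guard cells that separate $\tilde{\mathbf n}_{\rm peak}$ from $\mathcal{T}_{\tilde{\mathbf n}_{\rm peak}}$. Conditioning on $S=x$, the complementary event becomes
\begin{align*}
\Pr\!\left(\max_{\tilde{\mathbf n}} Z_{\tilde{\mathbf n}}\le \alpha x/N_r\,\big|\,S=x\right)=\bigl(1-{\rm e}^{-\alpha x/N_r}\bigr)^{N}.
\end{align*}
Integrating against $f_S$ gives
\begin{align*}
\bar{\rm P}_{\rm FA}=1-\frac{1}{(N_r-1)!}\int_0^{+\infty}\bigl(1-{\rm e}^{-\alpha x/N_r}\bigr)^{N} x^{N_r-1}{\rm e}^{-x}\,{\rm d}x,
\end{align*}
which is (\ref{simplePoe}).

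To obtain the closed form (\ref{resultPoe}), I would expand the integrand by the binomial theorem,
\begin{align*}
\bigl(1-{\rm e}^{-\alpha x/N_r}\bigr)^{N}=\sum_{n=0}^{N}(-1)^n\binom{N}{n}{\rm e}^{-n\alpha x/N_r},
\end{align*}
swap sum and integral, and evaluate each Gamma integral via $\int_0^{+\infty}x^{N_r-1}{\rm e}^{-(1+n\alpha/N_r)x}{\rm d}x=(N_r-1)!/(1+n\alpha/N_r)^{N_r}$. The $(N_r-1)!$ cancels against the prefactor, yielding (\ref{resultPoe}).

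The main obstacle is justifying the independence/identical-distribution step used when writing the CDF of the ``peak'' as $(1-{\rm e}^{-u})^N$ and treating it as independent of $S$: in reality, the cells forming $S$ and $\tilde{\mathcal{Y}}_{\tilde{\mathbf n}_{\rm peak}}$ come from the same finite-$N$ spectrum and the reference set is chosen adaptively from $\tilde{\mathbf n}_{\rm peak}$. I would argue that the approximation (\ref{assum1}), reinforced by the guard-cell construction described after (\ref{sigmaest}), makes this simplification accurate; any residual correlation affects only higher-order terms and is confirmed numerically in the later sections. Once this is accepted, the remaining manipulations are a conditional expectation and an elementary binomial/Gamma computation.
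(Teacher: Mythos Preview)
Your proposal is correct and follows essentially the same approach as the paper: condition on the reference-cell statistic (the paper packages it as $\hat T=\alpha\hat\sigma^2$ with its chi-square/Gamma density, you equivalently work with $S=\sum Z_{\tilde{\mathbf n}}\sim\Gamma(N_r,1)$), use the product CDF $(1-{\rm e}^{-u})^N$ for the peak, integrate out, and then apply the binomial expansion with the Gamma integral $\int_0^\infty x^{N_r-1}{\rm e}^{-(1+n\alpha/N_r)x}\,{\rm d}x=(N_r-1)!/(1+n\alpha/N_r)^{N_r}$. Your discussion of the independence approximation mirrors the paper's reliance on (\ref{assum1}) and the guard-cell construction.
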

\begin{proof}
The basic idea is to obtain the false alarm probability ${\rm P}_{\rm FA}$ with $\hat{\sigma}^2$ fixed, and then averaging that with respect to the PDF of $\hat{\sigma}^2$ yields the desired false alarm probability $\bar{\rm P}_{\rm FA}$. Define the threshold $\hat{T}$ as
\begin{align}\label{distribution of That}
\hat{T} = \alpha\left(\frac{1}{N_r}\sum\limits_{{\tilde{\mathbf n}}\in {\mathcal{T}}_{{\tilde{\mathbf n}}_{\rm peak}}} \left|\tilde{\mathcal Y}_{\tilde{\mathbf n}}\right|^2\right).
\end{align}
Under the assumption (\ref{assum1}), the distribution of $\hat{T}$ is chi-square distribution with $2 N_r$ degrees of freedom and variance $\alpha \sigma^2 / 2$, i.e.,
\begin{align}\label{PDF of That}
p_{\hat{T}}(\hat{T}) = \frac{1}{(N_r - 1)!} \left(\frac{N_r}{\alpha \sigma^2}\right)^{N_r}\hat{T}^{N_r - 1} {\rm e}^{- \frac{N_r \hat{T}}{\alpha \sigma^2}}.
\end{align}
The false alarm event is declared if the peak of the spectrum $|\tilde{\mathcal Y}_{{\tilde{\mathbf n}}_{\rm peak}}|^2$ exceeds the threshold $\hat{T}$, i.e.,
\begin{align}\label{barPfacal}
\bar{\rm P}_{\rm FA}={\rm E}_{\hat{T}}\left[{\rm P}\left(|\tilde{\mathcal Y}_{{\tilde{\mathbf n}}_{\rm peak}}|^2\geq \hat{T}\right)\right],
\end{align}
where ${\rm E}_{\hat{T}}[\cdot]$ is taken with respect to the PDF $p_{\hat{T}}(T)$ (\ref{PDF of That}). With $\hat{T}$ being fixed, ${\rm P}\left(|\tilde{\mathcal Y}_{{\tilde{\mathbf n}}_{\rm peak}}|^2\geq \hat{T}\right)$ is
\begin{align}\label{CDFpeak}
{\rm P}\left(|\tilde{\mathcal Y}_{{\tilde{\mathbf n}}_{\rm peak}}|^2\geq \hat{T}\right)&={\rm P}\left(\underset{{\tilde{\mathbf n}}\in \tilde{\mathcal N}}{\operatorname{max}} ~|\tilde{\mathcal Y}_{\tilde{\mathbf n}}|^2\geq \hat{T}\right)\notag\\
&=1-{\rm P}\left(\underset{{\tilde{\mathbf n}}\in\tilde{\mathcal N}}{\operatorname{max}} ~|\tilde{\mathcal Y}_{\tilde{\mathbf n}}|^2\leq \hat{T}\right)\notag\\
&=1-{\rm P}\left(|\tilde{\mathcal Y}_{\tilde{\mathbf n}}|^2\leq \hat{T},{{\tilde{\mathbf n}}\in \tilde{\mathcal N}}\right)\notag\\
&\stackrel{a}=1-\prod\limits_{{{\tilde{\mathbf n}}\in \tilde{\mathcal N}}}{\rm P}\left(|\tilde{\mathcal Y}_{\tilde{\mathbf n}}|^2\leq \hat{T}\right),
\end{align}
where $\stackrel{a}=$ is due to the independence of $\tilde{\mathcal Y}_{\tilde{\mathbf n}}$, ${{\tilde{\mathbf n}}\in \tilde{\mathcal N}}$.
The PDF of $|\tilde{\mathcal Y}_{\tilde{\mathbf n}}|^2$ is chi-square distribution with 2 degrees of freedom and variance $\sigma^2 / 2$ with CDF
\begin{align}\label{CDF of Yk2}
{\rm P}\left(|\tilde{\mathcal Y}_{\tilde{\mathbf n}}|^2\leq \hat{T}\right) = 1 - {\rm e}^{- \hat{T} / \sigma^2}.
\end{align}
By substituting (\ref{CDFpeak}) in (\ref{barPfacal}), the average false alarm probability $\bar{\rm P}_{\rm FA}$ is
\begin{align}\label{detail of Poe}
\bar{\rm P}_{\rm FA} = 1 - \int_{0}^{+ \infty} \prod\limits_{{{\tilde{\mathbf n}}\in \tilde{\mathcal N}}} {\rm P} \left( \left| \tilde{\mathcal{Y}}_{\tilde{\mathbf{n}}} \right|^2 < \hat{T}\right) p_{\hat{T}}(\hat{T}) {\rm d}\hat{T}.
\end{align}
Substituting (\ref{CDF of Yk2}) and (\ref{PDF of That}) into (\ref{detail of Poe}) yields
\begin{align}
\bar{\rm P}_{\rm FA} = 1 - \int_{0}^{+ \infty} \left(1 - {\rm e}^{- \hat{T} / \sigma^2}\right)^{N} \frac{1}{(N_r - 1)!} \left(\frac{N_r}{\alpha \sigma^2}\right)^{N_r} \hat{T}^{N_r - 1} {\rm e}^{- \frac{N_r \hat{T}}{\alpha \sigma^2}} {\rm d}\hat{T},
\end{align}
which can be simplified as (\ref{simplePoe}) by doing a change of variable $x = \frac{N_r \hat{T}}{\alpha \sigma^2}$. Furthermore, by using the binomial expansion
\begin{align}\label{binomial expansion }
\left(1 - {\rm e}^{- \frac{\alpha}{N_r} x} \right)^{N} = \sum_{n = 0}^{N} \binom{N}{n} (-1)^n {\rm e}^{- \frac{n \alpha}{N_r} x},
\end{align}
(\ref{simplePoe}) can be simplified as
\begin{align}\label{binomial Poe}
\bar{\rm P}_{\rm FA} = 1 - \sum_{n = 0}^{N} \binom{N}{n} \frac{(-1)^n}{(N_r - 1)!} \int_{0}^{+ \infty} {\rm e}^{- (\frac{n \alpha}{N_r} + 1) x} x^{N_r - 1} {\rm d}x.
\end{align}
Utilizing $\int_{0}^{+ \infty} {\rm e}^{- (\beta + 1) x} x^{N_r - 1} {\rm d}x = \frac{(N_r - 1)!}{(\beta + 1)^{N_r}}$, (\ref{resultPoe}) is obtained.
\end{proof}

Note that we have provided two formulas (\ref{simplePoe}) and (\ref{resultPoe}) for $\bar{\rm P}_{\rm FA}$ with the required threshold multiplier $\alpha$.
Eq. (\ref{simplePoe}) is simple for calculating via numerical integration. In contrast, eq. (\ref{resultPoe}) is hard to calculate as the range of the terms in the summation can be very large. However, eq. (\ref{resultPoe}) indeed provides some insight into the relationship between the establishing result and the previous results, as shown in the following Remarks.
\begin{remark}
As $N_r\to\infty$, one has
\begin{align}\label{limitation of Nr}
 \lim_{N_r \to + \infty}  (1 + \frac{n \alpha}{N_r})^{-N_r} = {\rm e}^{- n \alpha}.
\end{align}
Substituting (\ref{limitation of Nr}) in (\ref{resultPoe}), $\bar{\rm P}_{\rm FA}$ can be simplified as
\begin{align}\label{resultPoe Nr is big}
\bar{\rm P}_{\rm FA} = 1 - \sum_{n = 0}^{N} (-1)^n \binom{N}{n} {\rm e}^{- n \alpha} = 1 - (1 - {\rm e}^{- \alpha})^N,
\end{align}
and $\alpha$ can be calculated as
\begin{align}\label{alpha of big Nr}
\alpha = -\ln \left(1 - (1 - \bar{\rm P}_{\rm FA}) ^ {\frac{1}{N}}\right).
\end{align}
Similar results, consistent with (\ref{alpha of big Nr}), are obtained in \cite{MadhowNOMP}, which
focuses on the noise variance aware case. This makes sense as $N_r\to\infty$, the noise variance estimate is exact and $\alpha=\alpha'$ with $\alpha'$ given in (\ref{the threshold of NOMP}) and $\bar{\rm P}_{\rm FA}={\rm P}_{\rm FA}$.
\end{remark}

\begin{remark}
When the required threshold multiplier $\alpha$ is large such that the false alarm probability $\bar{\rm P}_{\rm FA}$ is small, the first and second terms corresponding to $n=0$ and $n=1$ in the summation is much larger than the remaining terms, i.e., \begin{align}\label{constraintofDom}
\frac{(N - n)! n!}{(N - 1)!} \left(\frac{n \alpha + N_r}{\alpha + N_r} \right)^{N_r} \gg 1, n = 2, \cdots, N.
\end{align}
Consequently, (\ref{resultPoe}) can be approximated as
\begin{align}\label{resultPoeapp}
\bar{\rm P}_{\rm FA} \approx 1 - \sum_{n = 0}^{1} (-1)^n \binom{N}{n} \frac{1}{(\frac{n \alpha}{N_r} + 1)^{N_r}}={N}{(\frac{\alpha}{N_r} + 1)^{-N_r}}=N\bar{\rm P}_{\rm FA}({\rm bin}),
\end{align}
where $\bar{\rm P}_{\rm FA}({\rm bin})={({\alpha}/{N_r} + 1)^{-N_r}}$ is the average false alarm probability of traditional CA-CFAR method for each cell if we examine one bin.
Hence, $\bar{\rm P}_{\rm FA}$ increases approximately linearly with the number of bins examined.
Eq. (\ref{resultPoeapp}) demonstrates that when $\bar{\rm P}_{\rm FA}$ is very small, it is equal to the $N$ times of the average false alarm probability of traditional CA-CFAR method. The average number of false alarms among all the $N$ test cells is $N\bar{\rm P}_{\rm FA}({\rm bin})\ll 1$. Intuitively, when $N\bar{\rm P}_{\rm FA}({\rm bin})\ll 1$, the false alarm is dominated by the CUT corresponding to the peak of the spectrum with false alarm probability $\bar{\rm P}_{\rm FA}$. This implies the approximation (\ref{resultPoeapp}).
\end{remark}
\begin{remark}
The two formulas (\ref{simplePoe}) and (\ref{resultPoe}) are obtained with the CA-CFAR approach. Following the similar steps, one could establish the specified $\bar{\rm P}_{\rm FA}$ with the required threshold multiplier $\alpha$ for other CFAR approaches, such as the smallest-of cell-averaging CFAR (SOCA CFAR), greater-of cell-averaging CFAR (GOCA CFAR), order statistic CFAR (OS CFAR), etc \cite{FundamentalsRadarSP, RadarCFARRohling}. The OS-CFAR rank orders the reference window data samples $\tilde{\mathcal{Y}}_{\tilde{\mathbf n}}$ to form a new sequence in ascending numerical order, ${\tilde{\mathbf n}} \in {\mathcal{T}}_{{\tilde{\mathbf n}}_{\rm peak}}$.
The $r$th element of the ordered list is called the $r$th order statistic, denoted by $\tilde{\mathcal{Y}}_{(r)}$.
Thus the new sequence of the reference window data samples $\tilde{\mathcal{Y}}_{\tilde{\mathbf n}}$ can be denoted as $\left\{ \tilde{\mathcal{Y}}_{(1)}, \cdots,  \tilde{\mathcal{Y}}_{(N_r)} \right\}$.
In OS-CFAR, the $r$th order statistic is selected as representative of the noise variance and a threshold is set as a multiple of this value
\begin{align}\label{ThatofOSCFAR}
\hat{T} = \alpha_{\rm OS} \tilde{\mathcal{Y}}_{(r)}.
\end{align}
As shown in \cite{FundamentalsRadarSP}, the PDF of $\hat{T}$ is
\begin{align}\label{PDF of ThatinOS}
p_{\hat{T}}(\hat{T}) = \frac{r}{\alpha_{\rm OS}} \binom{N}{r} \left[{\rm e}^{- \hat{T} / \alpha_{\rm OS}} \right] ^ {N - r + 1} \left[ 1- {\rm e}^{- \hat{T} / \alpha_{\rm OS}} \right] ^ {r - 1}.
\end{align}
Substituting (\ref{PDF of ThatinOS}) in (\ref{detail of Poe}), the average false alarm rate $\bar{\rm P}_{\rm FA}$ of OS-CFAR can be obtained.
\end{remark}

Define
\begin{align}\label{deltadef}
\Delta = 10 \log\left(\frac{|\tilde{\mathcal{Y}}_{{\tilde{\mathbf n}}_{\rm peak}}|^2}{\hat{T}}\right).
\end{align}
where $\hat{T}$ is (\ref{distribution of That}). Note that $\Delta = 0$ implies that the integrated amplitude of the detected signal just cross the corresponding threshold. For $\Delta > 0$, $\Delta$ is the excess in which the amplitude of the detected signal above the corresponding threshold. While for $\Delta < 0$, $-\Delta$ is the excess in which the amplitude of the detected signal below the corresponding threshold. For the traditional CFAR detector, it makes a binary decision based on the sign of $\Delta$. In contrast, we provide a ``soft'' CFAR detector which outputs $\Delta$. The ``soft'' CFAR detector is summarized in Algorithm \ref{CFARdetector}.

\begin{algorithm}[ht]
\caption{CFAR detector}\label{CFARdetector}
\begin{algorithmic}[1]
\STATE For a given false alarm probability $\bar{\rm P}_{\rm FA}$, calculate the required threshold multiplier $\alpha$ for CA-CFAR via (\ref{simplePoe}). \\
\STATE Procedure: CFAR $({\mathcal Y}, \alpha)$ : \\
\STATE Compute $\tilde{\mathcal{Y}}$ via (\ref{DdimDFT}), \\
\STATE Find the peak location ${\tilde{\mathbf n}}_{\rm peak}$ via (\ref{definekpeak}), \\
\STATE Estimate the noise variance as $\hat{\sigma}^2$ (\ref{sigmaest}), \\
\STATE Compute $\Delta$ (\ref{deltadef}), \\
\STATE Return $\Delta$.
\end{algorithmic}
\end{algorithm}

It is also meaningful to obtain the detection probability $\bar{\rm P}_{\rm D}$ for the required threshold multiplier $\alpha$. Under the alternative hypothesis ${\mathcal H}_1$, we declare that a frequency $\boldsymbol \omega$ exists if the spectrum $\tilde{\mathcal Y}$ at the DFT frequency ${\boldsymbol \omega}_{\hat{\mathbf n}}$ exceeds its corresponding threshold, where the $d$th element of $\hat{\mathbf n}$ is
\begin{align}
\hat{n}_d = \underset{n_d}{\operatorname{argmin}}|\omega_d - \bar{\omega}_{n_d}|,
\end{align}
where $\bar{\omega}_{n_d} \triangleq (n_d - 1) 2 \pi / N_d$ denotes the $d$th dimensional the $n_d$th DFT frequency grid.
Straightforward calculation shows that under the alternative hypothesis ${\mathcal H}_1$,
\begin{align}
\tilde{\mathcal Y}_{\hat{\mathbf n}} &= \sqrt{N} x\prod\limits_{d=1}^D\left({\rm e}^{{\rm j} \frac{(N_d-1)(\omega_d - \bar{\omega}_{\hat{n}_d})}{2}} \frac{ \sin\left(\frac{N_d(\omega_d - \bar{\omega}_{\hat{n}_d})}{2}\right) }{N_d\sin\left(\frac{(\omega_d - \bar{\omega}_{\hat{n}_d})}{2}\right)}\right) + \tilde{\varepsilon} \nonumber \\
&\triangleq \sqrt{N} x \beta_{\hat{\mathbf n}} + \tilde{\varepsilon},
\end{align}
$|\omega_d - \bar{\omega}_{\hat{n}_d}|\leq \pi / N_d$ and $|\beta_{\hat{\mathbf n}}| \le 1$.
It can be known that $\tilde{\mathcal Y}_{ \hat{\mathbf n} } \sim \mathcal{CN}(\sqrt{N} x \beta_{\hat{\mathbf n}}, \sigma^2)$. Conditioned on $\hat{T}$, the detection probability ${\rm P}_{\rm D}(\hat{T})$ is
\begin{align}\label{PD of T}
{\rm P}_{\rm D}(\hat{T}) &= {\rm P}\left(\left|\tilde{\mathcal Y}_{\hat{\mathbf n}}\right|^2 > \hat{T}\right) \nonumber\\
&= {\rm P}\left( \frac{\left|\tilde{\mathcal Y}_{\hat{\mathbf n}}\right|^2 }{\sigma^2}> \frac{2 \hat{T}}{\sigma^2}\right) \nonumber \\
&= Q_1 \left(\sqrt{\frac{2 N x^2 |\beta_{\hat{\mathbf n}}|^2 }{\sigma^2}}, \sqrt{\frac{2 \hat{T} }{\sigma^2}} \right)
\end{align}
where $Q_1(\cdot, \cdot)$ is the Marcum Q-function.
The average detection probability can be calculated as
\begin{align}\label{the average detection probability}
\bar{\rm P}_{\rm D} = \int_{0}^{+\infty} {\rm P}_{\rm D}(\hat{T}) p_{\hat{T}}(\hat{T}) {\rm d}\hat{T}.
\end{align}
Substituting (\ref{PDF of That}) and (\ref{PD of T}) into (\ref{the average detection probability}), one obtains
\begin{align}\label{the barPD integral}
\bar{\rm P}_{\rm D} &= \int_{0}^{+\infty} Q_1\left(\sqrt{\frac{2 N |x|^2 |\beta_{\hat{\mathbf n}}|^2 }{\sigma^2}}, \sqrt{\frac{2 \hat{T} }{\sigma^2}}\right) \frac{1}{(N_r - 1)!} \left(\frac{N_r}{\alpha \sigma^2}\right)^{N_r} \hat{T}^{N_r - 1} {\rm e}^{- \frac{N_r \hat{T}}{\alpha \sigma^2}} {\rm d}\hat{T} \nonumber \\
&\stackrel{a}= \frac{1}{(N_r - 1)!} \left(\frac{N_r}{\alpha}\right)^{N_r} \int_{0}^{+\infty} Q_1\left(\sqrt{\frac{2 N |x|^2 |\beta_{\hat{\mathbf n}}|^2 }{\sigma^2}}, \sqrt{2 T}\right) T^{N_r - 1} {\rm e}^{-\frac{N_r}{\alpha} T } {\rm d}{T}\nonumber \\
&= \frac{1}{(N_r - 1)!} \left(\frac{N_r}{\alpha}\right)^{N_r} \int_{0}^{+\infty} Q_1\left(|\beta_{\hat{\mathbf n}}|\sqrt{2{\rm SNR}}, \sqrt{2 T}\right) T^{N_r - 1} {\rm e}^{-\frac{N_r}{\alpha} T } {\rm d}{T}\nonumber \\
&\stackrel{b}\approx \frac{1}{(N_r - 1)!} \left(\frac{N_r}{\alpha}\right)^{N_r} \int_{0}^{+\infty} Q_1\left(0.88^D\sqrt{2{\rm SNR}}, \sqrt{2 T}\right) T^{N_r - 1} {\rm e}^{-\frac{N_r}{\alpha} T } {\rm d}{T}\triangleq \tilde{\rm P}_{{\rm D}}({\rm SNR}),
\end{align}
where the $\stackrel{a}=$ is due to the substitution $T = \frac{\hat{T}}{\sigma^2}$, $\stackrel{b}\approx$ is due to
${\rm E}[\beta_{\hat{\mathbf n}}]=0.88^D$ by assuming a uniform distribution for a frequency within a DFT grid interval, see \cite{MadhowNOMP} for further details.

The average detection probability $\bar{\rm P}_{\rm D}$ in the single target scene is (\ref{the barPD integral}). In the following, we discuss the detection probability $\bar{\rm P}_{\rm all}$ for multiple targets scene. Let ${\rm E}_k$ denote the event of the $k$th target being detected, and let ${\rm P}\left({\rm E}_k\right)$ denote the probability of the $k$th target being detected. As a result, the probability of all targets being detected can be calculated as
\begin{align}\label{AllTargetsDetected}
{\rm P}\left({\rm E}_1{\rm E}_2\cdots {\rm E}_K\right) = {\rm P}\left({\rm E}_1\right) {\rm P} \left({\rm E}_2|{\rm E}_1\right) \cdots {\rm P}\left({\rm E}_K|{\rm E}_1\cdots{\rm E}_{K-1}\right),
\end{align}
where ${\rm P} \left({\rm E}_i|{\rm E}_1\cdots {\rm E}_{i-1}\right)$ denote the probability of the $i$th target being detected conditioned on all the previous $i-1$ detected targets. Obviously, ${\rm P} \left({\rm E}_i|{\rm E}_1\cdots {\rm E}_{i-1}\right)\leq \tilde{\rm P}_{{\rm D}}({\rm SNR}_i)$. Consequently, ${\rm P}\left({\rm E}_1{\rm E}_2\cdots {\rm E}_K\right)$ can be upper bounded as
\begin{align}\label{AllTargetsDetectedupperbound}
{\rm P}\left({\rm E}_1{\rm E}_2\cdots {\rm E}_K\right) \leq \prod\limits_{k=1}^K \tilde{\rm P}_{{\rm D}}({\rm SNR}_k)\triangleq \bar{\rm P}_{\rm all}^{\rm upper}.
\end{align}
In Subsection \ref{Validation of CFAR Property}, $\bar{\rm P}_{\rm all}^{\rm upper}$ is evaluated to provide an upper bound of the detection probability of all the targets.

\subsection{NOMP-CFAR}\label{AdCFARMDOMP}
Subsection \ref{detdesign} has proposed the CFAR detector for a single target scenario and provided (\ref{simplePoe}) for calculating the threshold multiplier $\alpha$ for a specified $\bar{\rm P}_{\rm FA}$. Now we focus on the multiple target detection problem and design NOMP-CFAR algorithm.

For the target detection and estimation problem, one should try to find a minimal representation of the line spectral such that the residue can be well approximated as the AWGN. Note that the conventional CFAR based approach consists of two steps: Firstly, performing the fast Fourier transform (FFT) on $\mathcal{Y}$ to obtain the spectrum $\tilde{\mathcal{Y}}$ \footnote{For range estimation, range Doppler estimation, the spectrum corresponds to range spectrum and range-Doppler spectrum, respectively.}. Secondly, implementing the CFAR detection such as CA or OS methods for all the cells by specifying the guard length, training length and the average false alarm rate $\bar{\rm P}_{\rm FA}$, the target detection results are obtained. Obviously, the FFT based approach is suboptimal as it suffers from grid mismatch and inter-target intergerence. Besides, the resolution of the FFT-based approach is limited by the Rayleigh criterion. Therefore, we propose NOMP-CFAR to overcome the drawbacks of the traditional CFAR based approach, while inherit the advantages of the CFAR behaviour.

The main steps of the NOMP-CFAR can be divided into two steps: Initialization and MOE or Detection step. The initialization step provides a good initial point of the NOMP-CFAR. While for the MOE step, it is iterative and it includes an activation step and deactivation step, i.e., by activating the target or deactivating the target. For the initialization step: It is assumed that the number of targets $K$ is upper bounded by a known constant $K_{\rm max}$, i.e., $K\leq K_{\rm max}$. This is reasonable as one could choose $K_{\rm max} = N$ as the number of targets must be less than the number of measurements $N$. Then we run NOMP consisting of coarse detection, single refinement, cyclic refinement, least squares (LS) step to obtain the candidate amplitudes and frequencies set ${\mathcal K}_{K_{\rm max}}=\{(\hat{x}_k,\hat{{\boldsymbol \omega}}_k),k=1,2,\cdots,K_{\rm max}\}$. Note that this step is necessary as it alleviates the intertarget interference and provides fine results for the MOE step. For further details please see \cite{MadhowNOMP}. For the MOE step, the CFAR approach is implemented to perform target detection, i.e., activate and deactivate the components of the frequencies. In addition, all the steps such as coarse detection, single refinement, cyclic refinement, LS in NOMP are also incorporated to enhance the target estimation and detection accuracy. Below we present the details of the MOE step.

Given the candidate set ${\mathcal K}=\{(\hat{x}_k,\hat{{\boldsymbol \omega}}_k),k=1,2,\cdots,\hat{K}\}$ and $\hat{K}=|{\mathcal K}|\leq K_{\rm max}$, one could use the CFAR criterion to detect the $k$th frequency. To implement the CFAR detector, we proceed as follows: The residue of the observation $\mathcal{Y}_{\rm r}({\mathcal K})$ after eliminating all the targets ${\mathcal K}$ is
\begin{align}\label{residue observation}
\mathcal{Y}_{\rm r}({\mathcal K}) = \mathcal{Y} - \sum_{k = 1}^{\hat{K}} \hat{x}_k \mathcal{A}(\hat{\boldsymbol \omega}_k).
\end{align}
Provided that all the amplitudes and the frequencies are detected and estimated in high accuracy, $\mathcal{Y}_{\rm r}({\mathcal K})$ is approximately AWGN, i.e.,
\begin{align}\label{appAWGN}
\mathcal{Y}_{\rm r}({\mathcal K}) \approx \varepsilon_{\rm r}
\end{align}
and ${\rm vec }(\varepsilon_{\rm r})\sim {\mathcal {CN}}({\mathbf 0},\sigma_{\rm r}^2{\mathbf I}_{N})$. To detect the $k$ target, we obtain the pseudo measurement $\mathcal{Y}_{{\rm r}}({\mathcal K}\backslash (\hat{x}_k,\hat{{\boldsymbol \omega}}_k)) $ by adding the contribution of the $k$th frequency to the residue $\mathcal{Y}_{\rm r}({\mathcal K})$, i.e.,
\begin{align}\label{under test model}
\mathcal{Y}_{{\rm r}}({\mathcal K}\backslash (\hat{x}_k,\hat{{\boldsymbol \omega}}_k)) = \mathcal{Y}_{\rm r}({\mathcal K}) + \hat{x}_k \mathcal{A}(\hat{\boldsymbol \omega}_k) \stackrel{a}\approx \hat{x}_k \mathcal{A}(\hat{\boldsymbol \omega}_k) + \varepsilon_{\rm r},
\end{align}
where $\stackrel{a}\approx$ is due to (\ref{appAWGN}). We now input $\mathcal{Y}_{{\rm r}}({\mathcal K}\backslash (\hat{x}_k,\hat{{\boldsymbol \omega}}_k))$ and the required threshold multiplier $\alpha$ to the CFAR detector to provide a soft detection of the $k$th frequency, $k=1,2,\cdots,\hat{K}$, i.e., calculate $\Delta_k$ (\ref{deltadef}). Note that $\Delta_k\geq 0$ means that the power of the CUT exceeds the estimated threshold, and this target should be detected if the CFAR is adopted. If we radically implement the CFAR detector to detect all the targets and then stop, miss event may happen in certain scenarios. For example, in the Initialization step, a strong target may be detected as two closely-spaced frequencies with distinguishable amplitudes. This splitting phenomenon degrades the SNR of the frequency and the CFAR detector may miss the two targets at the same time, which misses the true frequency. To overcome this problem, we only activate or deactivate one frequency in each iteration. In this way, even after splitting, the frequency corresponding to the weaker amplitude may be eliminated and the frequency one with stronger amplitude can be refined, and the true frequency may be detected. To activate or deactivate one frequency in each iteration, the greedy approach is adopted by calculating
\begin{align}\label{minimum characteristic}
\hat{k} = \underset{k=1,2,\cdots,\hat{K}}{\operatorname{argmin}}{\Delta}_{k}
\end{align}
and ${\Delta}_{\hat{k}}$. Depending on the sign of ${\Delta}_{\hat{k}}$, we choose two different actions:
\begin{itemize}
  \item [A1] ${\Delta}_{\hat{k}}<0$: In this setting, deactive the $\hat{k}$th frequency by removing the $\hat{k}$th frequency, i.e., updating $\hat{\mathcal K}' = {\mathcal K} \backslash (\hat{x}_{\hat{k}},\hat{{\boldsymbol \omega}}_{\hat{k}})$. Perform the single refinement, cyclic refinement and LS for the remaining targets and obtain $\hat{\mathcal K}''$. This refinement step is necessary to ensure the high estimation accuracy. Then for the updated parameter set $\hat{\mathcal K}''$, perform CFAR detection as done before.
  \item [A2] ${\Delta}_{\hat{k}} \geq 0$: In this setting, the frequency set is kept. Then, perform CFAR detection on the residual $\mathcal{Y}_{\rm r}({\mathcal K})$ to determine whether a new amplitude frequency pair should be added on the list ${\mathcal K}$ or not. If the CFAR does not detect any target or the number of targets is greater than $K_{\rm max}$, then the algorithm stops. Otherwise, add the new amplitude frequency pair into the list and iterate.
\end{itemize}
Note that compared to the traditional CFAR, our procedure can be viewed as a soft CFAR as we proceed very gently.
The NOMP-CFAR is summarized as Algorithm \ref{AdapCFARMDNOMPalgorithm}.

\begin{algorithm}
\caption{NOMP-CFAR} \label{AdapCFARMDNOMPalgorithm}
\begin{algorithmic}[1]
\STATE Procedure: NOMP-CFAR $({\mathcal Y}, K_{\rm max}, \alpha)$ : \\
Initialization step:\\
\STATE Run NOMP to obtain the candidate amplitude frequency pair set ${\mathcal K}_{K_{\rm max}}=\{(\hat{x}_k,\hat{{\boldsymbol \omega}}_k),k=1,2,\cdots,K_{\rm max}\}$.\\
MOE step: \\
\STATE $\hat{\mathcal K} \leftarrow {\mathcal K}_{K_{\rm max}}$
\STATE \textbf{while} $\hat{K} \geq 0$  and $\hat{K} \leq K_{\rm max}$, \\
\STATE \label{step1} \quad Compute $\Delta_k \leftarrow {\rm CFAR}(\mathcal{Y}_{\rm r}({\mathcal K}\backslash (\hat{x}_k,\hat{{\boldsymbol \omega}}_k)), \alpha)$ for $k = 1, \cdots, \hat{K}$. \\
\STATE \quad Compute $\hat{k}$ and $\Delta_{\hat{k}}$ (\ref{minimum characteristic}). \\
\STATE \quad \textbf{if} $\Delta_{\hat{k}} < 0$\\
\STATE \quad \quad ${\mathcal K} \leftarrow {\mathcal K} \backslash (\hat{x}_{\hat{k}},\hat{{\boldsymbol \omega}}_{\hat{k}})$
\STATE \quad \quad Perform cyclic refinement and LS on the remaining parameter set ${\mathcal K}$ and go to Step \ref{step1}. \\
\STATE \quad \textbf{else} \\
\STATE \quad \quad Compute $\Delta_{\rm new} \leftarrow {\rm CFAR}(\mathcal{Y}_{\rm r}({\mathcal K}), \alpha)$
\STATE \quad \quad \textbf{if} $\Delta_{\rm new} \geq 0$ \\
\STATE \quad \quad \quad Run single refinement to obtain the new amplitude frequency pair $(\hat{x}_{\rm new},\hat{{\boldsymbol \omega}}_{\rm new})$ , ${\mathcal K} \leftarrow {\mathcal K} \cup (\hat{x}_{\rm new},\hat{{\boldsymbol \omega}}_{\rm new})$ and go to Step \ref{step1}. \\
\STATE \quad \quad \textbf{else}
\STATE \quad \quad \quad  \textbf{break} \\
\STATE \quad \quad \textbf{end if} \\
\STATE \quad \textbf{end if} \\
\STATE \textbf{end while}
\STATE \textbf{return} ${\mathcal K}$
\end{algorithmic}
\end{algorithm}

\subsection{Extension of NOMP-CFAR}
The above NOMP-CFAR can be easily extended to deal with the MMV setting and the compressive setting.
\subsubsection{MMV setting}
The MMV model can be described as
\begin{align}\label{sth snapshot Y}
\mathcal{Y}(s) = \sum_{k = 1}^{K} x_k(s) \mathcal{A}({\boldsymbol \omega}_k) + \mathcal{\varepsilon}(s), \quad s = 1,2,\cdots,S,
\end{align}
where $S$ denotes the number of snapshots. Details about single refinement, cyclic refinement and LS step can be referred to \cite{mnomp}. Here we just focus on the difference and provide the details of the CFAR detection.

Define $\tilde{\mathcal{Y}}(s)$ as the normalized DFT of $\mathcal{Y}(s)$. The CFAR detector deciding the alternative hypothesis ${\mathcal H}_1$ is
\begin{align}\label{detector_noiseunknownmmv}
T({{\mathcal Y}})=\frac{\frac{1}{S}\sum\limits_{s=1}^S|\tilde{\mathcal Y}_{{\tilde{\mathbf n}}_{\rm peak}}(s)|^2}{\hat{\sigma}_{\rm mmv}^2}\geq \alpha_{\rm mmv},
\end{align}
where ${\tilde{\mathbf n}}_{\rm peak}$ denotes the peak localization of $\frac{1}{S}\sum\limits_{s=1}^S|\tilde{\mathcal Y}(s)|^2$ which is the average of the spectrum over the snapshots, $\hat{\sigma}_{\rm mmv}^2$ is
\begin{align}
\hat{\sigma}_{\rm mmv}^2 = \frac{1}{SN_r}\sum\limits_{s=1}^S\sum\limits_{{\tilde{\mathbf n}}\in {\mathcal{T}}_{{\tilde{\mathbf n}}_{\rm peak}}} \left|\tilde{\mathcal Y}_{\tilde{\mathbf n}}(s)\right|^2
\end{align}
$\alpha_{\rm mmv}$ is the required threshold multiplier. Then, given $\alpha_{\rm mmv}$, the false alarm probability $\bar{\rm P}_{\rm FA}$
\begin{align}\label{simple type of Poe}
\bar{\rm P}_{\rm FA} = 1 - \frac{N_r}{\alpha (SN_r - 1)!} \int_{0}^{+ \infty} \left(1-\mathrm{e}^{- T} \sum_{s=0}^{S - 1} \frac{T^{s}}{s!}\right)^N \mathrm{e}^{- \frac{N_r T}{ \alpha}} \left( \frac{N_r T}{\alpha} \right)^{SN_r - 1} dT.
\end{align}
is established, see Appendix \ref{MMVscenario} for the details. Note that for SMV scenario, i.e., $S=1$, (\ref{simple type of Poe}) reduces to (\ref{simplePoe}).
\subsubsection{Compressive Setting}
For simplicity, we focus on the $1$-dimensional compressive line spectrum estimation problem formulated as
\begin{align}\label{compressive scene model}
{\mathbf y} = {\boldsymbol \Phi}\left(\sum_{k = 1}^{K}  {\mathbf a}({\boldsymbol \omega}_k)x_k\right) + \varepsilon,
\end{align}
where ${\boldsymbol \Phi}\in{\mathbb C}^{M\times N}$ is the compression matrix and $M/N$ is the compression rate. Define a ``generalized'' atom as
\begin{align}
\breve{{\mathbf a}}(\omega)=\frac{{\boldsymbol\Phi}{\mathbf a}(\omega)}{\|{\boldsymbol\Phi}{\mathbf a}(\omega)\|_2}.
\end{align}
The CS model becomes
\begin{align}\label{CompressiveSceneModel}
\mathbf{y} = \sum\limits_{k=1}^K\breve{x}_k\breve{\mathbf a}(\omega) + \varepsilon.
\end{align}
The steps of the coarse detection, single refinement, cyclic refinement and LS are the same as that of NOMP. Here we present the CFAR details. Given $\mathbf{y}({\mathcal K})=\mathbf{y}-\sum\limits_{k=1}^{|{\mathcal K}|}\hat{\breve{x}}_k\breve{\mathbf a}(\hat{\omega})$, where ${\mathcal K}=\{(\hat{\breve{x}}_k,\hat{\omega}_k),k=1,2,\cdots,|\mathcal K|\}$ and $(\hat{\breve{x}}_k,\hat{\omega}_k)$ denotes the estimate of the amplitude frequency pair, calculate the ``generalized'' spectrum as
\begin{align}
\tilde{\mathbf{y}}({\mathcal K})=\bar{\mathbf{a}}^{\rm H}(\omega)\mathbf{y}({\mathcal K})
\end{align}
by restricting $\omega$ to the DFT grid. Then the detector (\ref{GLRT_noiseunknown}) is used to perform detection, and  the noise variance estimate is given by (\ref{sigmaest}). Besides, the required threshold multiplier for a given $\bar{\rm P}_{\rm FA}$ is (\ref{simplePoe}).

\section{Numerical Simulation}\label{Numerical Simulation}
In this section, the derived theoretical result is verified and the performance of the proposed NOMP-CFAR is investigated. Firstly, the false alarm probability versus the required threshold multiplier is evaluated, which provides a way to calculate the required threshold multiplier for a specified false alarm probability. Secondly, the CFAR property is validated. Third, the benefits of initialization is demonstrated. Fourthly, the performance of the proposed NOMP-CFAR is evaluated, including the SMV, MMV and compressive settings. Finally, we consider a time-varying noise variance scenario and show the CFAR property of the proposed algorithm. Since the performance of FFT based CFAR detection is poor in multiple target scenarios (here we set the number of frequencies as $16$), we do not compare with it. The NOMP achieves the state-of-art performance and we mainly focus on comparing NOMP-CFAR with it.

The parameters are set as follows: $N=256$, the number of frequencies is $K = 16$, the frequencies are randomly drawn and the wrap around distances of any two frequencies are greater than $2.5\Delta_{\rm DFT}$, where $\Delta_{\rm DFT}=2 \pi / N$.
The noise variance is $\sigma^2 = 1$.
The default algorithm parameters are set as follows: The average false alarm probability is $\bar{\rm P}_{\rm FA} = 10 ^ {-2}$, the number of training cells $N_r = 50$ unless stated otherwise.
Therefore, the required multiple factor $\alpha$ can be calculated by (\ref{binomial Poe}), which is $\alpha = 11.22$.
And the threshold of NOMP is $\tau_{\rm NOMP} = \alpha'_{\rm NOMP} \sigma^2 = 10.15$.
The over sampling rate is $\gamma = 4$ and the upper bound of the number of targets is $K_{\rm max} = 2K$.
All the results are averaged over $3000$ Monte Carlo (MC) trials.

A frequency $\omega$ is detected provided that $\underset{k=1,\cdots,\hat{K}}{\operatorname{min}}{\rm dist}(\hat{\omega}_k, \omega)\leq 0.5\Delta_{\rm DFT}$. We compute the detection probability ${\rm P}_{\rm D}$ defined as the detection probability of all the true frequencies. An estimated frequency $\hat{\omega}$ is a false alarm provided that $\underset{k=1,\cdots,{K}}{\operatorname{min}}{\rm dist}(\hat{\omega}, \omega_k)\geq  0.5\Delta_{\rm DFT}$, where $\hat{K}$ denotes the number of estimated frequencies. We calculate the false alarm probability $\bar{\rm P}_{\rm FA}$ defined as the false alarm rate of all the estimated frequencies.

\subsection{$\bar{\rm P}_{\rm FA}$ versus the $\alpha$}
In this subsection, the false alarm probability $\bar{\rm P}_{\rm FA}$ for a required threshold multiplier $\alpha$ of NOMP-CFAR and NOMP are calculated and results are shown in Fig. \ref{PoevsalphabyS}. In addition, the required threshold multiplier $\alpha$ of NOMP-CFAR is also approximately calculated through the CFAR approach (\ref{resultPoeapp}) termed as CFAR approx. for the SMV scenario. For the SMV scenario $S=1$ and $\bar{\rm P}_{\rm FA} \leq 0.5$ which is often of interest as $\bar{\rm P}_{\rm FA}$ is often set as a very small value such as $10^{-2}$ or $10^{-1}$ and $\bar{\rm P}_{\rm FA}$ is fixed, the required threshold multiplier $\alpha$ of NOMP-CFAR is higher than $\alpha^{'}_{\rm NOMP}$ of NOMP, i.e., $\alpha^{'}_{\rm NOMP}< \alpha$. For small $\bar{\rm P}_{\rm FA}$, i.e., $\bar{\rm P}_{\rm FA} \leq 0.01$, the required multiplier calculated by CFAR approach approximates that of NOMP-CFAR well.
In detail, for $\bar{\rm P}_{\rm FA}=10^{-2}$, the required threshold multipliers of NOMP-CFAR, CFAR approx. and NOMP are $11.22$, $11.25$ and $10.15$, respectively, corresponding to $10.50$ dB, $10.51$ dB and $10.06$ dB. For the MMV scenario, the required threshold multiplier of NOMP-CFAR is close to that of NOMP. In addition, for a specified $\bar{\rm P}_{\rm FA}$, the required threshold multiplier decreases as the number of snapshots $S$ increases. For example, for $\bar{\rm P}_{\rm FA}=10^{-2}$, the required threshold multipliers of NOMP-CFAR for $S=1$, $S=10$ and $S=50$ are $11.22$, $2.81$ and $1.67$, respectively, corresponding to $10.50$ dB, $4.49$ dB and $2.23$ dB.
\begin{figure*}
  \centering
  \subfigure[]{
  \includegraphics[width=65mm]{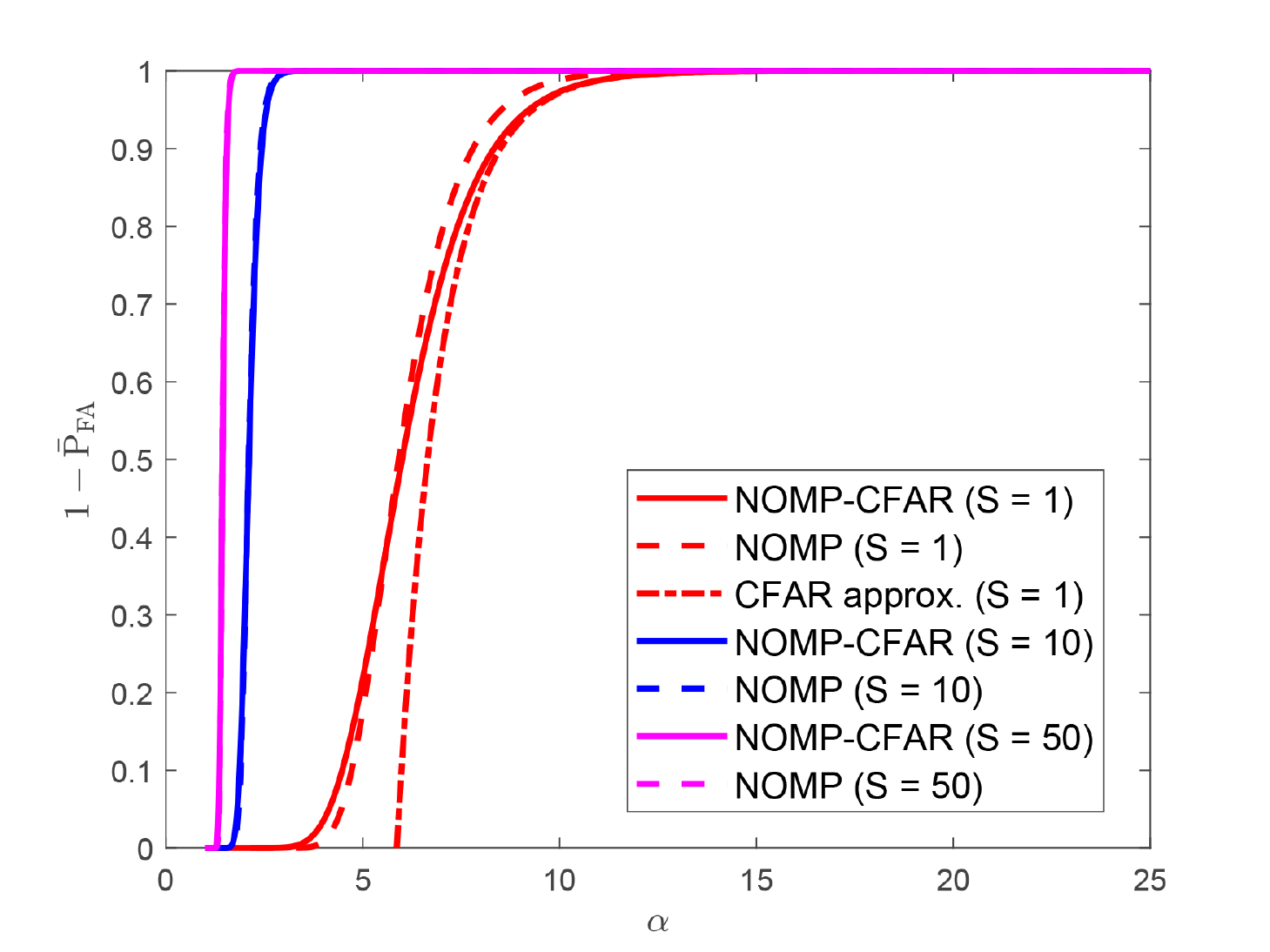}}
  \subfigure[]{
  \includegraphics[width=65mm]{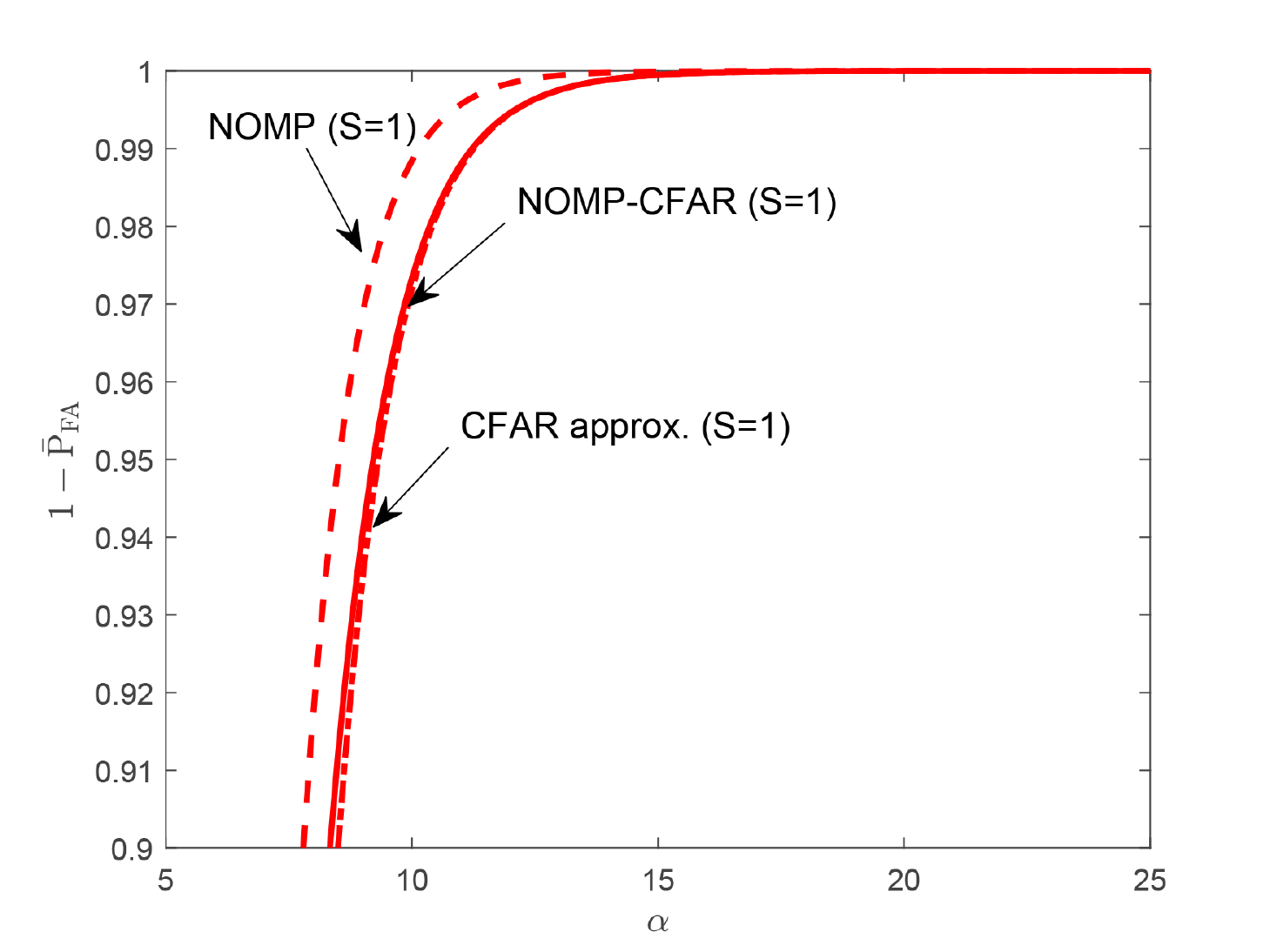}}
  \caption{The $\bar{\rm P}_{\rm FA}$ versus $\alpha$ for the snapshots $S=1$, $S=10$, and $S=50$ (a), (b) Zoomed in. respectively.}\label{PoevsalphabyS}
\end{figure*}
\subsection{Validation of CFAR Property}\label{Validation of CFAR Property}
The CFAR Property is verified by conducting a simulation in a SMV scenario. The nominal $\bar{\rm P}_{\rm FA}$ is varied from $0.01$ to $0.12$. The results are shown in Fig. \ref{performance versus PFA by SNR}. It can be seen that the measured $\bar{\rm P}_{\rm FA}$ is close to that of the nominal $\bar{\rm P}_{\rm FA}$ at three different SNRs ${\rm SNR}=14$ dB, ${\rm SNR}=15$ dB and ${\rm SNR}=18$ dB, demonstrating the CFAR behaviour of both the NOMP and  NOMP-CFAR algorithm in constant noise variance scenario. In addition, the measured detection probability $\bar{\rm P}_{\rm D}$ and the upper bound of the theoretical detection probability $\bar{\rm P}_{\rm all}^{\rm upper}$ (\ref{AllTargetsDetectedupperbound}) are also plotted and shown in Fig. \ref{PDvsPFAbySNR}. It can be seen that $\bar{\rm P}_{\rm all}^{\rm upper}$ is an upper bound of the measured detection probability $\bar{\rm P}_{\rm D}$ of the NOMP-CFAR algorithm. Besides, the detection probability of NOMP is higher than that of NOMP-CFAR, and the performance gap lowers as SNR increases.
\begin{figure}
  \centering
  \subfigure[]{
  \label{PFAvs PFAbySNR}
  \includegraphics[width=65mm]{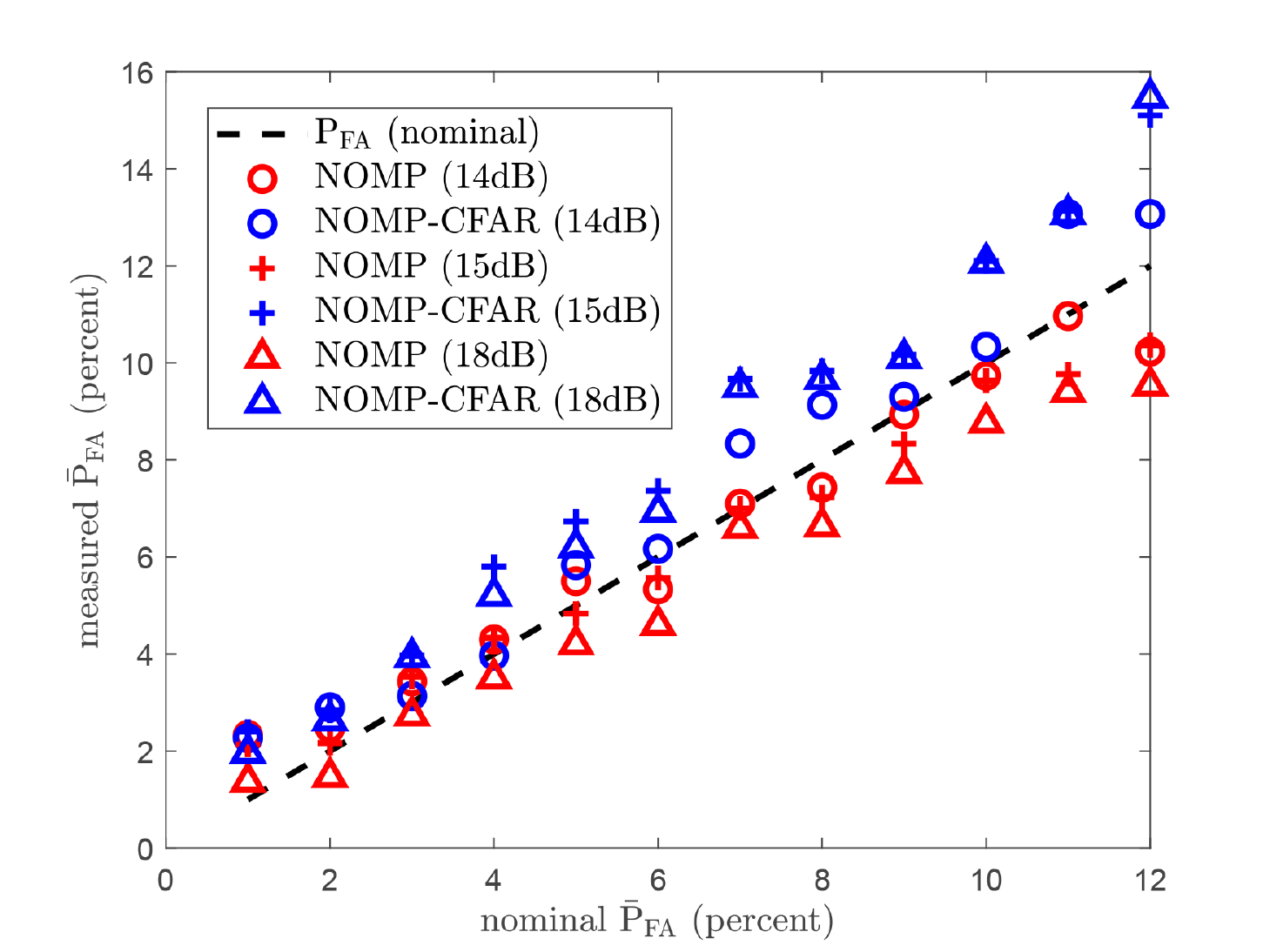}}
  \subfigure[]{
  \label{PDvsPFAbySNR}
  \includegraphics[width=65mm]{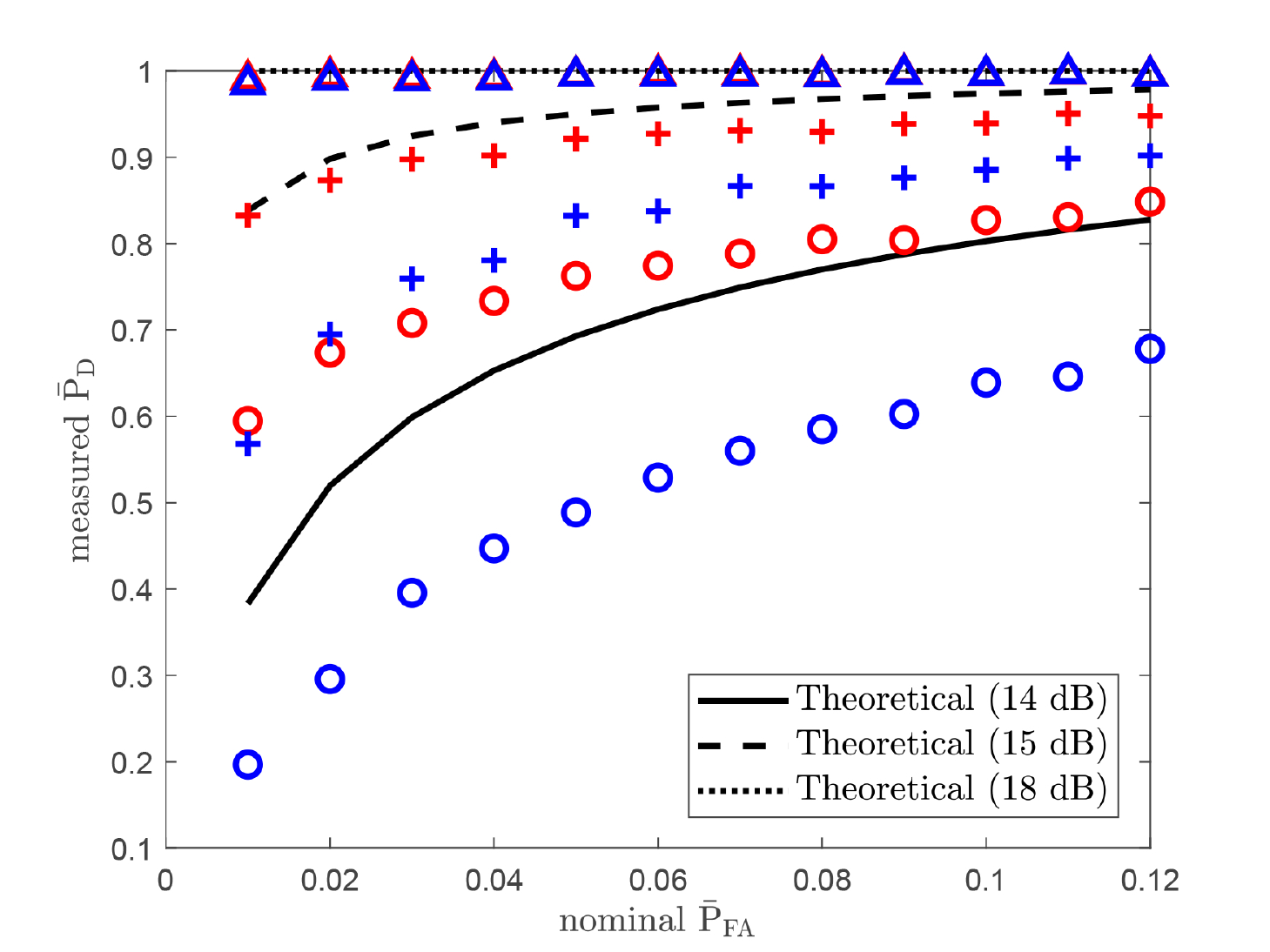}}
  \caption{The performances of measured $\bar{\rm P}_{\rm FA}$ (a) and $\bar{\rm P}_{\rm D}$ (b) versus nominal $\bar{\rm P}_{\rm FA}$ for three different SNRs ${\rm SNR}=14$ dB, ${\rm SNR}=15$ dB and ${\rm SNR}=18$ dB.}\label{performance versus PFA by SNR}
\end{figure}

\subsection{Benefits of Initialization} \label{Benefits of Initialization}
This subsection conducts a simulation to illustrate the benefits of initialization.
We design the NOMP-CFAR (forward) algorithm which just incrementally detects a new frequency by the CFAR detection algorithm and stops until no frequency is detected.
The measured false alarm probability $\bar{\rm P}_{\rm FA}$ and the detection probability $\bar{\rm P}_{\rm D}$ versus the ${\rm SNR}$ are shown in Fig. \ref{PoeandPDforvsback}. From Fig. \ref{Poeforvsback}, the measured false alarm rates $\bar{\rm P}_{\rm FA}$ of NOMP, NOMP-CFAR, NOMP-CFAR (forward) are close to the nominal $\bar{\rm P}_{\rm FA}$ for ${\rm SNR}\geq 17$ dB. From Fig. \ref{PDforvsback}, the detection probability $\bar{\rm P}_{\rm D}$ of NOMP is highest, followed by NOMP-CFAR, NOMP-CFAR (forward). For ${\rm SNR}\geq 17$ dB,  the detection probabilities $\bar{\rm P}_{\rm D}$ of NOMP and NOMP-CFAR are close to $1$, while the detection probability $\bar{\rm P}_{\rm D}$ of NOMP-CFAR (forward) is near $0.8$. Besides, for $\bar{\rm P}_{\rm D}=0.6$, the SNRs of NOMP, NOMP-CFAR, NOMP-CFAR (forward) are ${\rm SNR}=14$ dB, ${\rm SNR}=15$ dB and ${\rm SNR}=17$ dB, respectively, demonstrating that initialization yield $1$ dB performance gain.
\begin{figure}
  \centering
  \subfigure[]{
  \label{Poeforvsback} 
  \includegraphics[width=65mm]{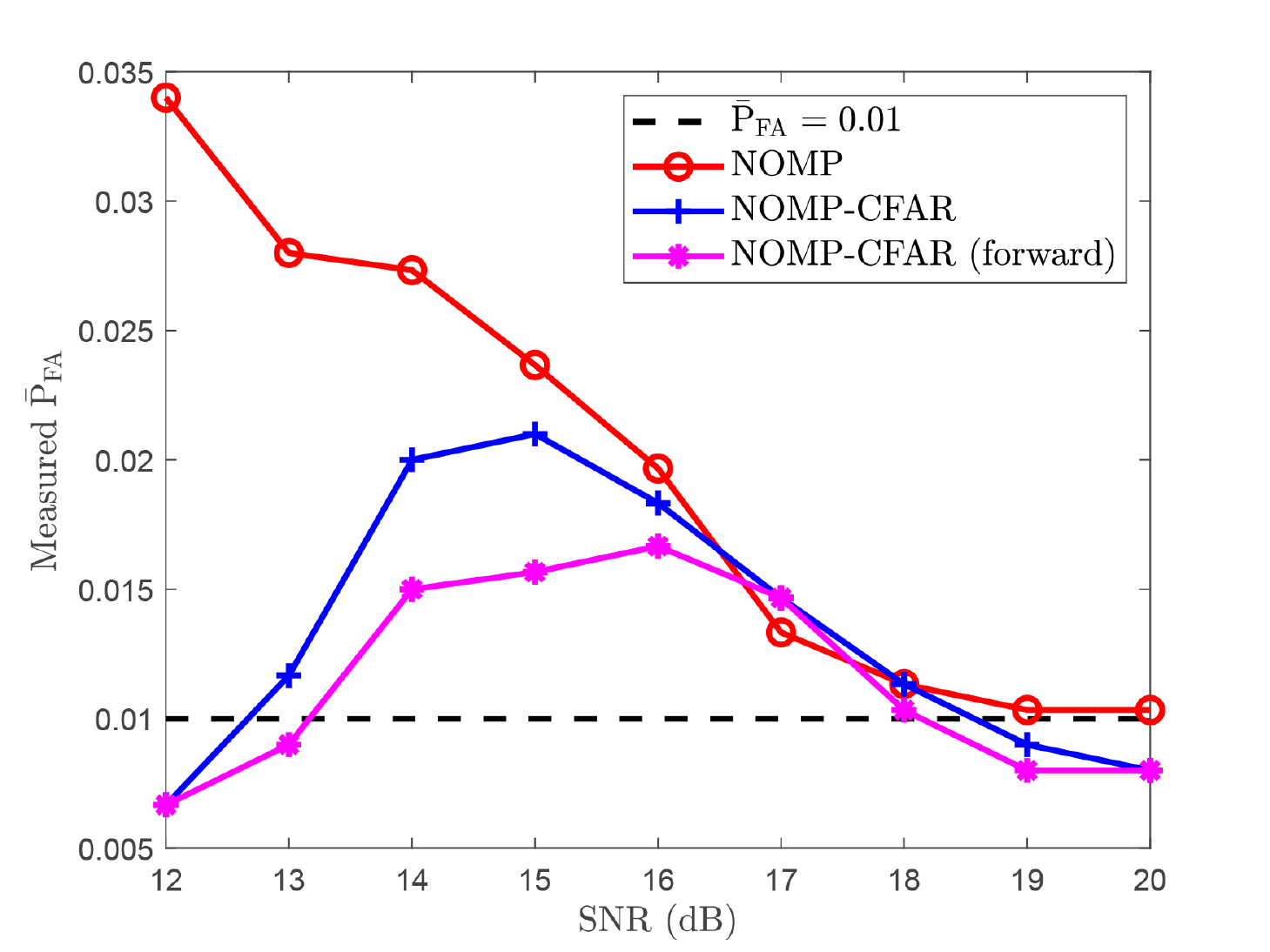}}
  \subfigure[]{
  \label{PDforvsback} 
  \includegraphics[width=65mm]{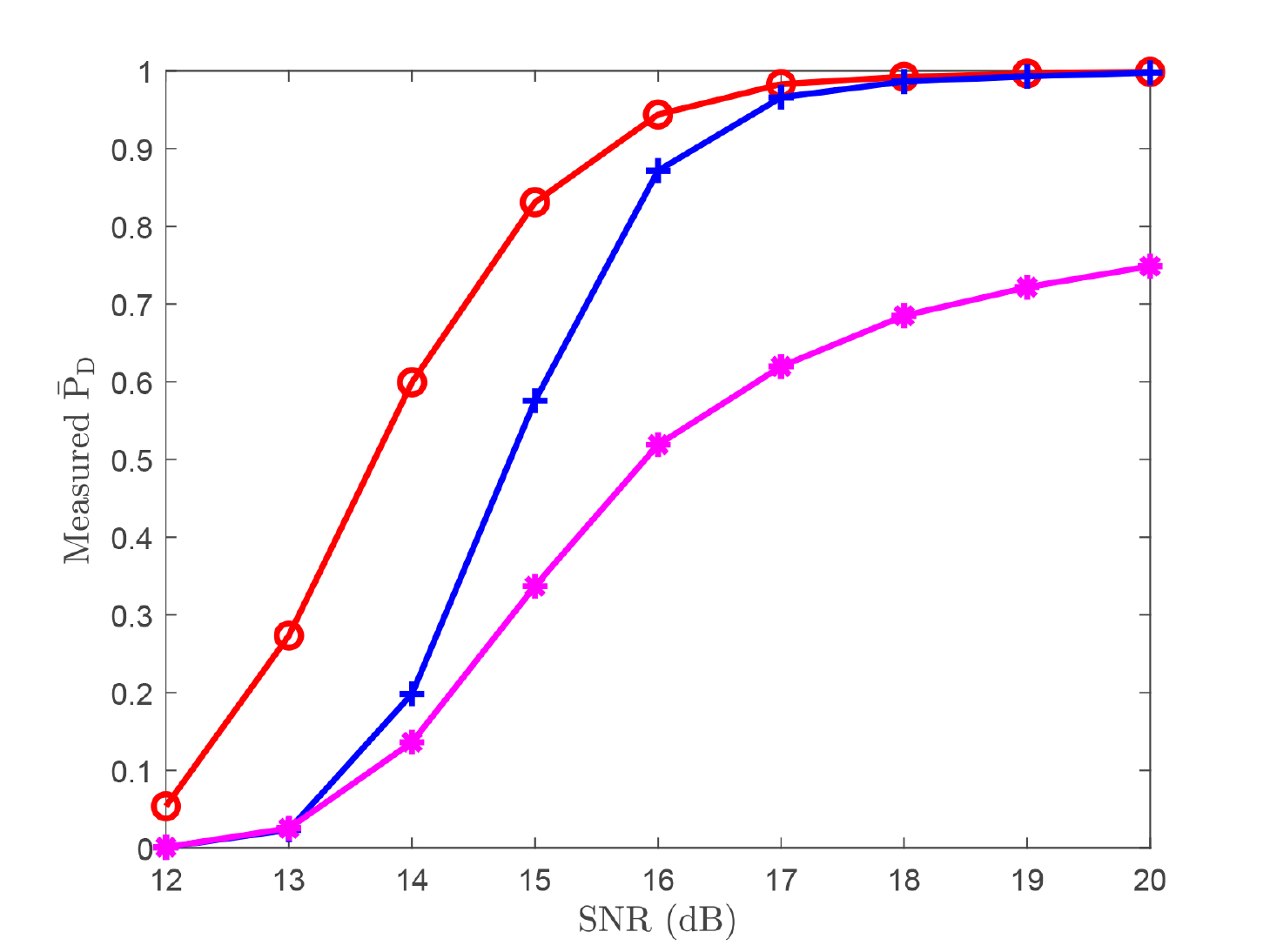}}
  \caption{The $\bar{\rm P}_{\rm FA}$ (a) and $\bar{\rm P}_{\rm D}$ (b) versus ${\rm SNR}$ of NOMP, Adap-NOMP-CFAR algorithm with and without initialization. The dashed line denotes the nominal $\bar{\rm P}_{\rm FA}$.}\label{PoeandPDforvsback} 
\end{figure}

\subsection{Performance versus the SNR in the SMV Setting} \label{Performance versus the SNR SMV}
The performances of the proposed algorithm is investigated in the SMV scenario. The SNRs of all the targets are identical and are varied from $12$ dB to $20$ dB.
The false alarm rate $\bar{\rm P}_{\rm FA}$, the detection probability $\bar{\rm P}_{\rm D}$ of all the targets, the correct model order estimation rate ${\rm P}(\hat{K} = K)$, the MSE of the frequency estimation error $\|\hat{\omega}-\omega\|_2^2$ averaged over the trials in which $\hat{K} = K$, the normalized reconstruction error ${\rm NMSE}(\hat{\mathbf z})=\|\hat{\mathbf z}-{\mathbf z}\|^2/\|{\mathbf z}\|_2^2$ are used as the performance metrics. The results are shown in Fig. \ref{vsSNR_all_fig}.

\begin{figure*}
  \centering
  \subfigure[]{
  \label{POE_fig} 
  \includegraphics[width=50mm]{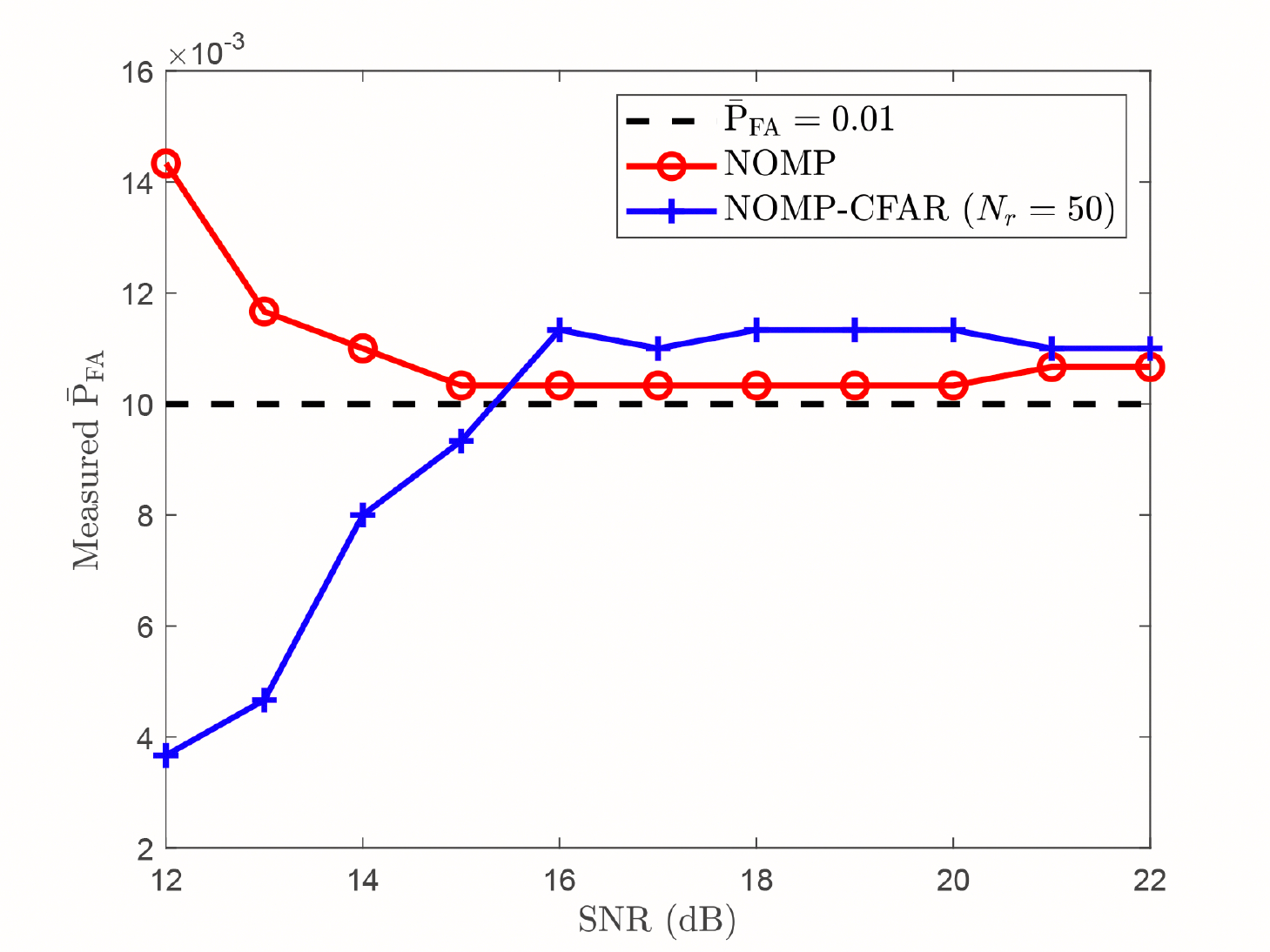}}
  \subfigure[]{
  \label{PD_fig} 
  \includegraphics[width=50mm]{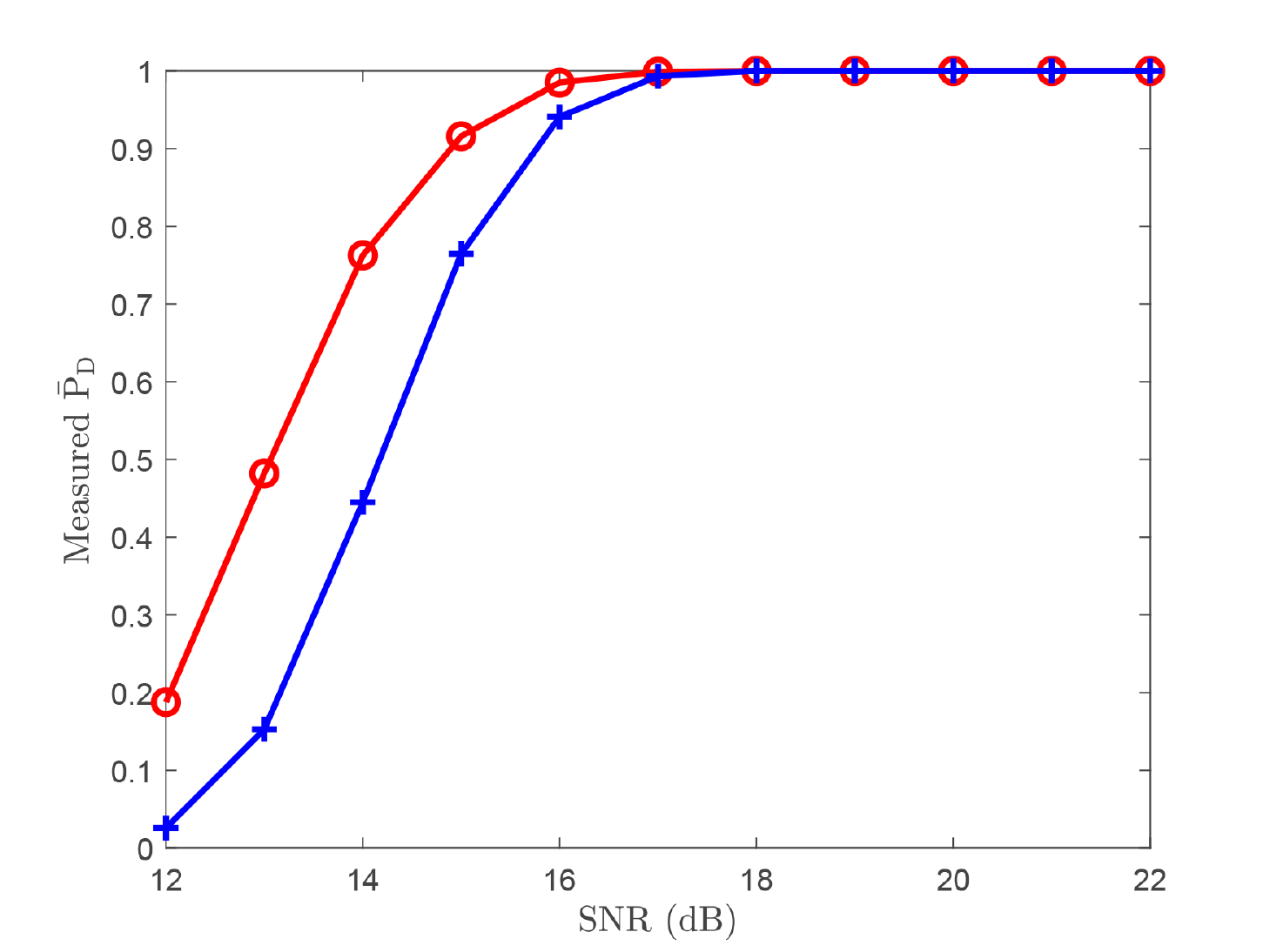}}
  \subfigure[]{
  \label{Pequal_fig}
  \includegraphics[width=50mm]{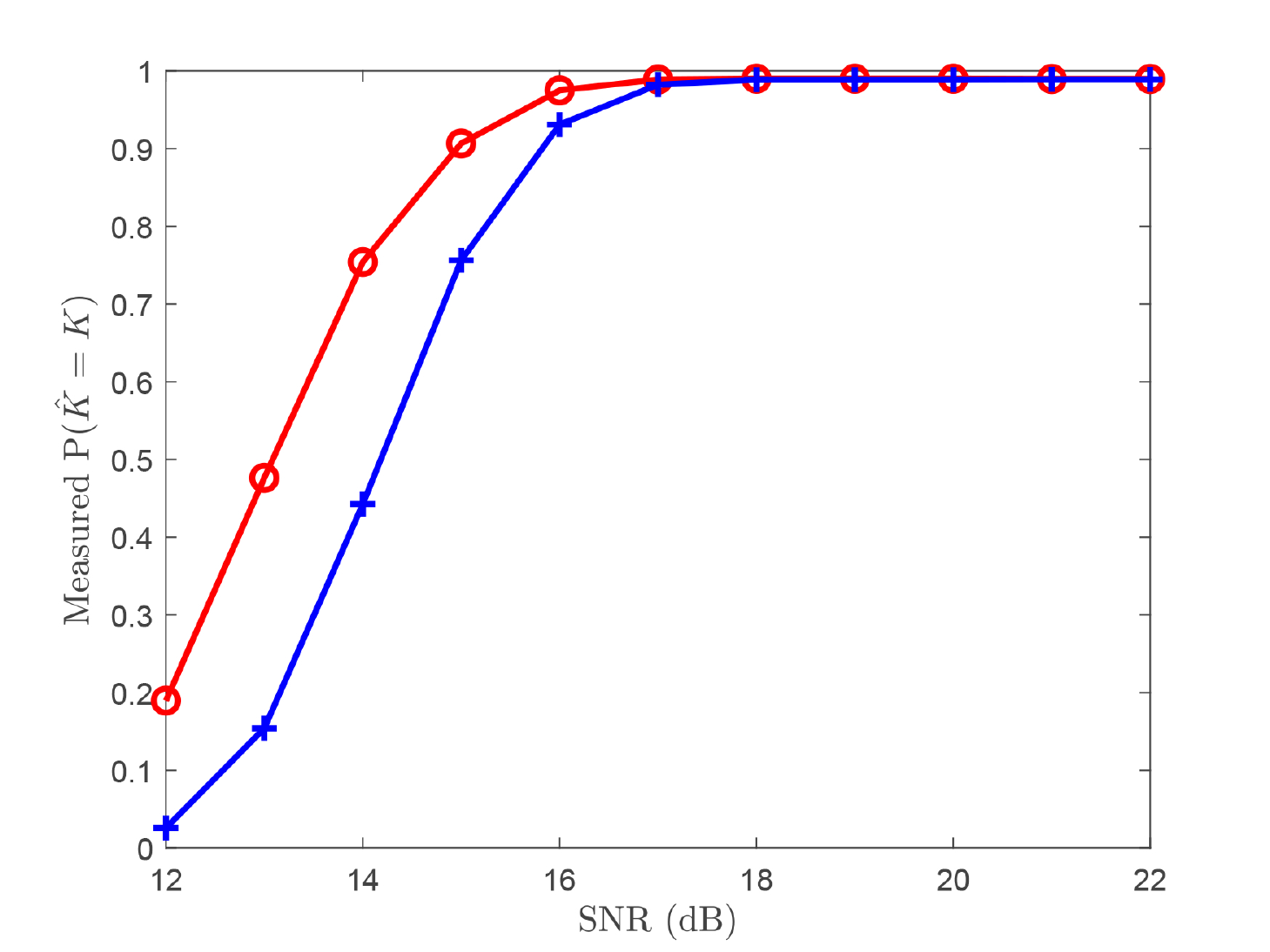}}
  \subfigure[]{
  \label{MSEvsSNR_fig}
  \includegraphics[width=50mm]{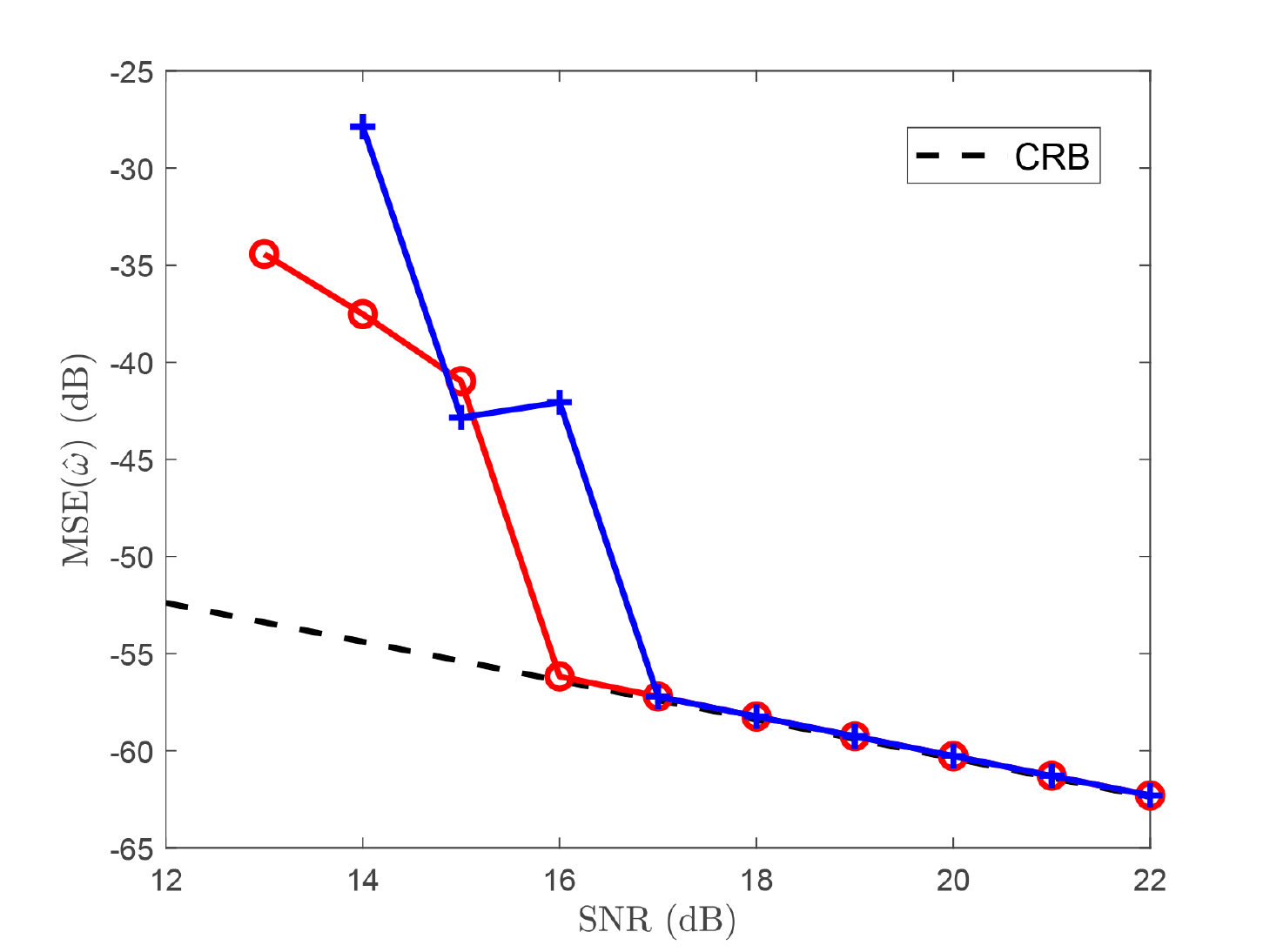}}
  \subfigure[]{
  \label{RMSEvsSNR_fig}
  \includegraphics[width=50mm]{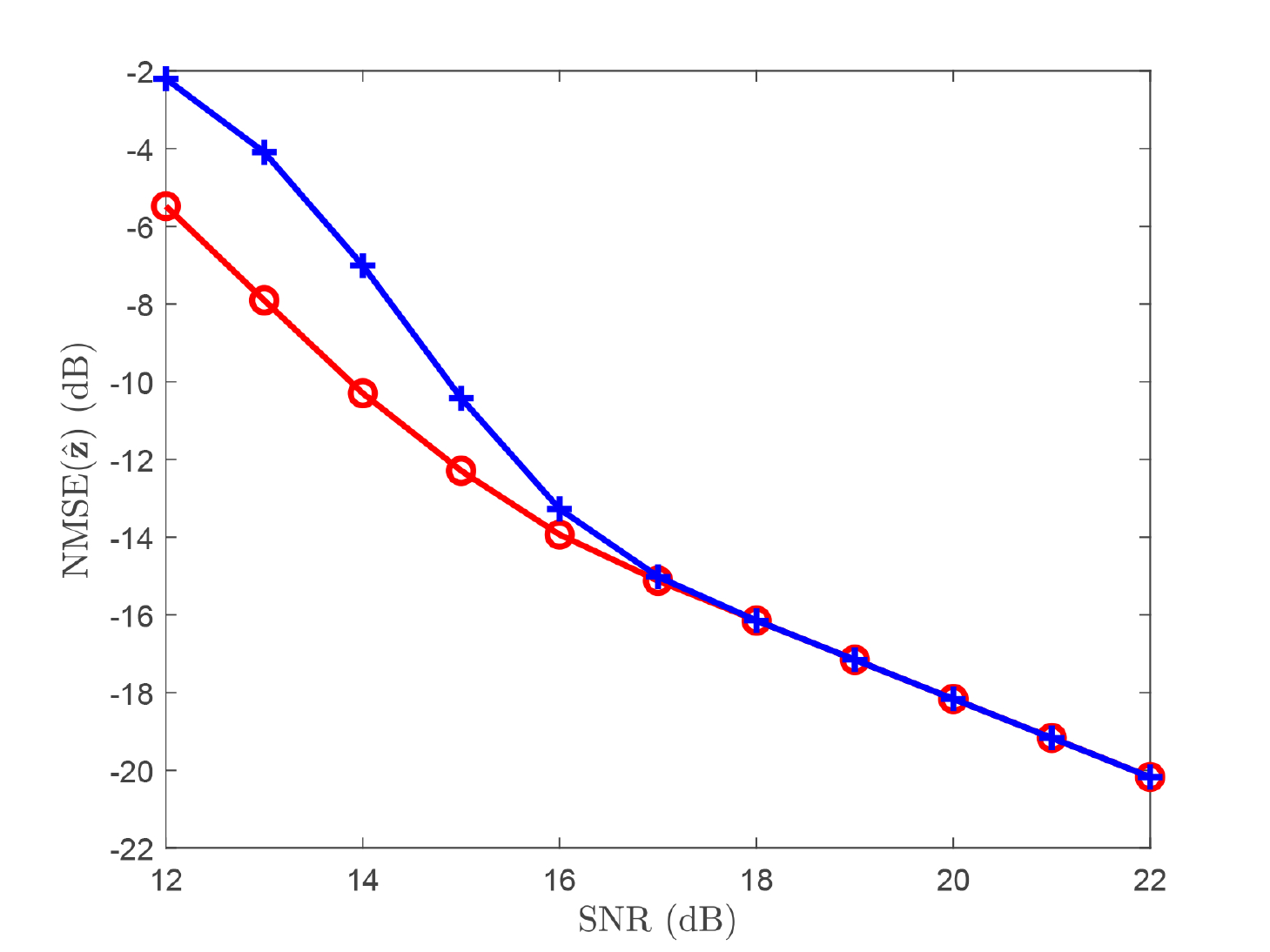}}
  \caption{The performances of NOMP, NOMP-CFAR versus SNR. (a) The false alarm rate $\bar{\rm P}_{\rm FA}$, (b) the detection probability $\bar{\rm P}_{\rm D}$ of all the targets, (c) the correct model order estimation rate ${\rm P}(\hat{K} = K)$, (d) the MSE of the frequency estimation error $\|\hat{\omega}-\omega\|_2^2 / K$, (e) the normalized reconstruction error ${\rm NMSE}(\hat{\mathbf z})=\|\hat{\mathbf z}-{\mathbf z}\|^2/\|{\mathbf z}\|_2^2$.}
  \label{vsSNR_all_fig} 
\end{figure*}

Fig. \ref{POE_fig} shows the false alarm rate $\bar{\rm P}_{\rm FA}$ versus ${\rm SNR}$. At low SNR and ${\rm SNR}\leq 16$ dB, the $\bar{\rm P}_{\rm FA}$ of the NOMP tends to decrease with ${\rm SNR}$ and NOMP-CFAR increase. For ${\rm SNR} \geq 16$ dB, the $\bar{\rm P}_{\rm FA}$ of the two algorithms are close to the nominal $\bar{\rm P}_{\rm FA}$, demonstrating the CFAR property of the two algorithms. For $\bar{\rm P}_{\rm D} = 0.75$, the corresponding SNRs of NOMP and NOMP-CFAR are $14$ dB and $15$ dB, respectively, which demonstrates that NOMP-CFAR has $1$ dB performance loss due to the lacking of knowledge of noise variance.

Fig. \ref{Pequal_fig} shows the model order estimation probability ${\rm P}({\hat{K} = K})$ of both algorithms increase as SNR increases.
The model order estimation probability ${\rm P}({\hat{K} = K})$ of NOMP is higher than that of NOMP-CFAR.
Fig. \ref{MSEvsSNR_fig} and Fig. \ref{RMSEvsSNR_fig} show the estimation errors. For the frequency estimation error, the Cram\'{e}r Rao bound (CRB) is also evaluated for the one target scene. Fig. \ref{MSEvsSNR_fig} shows that the MSEs of NOMP and NOMP-CFAR closely approach to the CRB computed for the true model order at ${\rm SNR} \geq 16$ dB and ${\rm SNR}\geq 17$ dB, respectively. The reason that NOMP-CFAR deviates away from CRB at ${\rm SNR}=16$ dB is that for a frequency lying off the DFT frequency grid, the frequency is not detected. Meanwhile, a wrong frequency is detected which yields a false alarm, causing a large frequency estimation error. For the reconstruction error of the line spectral, NOMP performs better than NOMP-CFAR for ${\rm SNR}\leq16$ dB and performs similarly as SNR increases.

\subsection{Performance versus the Number of Snapshots in the MMV setting}
The false alarm rate and the detection probability of the NOMP-CFAR in the MMV setting is investigated. The SNRs of all the targets are set as $10$ dB. The false alarm rate and the detection probability are shown in Fig. \ref{simulationinMMV}. From Fig. \ref{MeaPoevs S}, the false alarm probability of NOMP is a little lower than that of NOMP-CFAR, and the false alarm rates of both algorithms are close to the nominal false alarm rate for $S\geq 3$. Fig. \ref{PDvs S} shows that the detection probabilities of both algorithms are close to each other and increase as the number of snapshots increases.

\begin{figure}
  \centering
  \subfigure[]{
  \label{MeaPoevs S}
  \includegraphics[width=65mm]{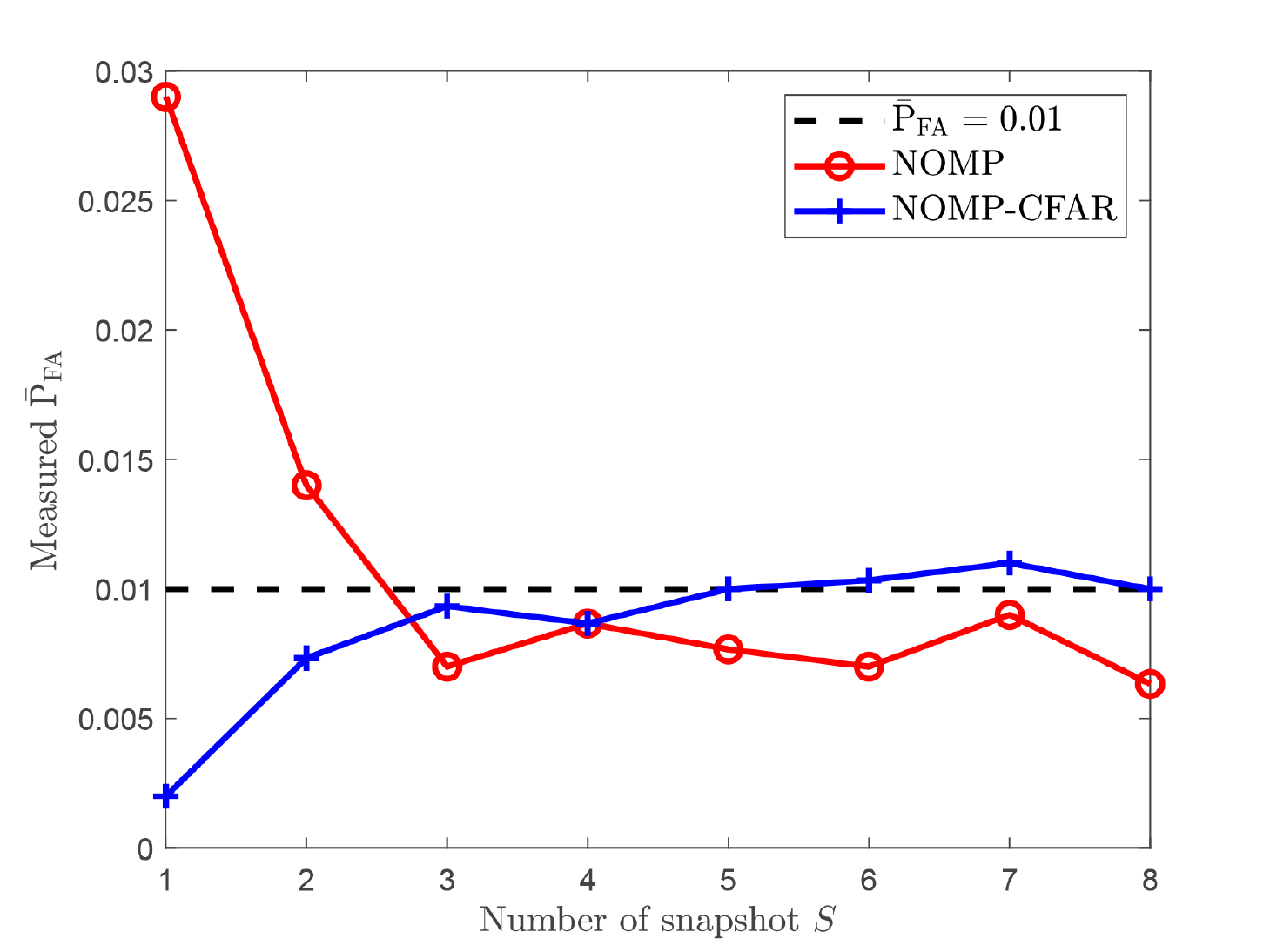}}
  \subfigure[]{
  \label{PDvs S}
  \includegraphics[width=65mm]{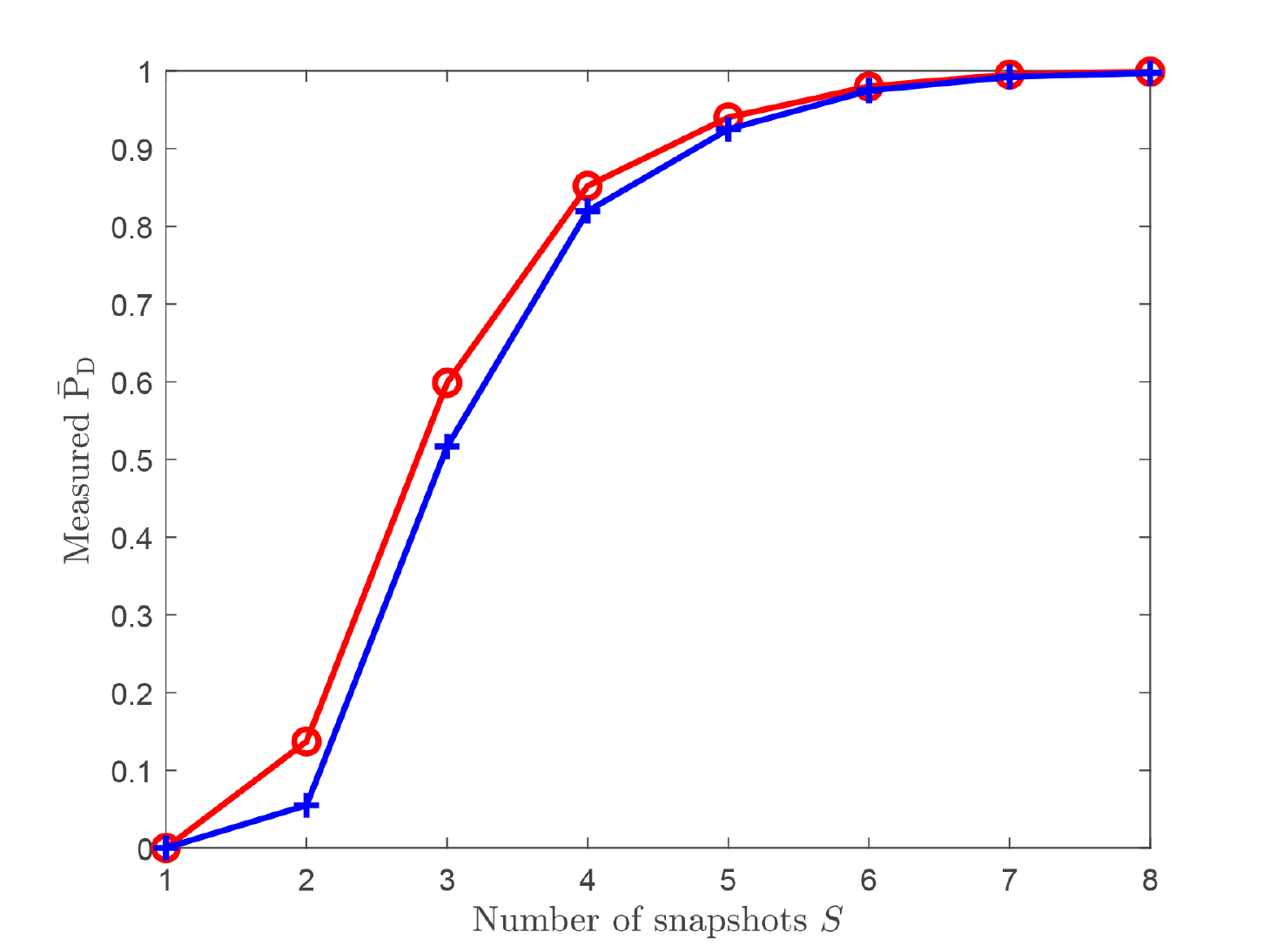}}
  \caption{The measured $\bar{\rm P}_{\rm FA}$ (a) and $\bar{\rm P}_{\rm D}$ (b) of the two algorithm in different numbers of snapshots. }\label{simulationinMMV}
\end{figure}

\subsection{Performance versus the Compression Rate in the Compressive Setting}
The performance of NOMP-CFAR versus the compression rate $M/N$ is investigated. The elements of the compression matrix is drawn i.i.d. from the complex Bernoulli distribution. The number of frequencies is $K=8$, the SNRs of all the targets are $22$ dB. Results are shown in Fig. \ref{simulationvsMinCom}.

\begin{figure}
  \centering
  \subfigure[]{
  \label{MeaPoevs M}
  \includegraphics[width=65mm]{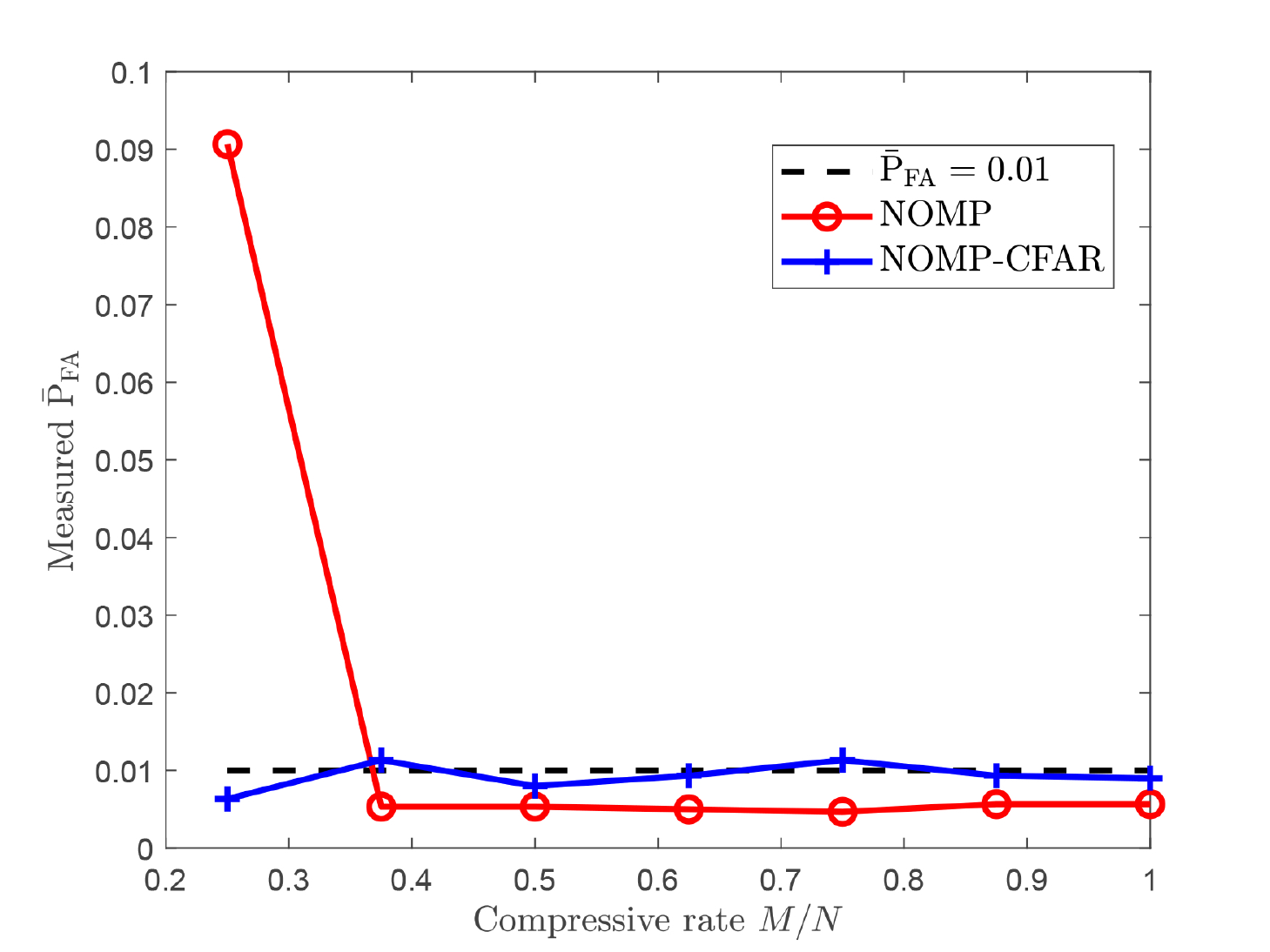}}
  \subfigure[]{
  \label{PDvs M}
  \includegraphics[width=65mm]{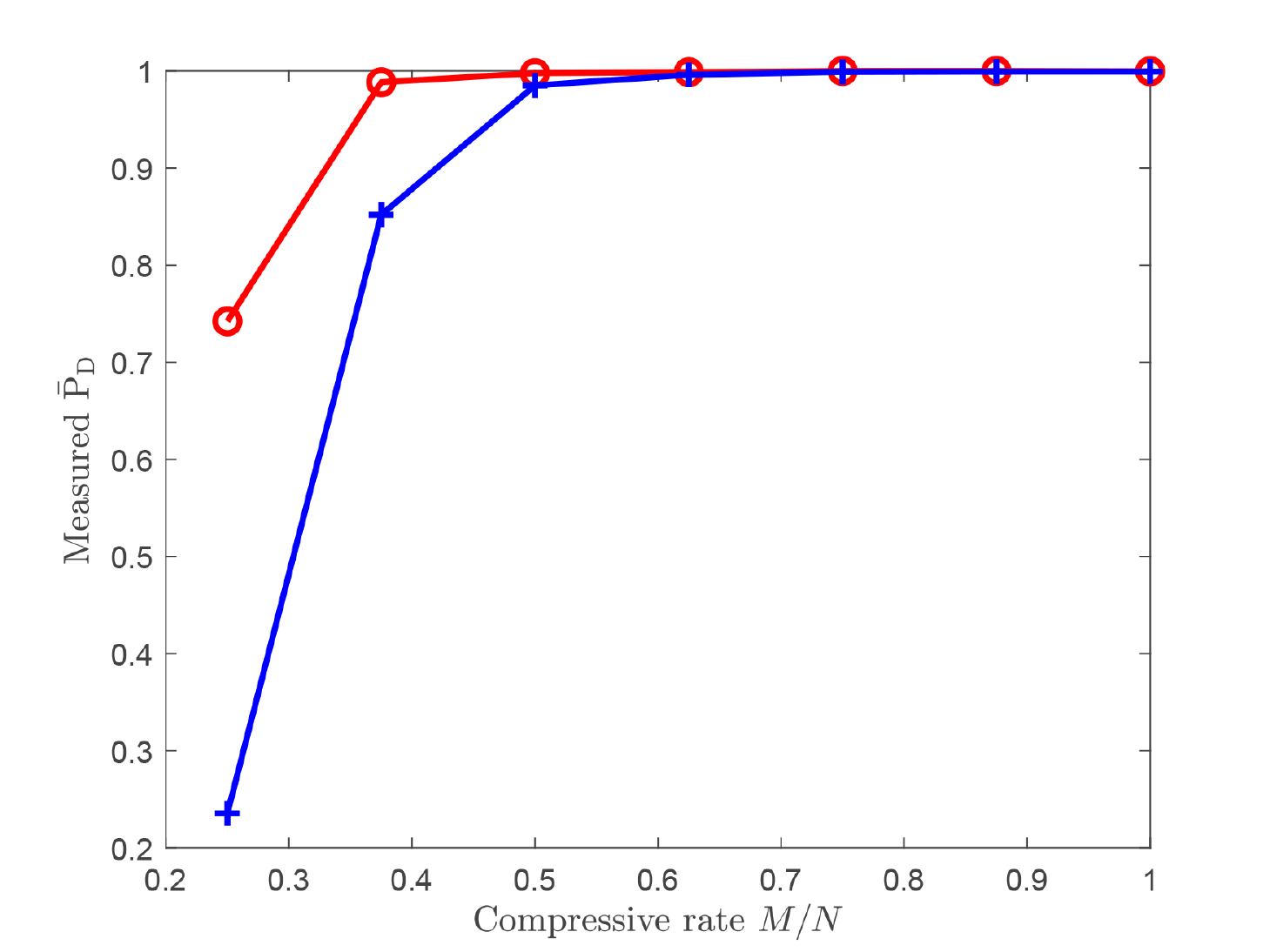}}
  \caption{The measured $\bar{\rm P}_{\rm FA}$ (a) and $\bar{\rm P}_{\rm D}$ (b) of the two algorithm in different compressive rate. }\label{simulationvsMinCom}
\end{figure}
From Fig. \ref{MeaPoevs M}, the false alarm rates of NOMP is a little lower than that of NOMP-CFAR, and the false alarm rates of both algorithms are close to the nominal false alarm rate for $M/N\geq 0.36$. For the detection probability $\bar{\rm P}_{\rm D}$, Fig. \ref{simulationvsMinCom} shows that NOMP performs better than NOMP-CFAR for $M/N\leq 0.375$, and the $\bar{\rm P}_{\rm D}$ of both algorithms approach to $1$ as $M/N\geq 0.5$. For $\bar{\rm P}_{\rm D} = 0.74$, the compression rates required for NOMP and NOMP-CFAR are $0.24$ and $0.32$, respectively, meaning that to achieve $\bar{\rm P}_{\rm D}= 0.74$, NOMP-CFAR needs $4/3$ times the number of measurements of NOMP.

\subsection{The False Alarm Probability versus Strength of Noise Fluctuation}
The nominal SNR of the $k$th target is defined as ${\rm SNR}_k\triangleq 10\log\left(N|x_k|^2/\sigma_0^2\right)$, where $\sigma_0^2$ is the nominal noise variance and is set as $\sigma_0^2=1$, i.e., $0$ dB. The SNR of each frequency is $28$ dB.
For each MC trial, the noise variance is drawn uniformly from $[-u, u]$ in dB, where $u$ characterizes the strength of noise fluctuation. The results are shown in Fig. \ref{PFAvsSigmavar}. Note that the detection probability of all the targets are very close to $1$ ($\geq 0.9993$) and are omitted here. It can be seen that NOMP violates the CFAR property in varied noise variance scenario, while NOMP-CFAR still preserves the CFAR property.
\begin{figure}
  \centering
  \includegraphics[width=65mm]{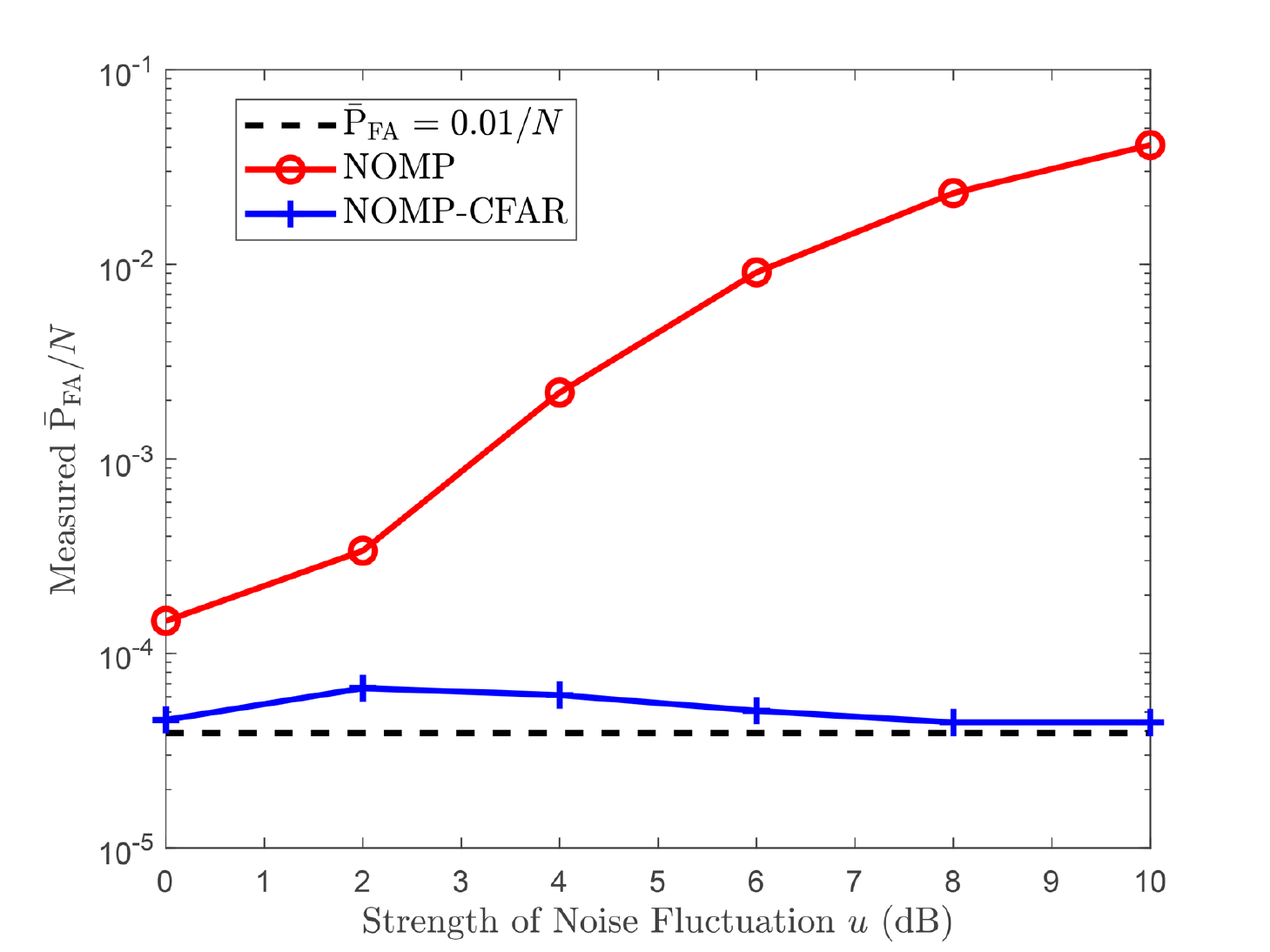}
  \caption{The measured $\bar{\rm P}_{\rm FA}$ versus the strength of the noise fluctuation.}\label{PFAvsSigmavar}
\end{figure}

\section{Real Experiment}\label{Real Experiment}
In this section, the performance of proposed NOMP-CFAR is demonstrated using an IWR1642 radar \cite{TImmwave}. The IWR1642 radar is an FMCW MIMO radar consisting of two transmitters and four receivers. The radar parameters and wave form specifications are listed in Table \ref{tab:Radar_parameters}. For the single input multiple output (SIMO) mode, the base band signal model can be formulated as
\begin{align}\label{MultipleTargetModel}
y(n,m,l) = \sum_{k = 1}^{K} x_k {\text{e}}^{\text{j} [(n - 1) {\omega}_{x, k} + (m - 1) {\omega}_{y, k} + (l - 1) {\omega}_{z, k} ]} + \mathcal{\epsilon}_{n,m,l},
\end{align}
$n=0,1,\cdots,N-1$, $m=0,1,\cdots,M-1$, $l=0,1,\cdots,L-1$, where $n$, $m$, $l$ denote the index of the fast time domain, slow time domain and spatial domain, respectively, $\mathcal{\epsilon}_{n,m,l}$ denotes the noise, $\omega_{x,k}$, $\omega_{y,k}$, $\omega_{z,k}$ are
\begin{subequations}\label{FreqToState}
\begin{align}
{\omega}_{x, k} &= \frac{4 \pi}{c} (f_c v_k + \mu r_k) T_{\text s} \approx \frac{4 \pi}{c} \mu r_k T_{\text s}, \label{fast time omega} \\
{\omega}_{y, k} &= \frac{4 \pi}{c} f_c v_k {T}_{\text{r}}, \label{slow time omega} \\
{\omega}_{z, k} &= \frac{2 \pi}{c} f_c d \sin \theta_k, \label{spatial omega}
\end{align}
\end{subequations}
where $c$ denotes the speed of electromagnetic wave. Note that (\ref{MultipleTargetModel}) reduces to the range estimation, range azimuth and range-Doppler imaging problem, with $M=L=1$, $M=1$ and $L=1$, respectively. Thus, one could apply the NOMP-CFAR algorithm to solve the multidimensional line spectrum estimation problem (\ref{MultipleTargetModel}), and obtain the estimates of ranges, velocities and azimuths via (\ref{FreqToState}).

The proposed NOMP-CFAR is compared with the traditional CFAR and NOMP. The false alarm rates of NOMP-CFAR and NOMP are set as $\bar{\rm P}_{\rm FA} = 10^{-2}$, and the false alarm rate of CFAR is $\bar{\rm P}_{\rm FA, tr} = 10^{-2} / N_{\rm tot}$ to ensure that false alarms produced by the three algorithms are comparable when $\bar{\rm P}_{\rm FA}$ is small, where $N_{\rm tot}$ denotes the total number of cells under test.
The upper bound of target number for NOMP-CFAR algorithm is $K_{\rm max} = 32$ unless stated otherwise.

\newcommand{\tabincell}[2]{\begin{tabular}{@{}#1@{}}#2\end{tabular}}
\begin{table}[!t]
  \centering
  \scriptsize
  \caption{Parameters Setting of the Experiment}
  \label{tab:Radar_parameters}
  \begin{tabular}{ll}
    \\[-2mm]
    \hline
    \hline\\[-2mm]
    { \small Parameters}&\qquad {\small Value}\\
    \hline
    \vspace{1mm}\\[-3mm]
    Number of Receivers $L$      &   4 \\
    \vspace{1mm}
    Carrier frequency $f_c$   &   77 GHz\\
    \vspace{1mm}
    Frequency modulation slope $\mu$          &  $29.982$ MHz/s\\
    \vspace{1mm}
    sweep time $T_p$         &  $60 {\rm \mu s}$\\
    \vspace{1mm}
    Pulse repeat interval $T_{\text{r}}$         &  $160 {\rm \mu s}$\\
    \vspace{1mm}
    Bandwidth $B$          &  1798.92 MHz\\
    \vspace{1mm}
    Sampling frequency $f_s = 1 / T_s$         &  10 MHz\\
    \vspace{1mm}
    Number of pulses in one CPI $M$          & 128\\
    \vspace{1mm}
    Number of fast time samples $N$          &  256 \\
    \hline
    \hline
  \end{tabular}
\end{table}

In the following, three experiments are conducted to evaluated the performance of NOMP-CFAR.

\subsection{Experiment $1$}
Fig. \ref{people2Scene} shows the setup of field experiment $1$. The radial distances and the azimuths of the two static people named people $1$ and people $2$ are about ($4.92$ m, $0^{\circ}$) and ($3.09$ m, $-19.8^{\circ}$). First, we conduct range estimation only via the CFAR, NOMP and NOMP-CFAR approaches. Then, we implement NOMP and NOMP-CFAR for range and azimuth estimation.

The number of the fast domain samples is $N = 256$. The number of guard cells and the number of training cells are $8$ and $60$, respectively. The CA-CFAR is adopted. The required threshold multipliers of the CFAR and NOMP-CFAR are $\alpha_{\rm tr} = 11.06$ and $\alpha = 11.03$, corresponding to $10.44$ dB and $10.43$ dB, respectively. Fig. \ref{people2 range FFT result} shows that the noise variance is about $24$ dB, and we set the noise variance $\sigma^2 =  251$ for NOMP algorithm, and the corresponding threshold can be calculated as $2.55 \times 10^3$, i.e. $34.06$ dB.
The range estimation results are shown in Fig. \ref{people2RangeResult}. Fig. \ref{people2 range FFT result} shows that CFAR detects people $1$ and people $2$, which estimates their radial distance as $4.90$ m and $3.14$ m. From Fig. \ref{people2 range NOMP result}, NOMP detects both people $1$ and people $2$.
It is concluded that people $1$ and people $2$ are extended targets, and each two detected results correspond to people $1$ and people $2$.
The highest amplitudes corresponding to people $1$ and people $2$ are $48.86$ dB and $51.17$ dB, respectively.
Equivalently, the amplitudes of people $1$ and people $2$ are $14.80$ dB and $17.11$ dB above the NOMP thresholds.
Fig. \ref{people2 range NOMPCFAR result} shows that the detection results of NOMP-CFAR are similar to that of NOMP. In details, the highest amplitudes corresponding to people $1$ and people $2$ are $48.83$ dB and $51.16$ dB, respectively. In addition, the thresholds are $37.30$ dB and $37.39$ dB, respectively. Equivalently, the amplitudes of people $1$ and people $2$ are $11.53$ dB and $13.77$ dB above the corresponding thresholds, respectively.

\begin{figure}
  \centering
  \includegraphics[width = 2.7in]{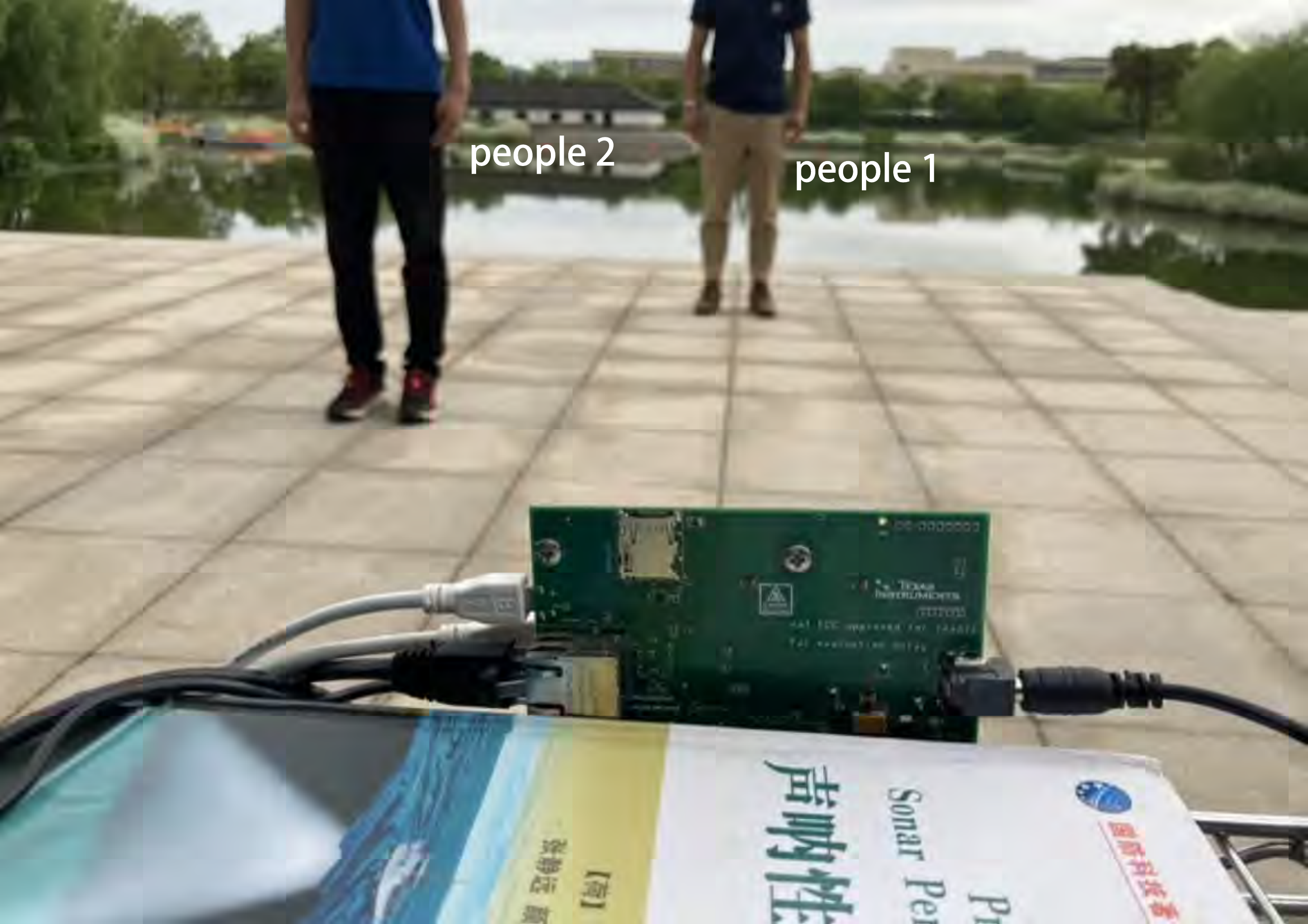}
  \caption{Field experiment $1$ setup.}\label{people2Scene}
\end{figure}

\begin{figure}
  \centering
  \subfigure[]{
  \label{people2 range FFT result}
  \includegraphics[width=50mm]{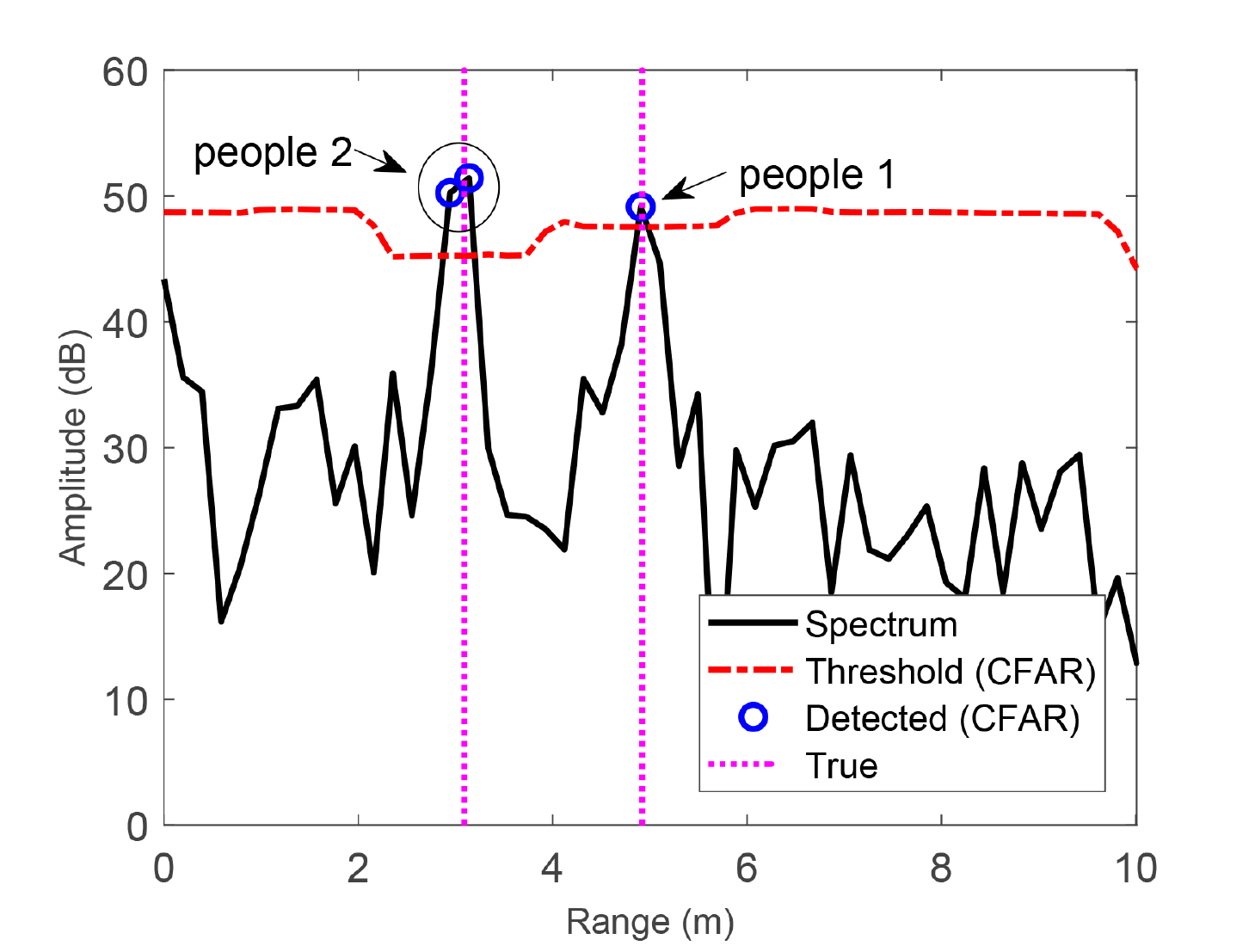}}
  \subfigure[]{
  \label{people2 range NOMP result}
  \includegraphics[width=50mm]{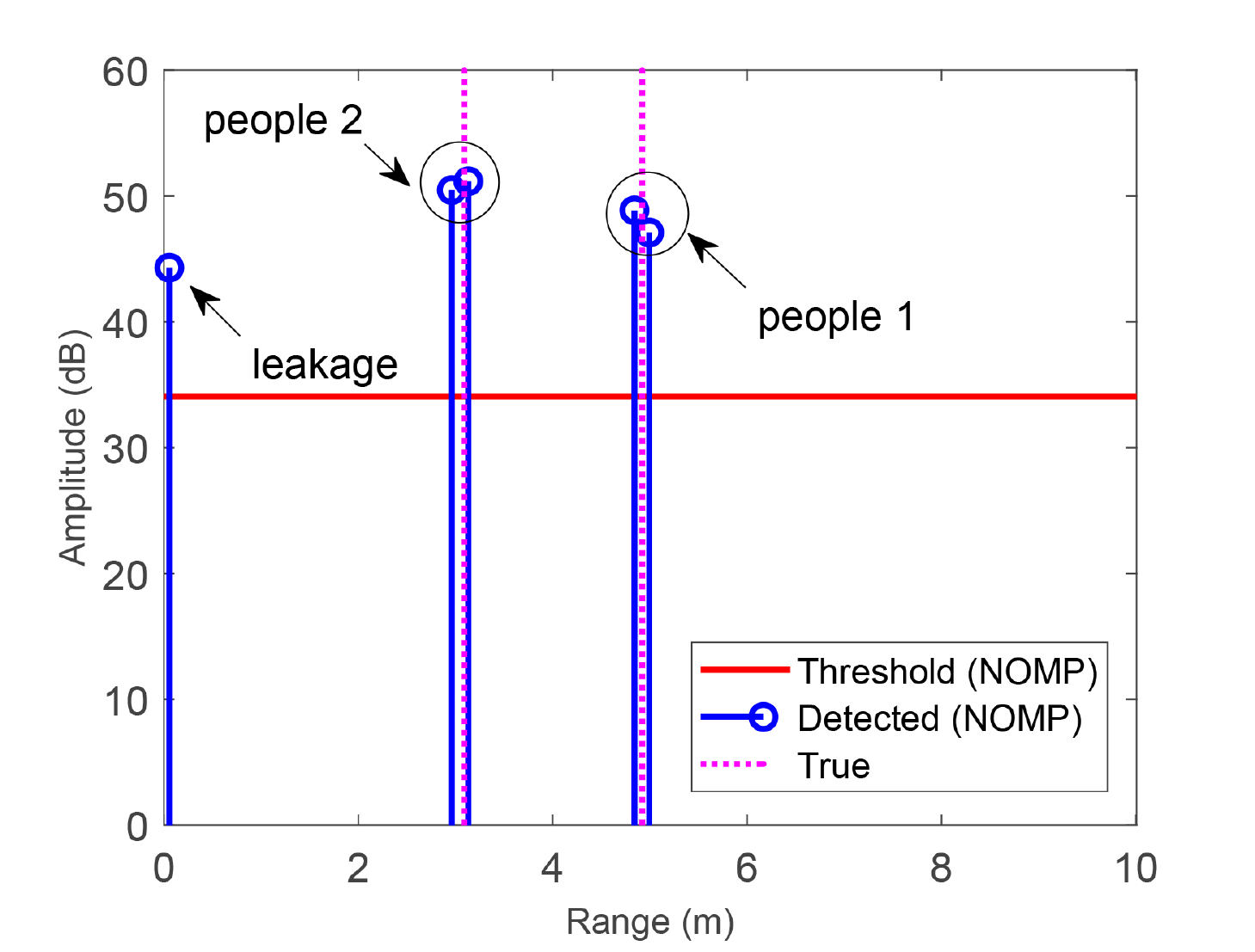}}
  \subfigure[]{
  \label{people2 range NOMPCFAR result}
  \includegraphics[width=50mm]{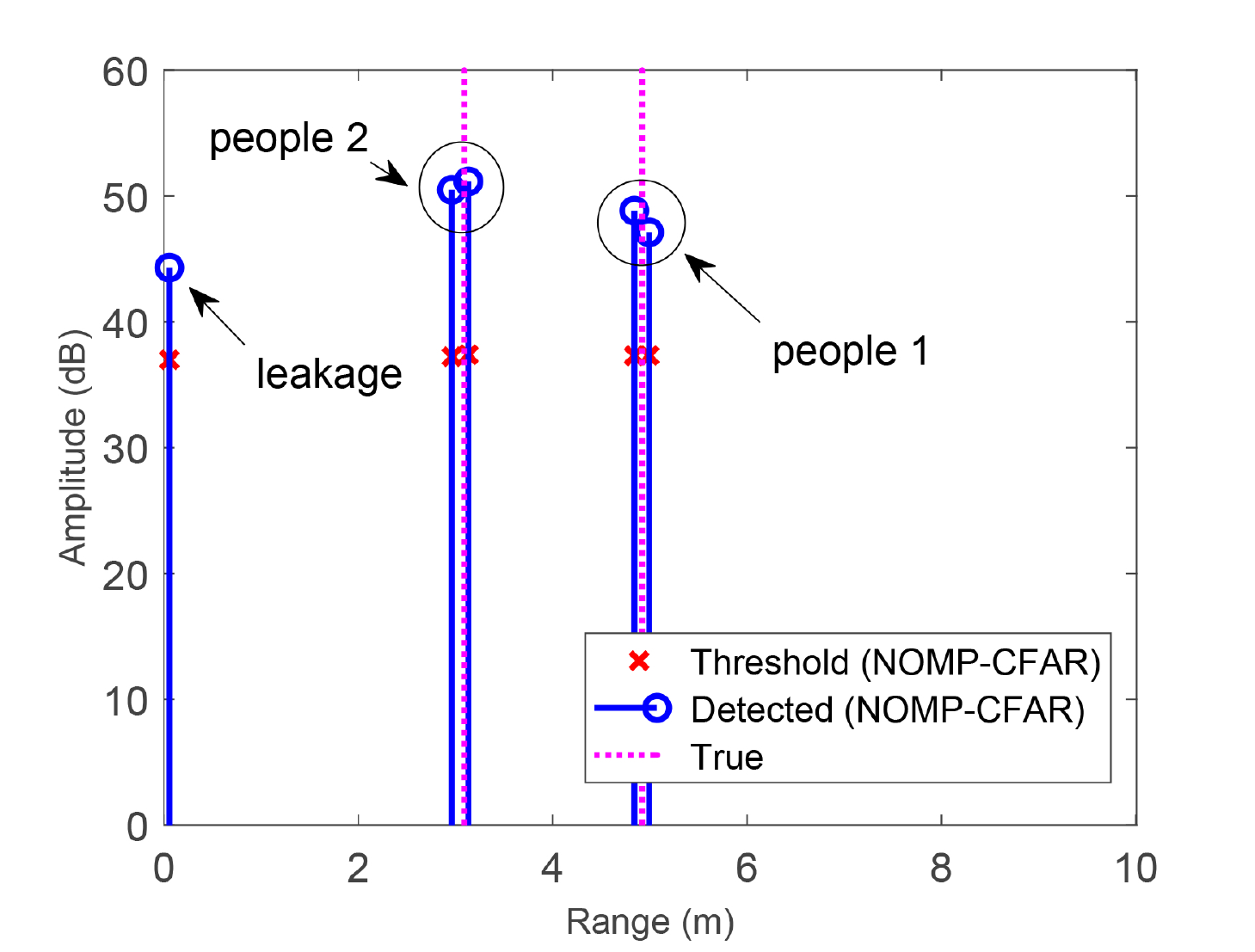}}
  \caption{Range estimation results in experiment $1$. (a): CFAR, (b): NOMP, (c): NOMP-CFAR.}\label{people2RangeResult}
\end{figure}

We then implement the two dimensional NOMP and NOMP-CFAR to perform the range and azimuth estimation, the results are shown in Fig. \ref{people2RangeAzimuthResult}. Fig. \ref{people2RangeAzimuthNOMP} shows that NOMP detects $19$ points and the highest $2$ points are estimated as $(4.84 {\rm m}, -0.12^{\circ})$ and $(3.09 {\rm m}, -22.28^{\circ})$, which are close to the truth of people $1$ and people $2$, respectively.
And their integrated amplitudes are $54.16$ dB, $55.38$ dB, respectively. Fig. \ref{people2RangeAzimuthNOMPCFAR} shows that the number of detected points of NOMP-CFAR is $2$ and the results are $(4.85 {\rm m}, 0.50^{\circ}, 54.21{\rm dB}, 42.30{\rm dB})$, $(3.20 {\rm m}, -21.61^{\circ}, 53.29{\rm dB}, 51.42 {\rm dB})$, where the first, second, third and fourth components are the radial distance, the azimuth, reconstructed amplitude and threshold. The first detected points correspond to people $1$ and the second points correspond to people $2$. The estimations of radial distances and azimuths of people $1$ and people $2$ are close to the truth. This demonstrates that NOMP-CFAR suppresses the false alarm significantly, compared to NOMP.

\begin{figure*}
  \centering
  \subfigure[]{
  \label{people2RangeAzimuthNOMP} 
  \includegraphics[width = 65mm]{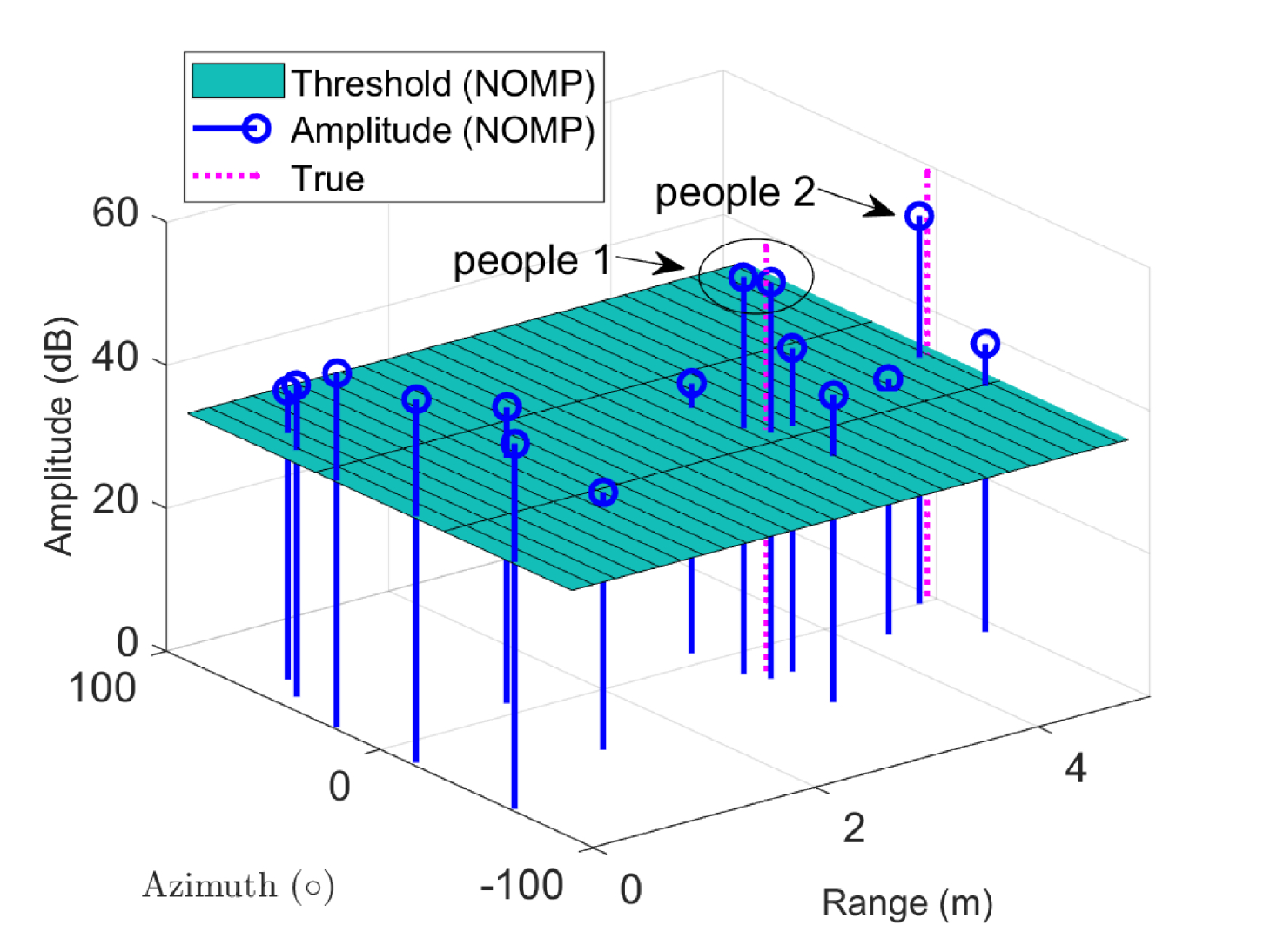}}
  \subfigure[]{
  \label{people2RangeAzimuthNOMPCFAR} 
  \includegraphics[width = 65mm]{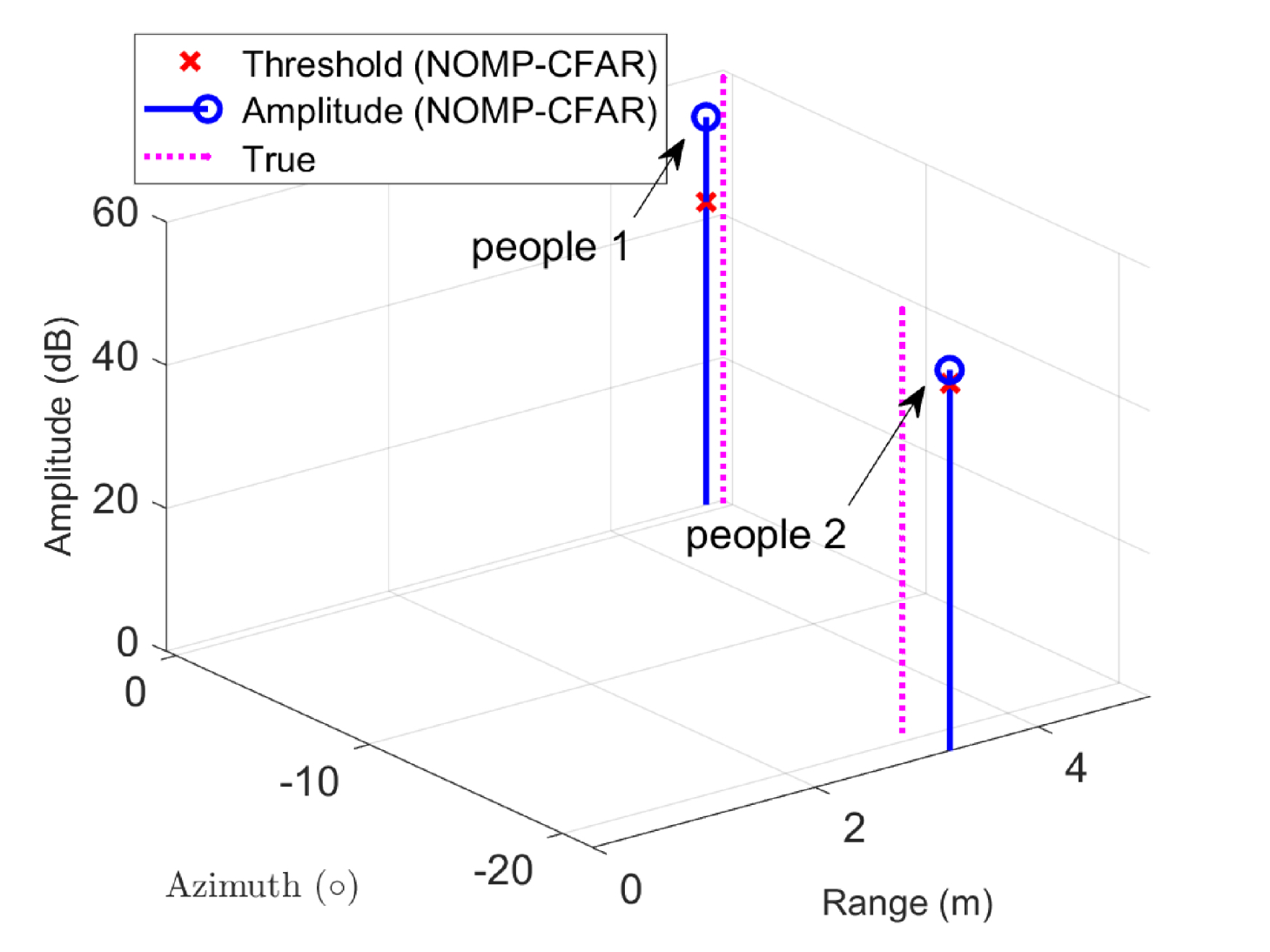}}
 \caption{Range and azimuth estimation in experiment $1$. (a) NOMP, (b) NOMP-CFAR.}
 \label{people2RangeAzimuthResult}
\end{figure*}

\subsection{Experiment $2$}
The setup of field experiment $2$ is shown in Fig. \ref{movingPeopleScene}. The radial distances of the two static people named people $3$ and people $2$ are about $(4.87~{\rm m}, 0^{\circ})$ and $(2.63 ~{\rm m}, 24^{\circ})$. A cyclist moves toward the radar with the radial distance starting from $7$ m to $2$ m and the velocity about $2$ m/s. The numbers of fast time domain samples and slow time domain samples are $N = 128$ and $M = 64$, respectively. Two dimensional CA-CFAR is adopted. The widths of guard band along the fast time dimension and slow time dimension are $3$, the widths of training band along the fast time dimension and slow time dimension are $5$.
Results are shown in Fig. \ref{moving_RangeDopplerResult}. Fig. \ref{movingPeopleCFAR} shows that people $2$ and cyclist are detected, and people $1$ is missed by CFAR. In addition, CFAR also produces $1$ false alarm. From Fig. \ref{movingPeopleNOMP}, the three targets including people $1$, people $2$ and cyclist are detected by NOMP. The total number of points detected by NOMP is $47$. Fig. \ref{movingPeopleNOMPCFAR} shows the detection result of NOMP-CFAR algorithm. The total number of points detected by NOMP-CFAR is $24$. Note that for people $1$, the reconstructed amplitude and the threshold output by NOMP are $61.17$ dB and $42.51$ dB, respectively. And the reconstructed amplitude of people $1$ are $18.66$ dB above the corresponding thresholds, which is of high confidence of the existence of the people $1$.
\begin{figure}
  \centering
  \includegraphics[width=65mm]{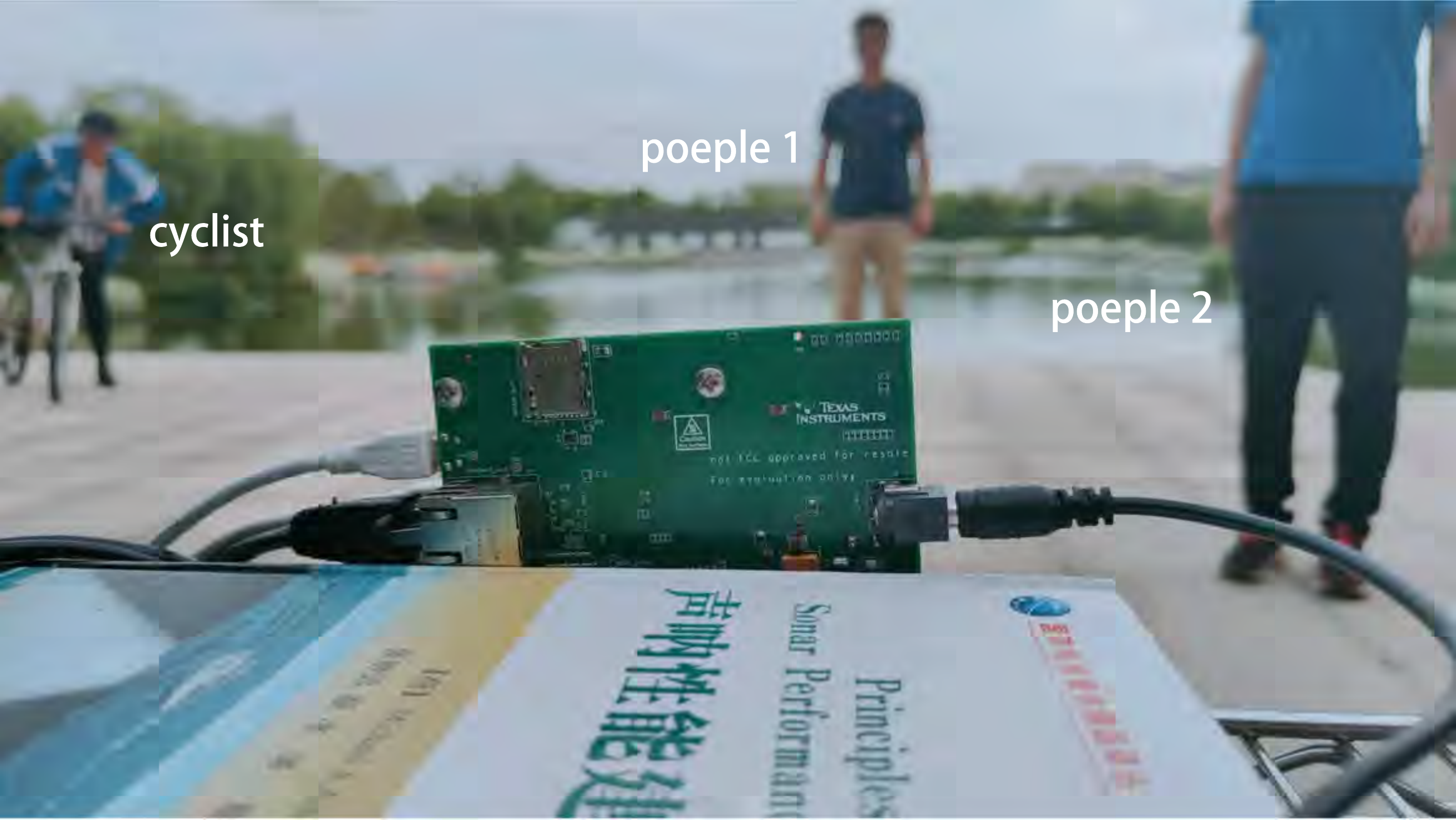}
  \caption{Field experiment 2 setup.}\label{movingPeopleScene}
\end{figure}
\begin{figure*}
\centering
  \subfigure[]{
  \label{movingPeopleCFAR}
  \includegraphics[width = 50mm]{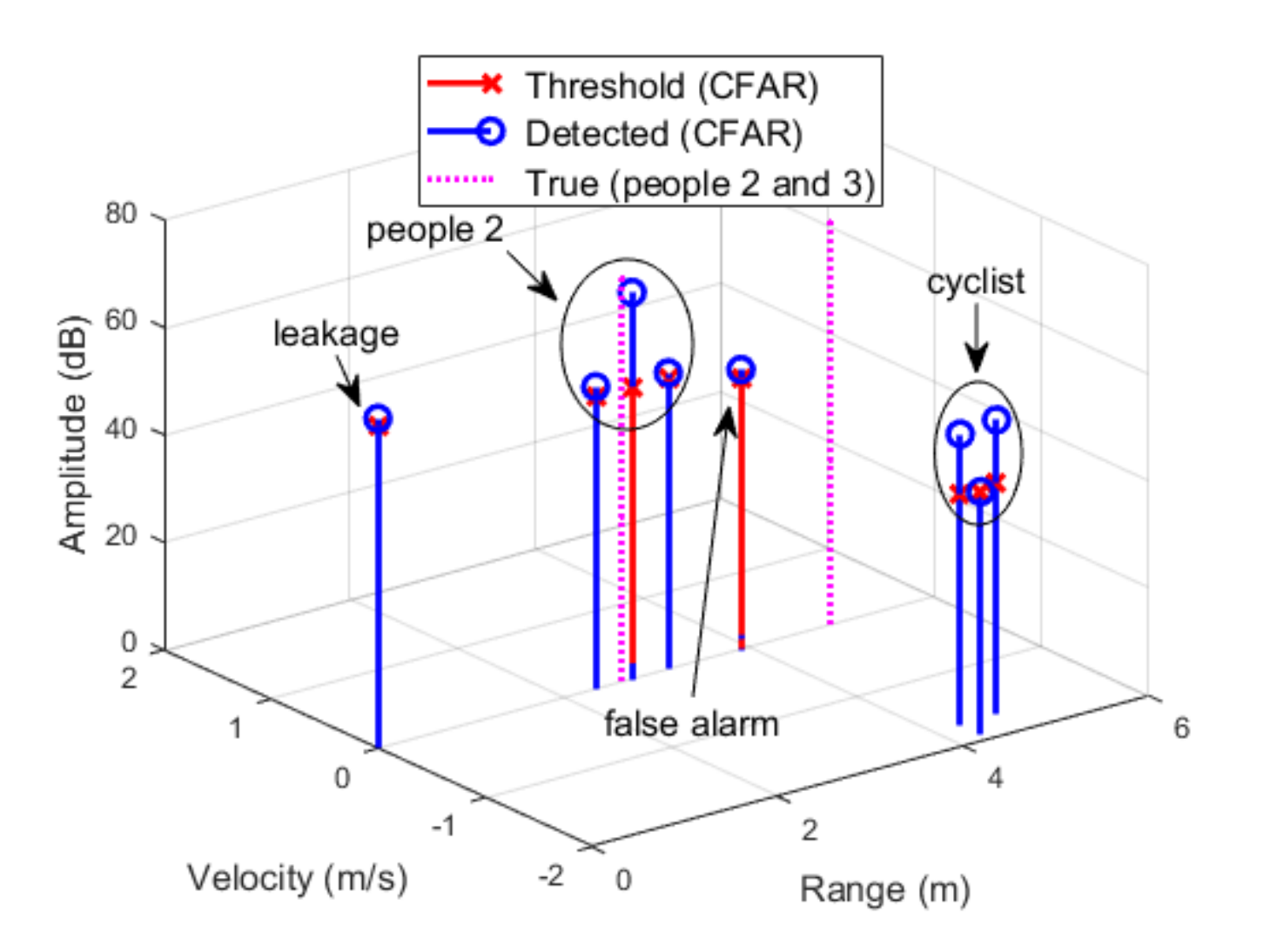}}
  \subfigure[]{
  \label{movingPeopleNOMP}
  \includegraphics[width = 50mm]{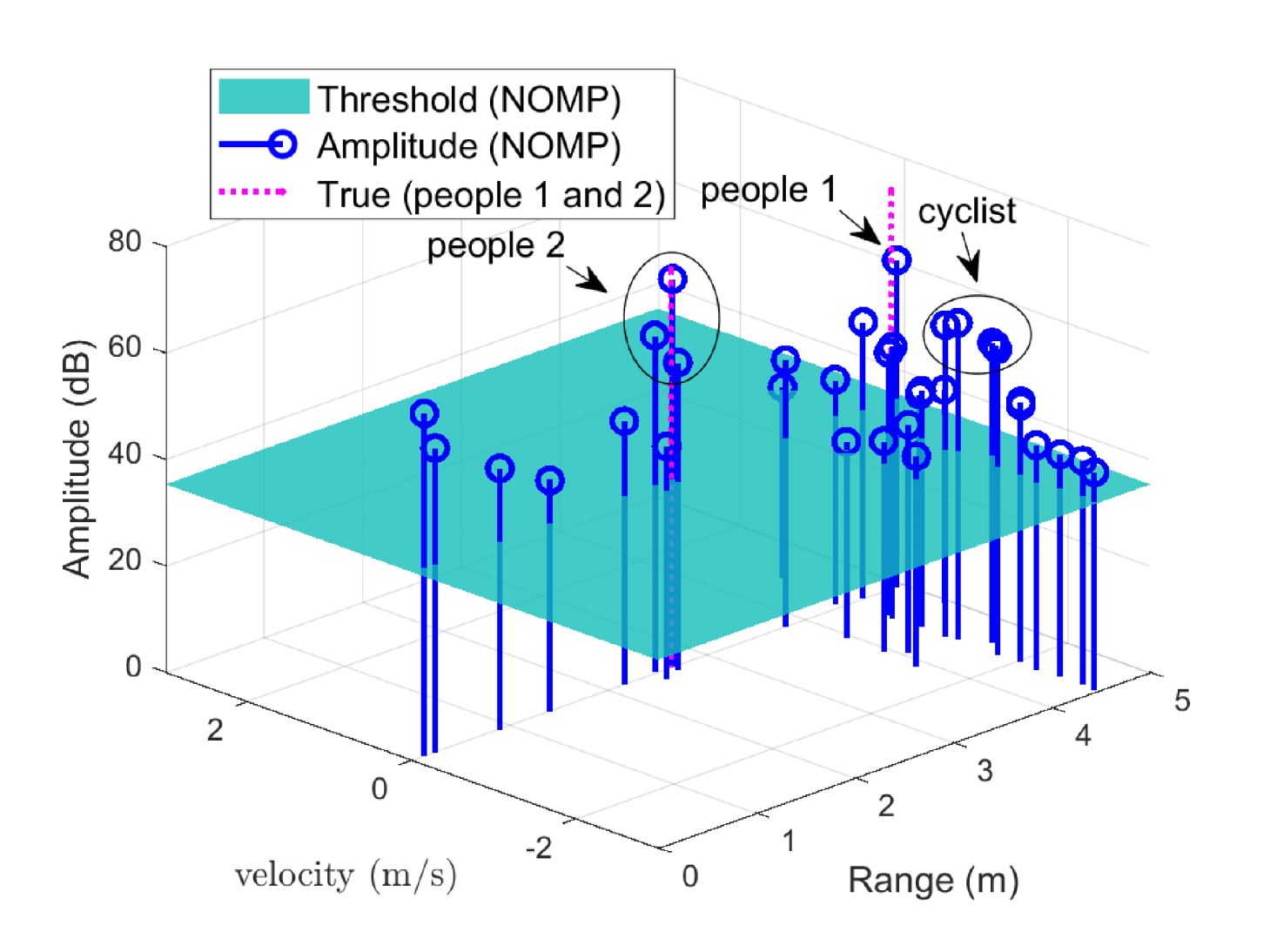}}
  \subfigure[]{
  \label{movingPeopleNOMPCFAR}
  \includegraphics[width = 50mm]{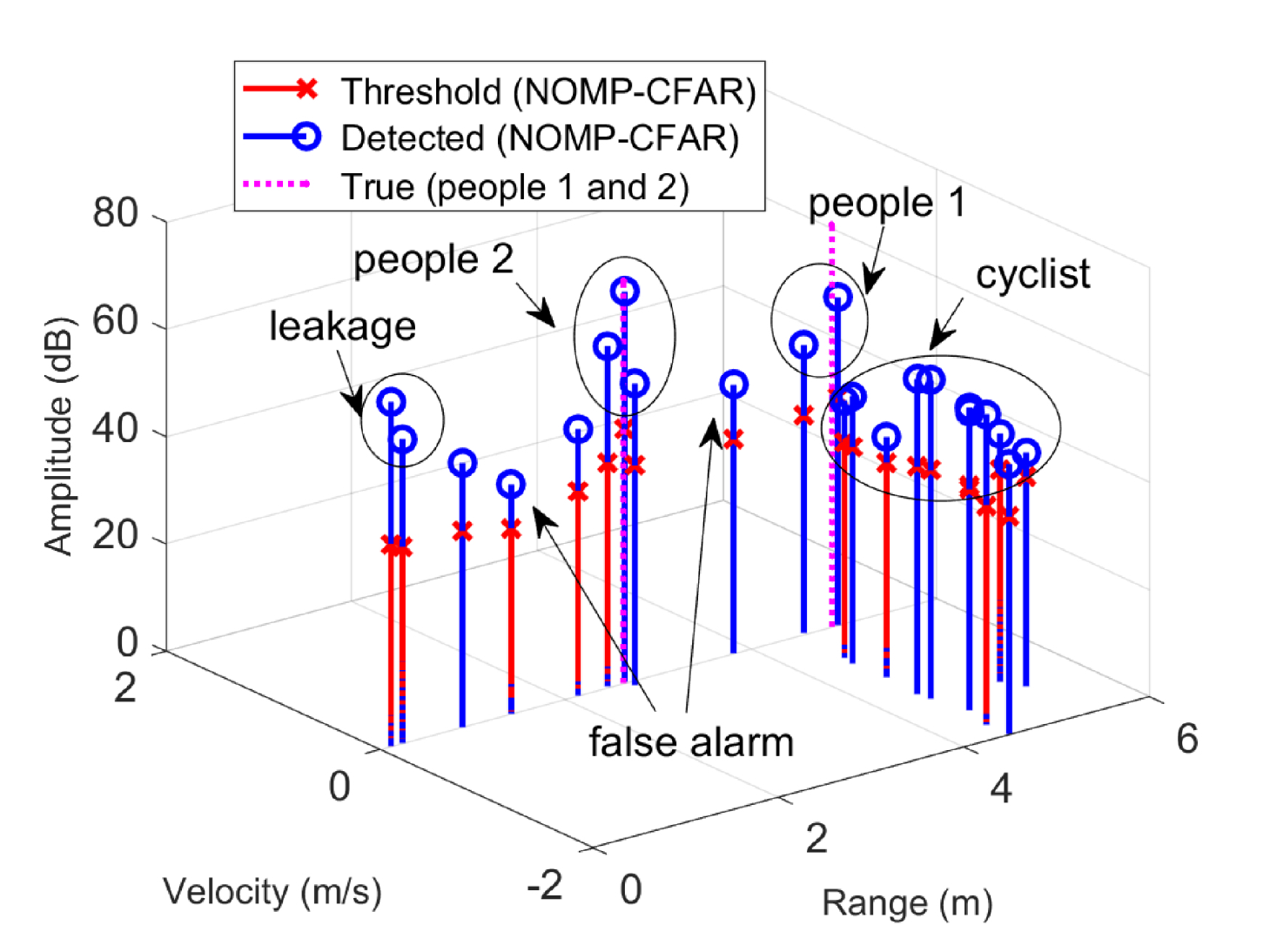}}
 \caption{Range and Doppler estimation results in experiment $2$. (a): CFAR, (b): NOMP, (c): NOMP-CFAR.}
 \label{moving_RangeDopplerResult}
\end{figure*}

\subsection{Experiment $3$}
Fig. \ref{CarScene} shows the setup of field experiment $3$. A people and a cyclist move away from the radar at the radial velocity of about $2.1$ m/s and $1.2$ m/s, and a car moves towards the radar at the radial velocity of about $2.3$ m/s. The algorithm's parameters are the same as that of the experiment $2$.
\begin{figure}
  \centering
  \includegraphics[width=65mm]{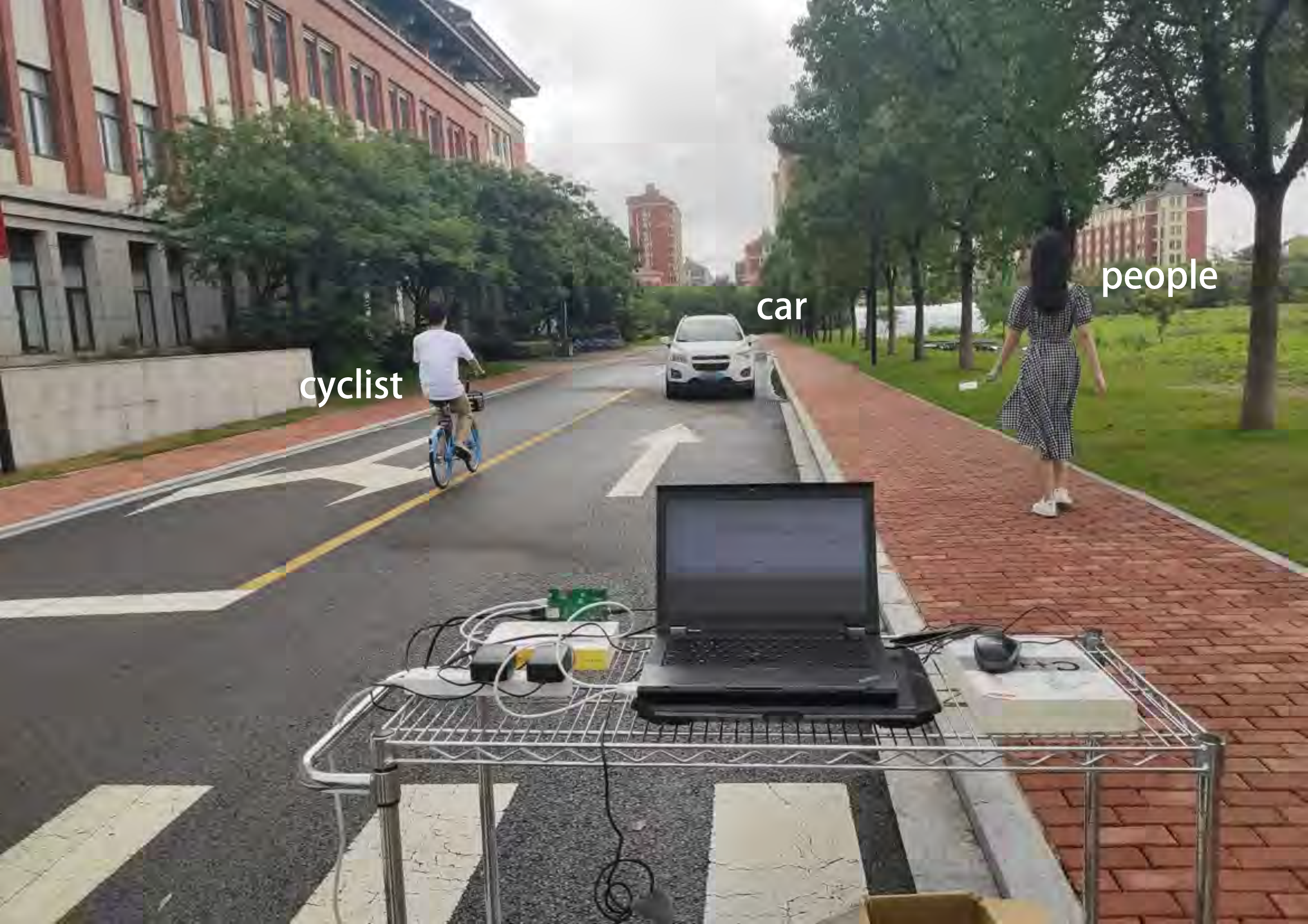}
  \caption{Field experiment 3 setup.}\label{CarScene}
\end{figure}

The range azimuth estimation results are shown in Fig. \ref{exp3car_RangeDOAResult}. The NOMP algorithm detects $28$ targets.
Fig. \ref{exp3car_RangeDOANOMP} shows that the people, cyclist and car are detected by the NOMP algorithm, whose range and azimuth estimation results are $(4.13~{\rm m},	29.93^{\circ})$,
$(5.61~{\rm m},	-29.95^{\circ})$ and
$(19.90~{\rm m},	-0.27^{\circ})$, respectively.
For the NOMP-CFAR algorithm, it detects $10$ targets, including the people, cyclist and car, as shown in Fig. \ref{exp3car_RangeDOANOMPCFAR}. In detail, the range and azimuth estimation results are
$(4.12~{\rm m},	29.73^{\circ})$,
$(5.62~{\rm m}, -29.91^{\circ})$ and
$(19.90~{\rm m},	-0.18^{\circ})$, respectively.
And their corresponding integrated amplitudes and thresholds are $(51.58~{\rm dB},	40.24~{\rm dB})$, $(44.44~{\rm dB},	40.62~{\rm dB})$, and $(45.52~{\rm dB},	39.74~{\rm dB})$, respectively.
\begin{figure*}
\centering
  \subfigure[]{
  \label{exp3car_RangeDOANOMP}
  \includegraphics[width = 50mm]{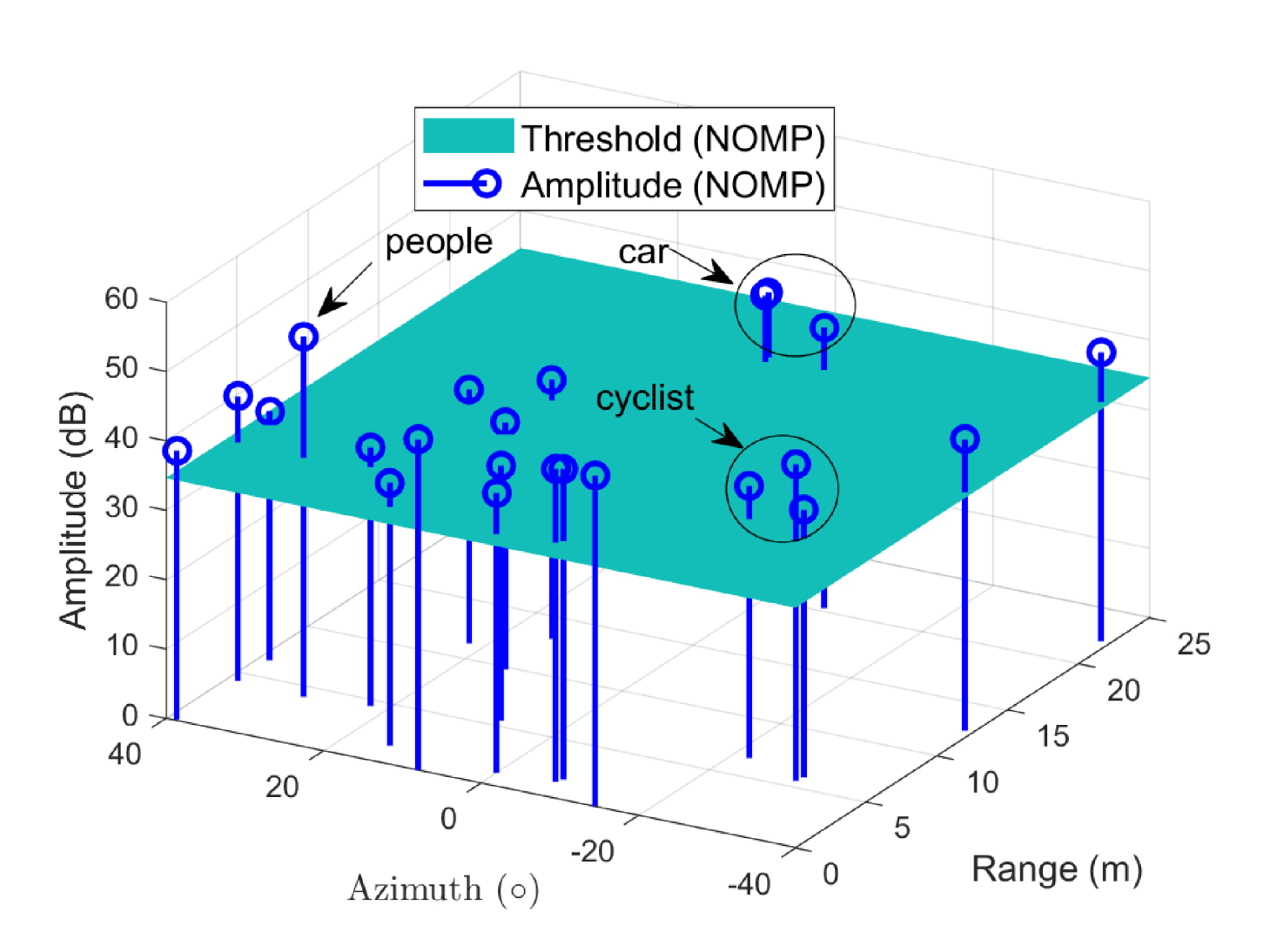}}
  \subfigure[]{
  \label{exp3car_RangeDOANOMPCFAR}
  \includegraphics[width = 50mm]{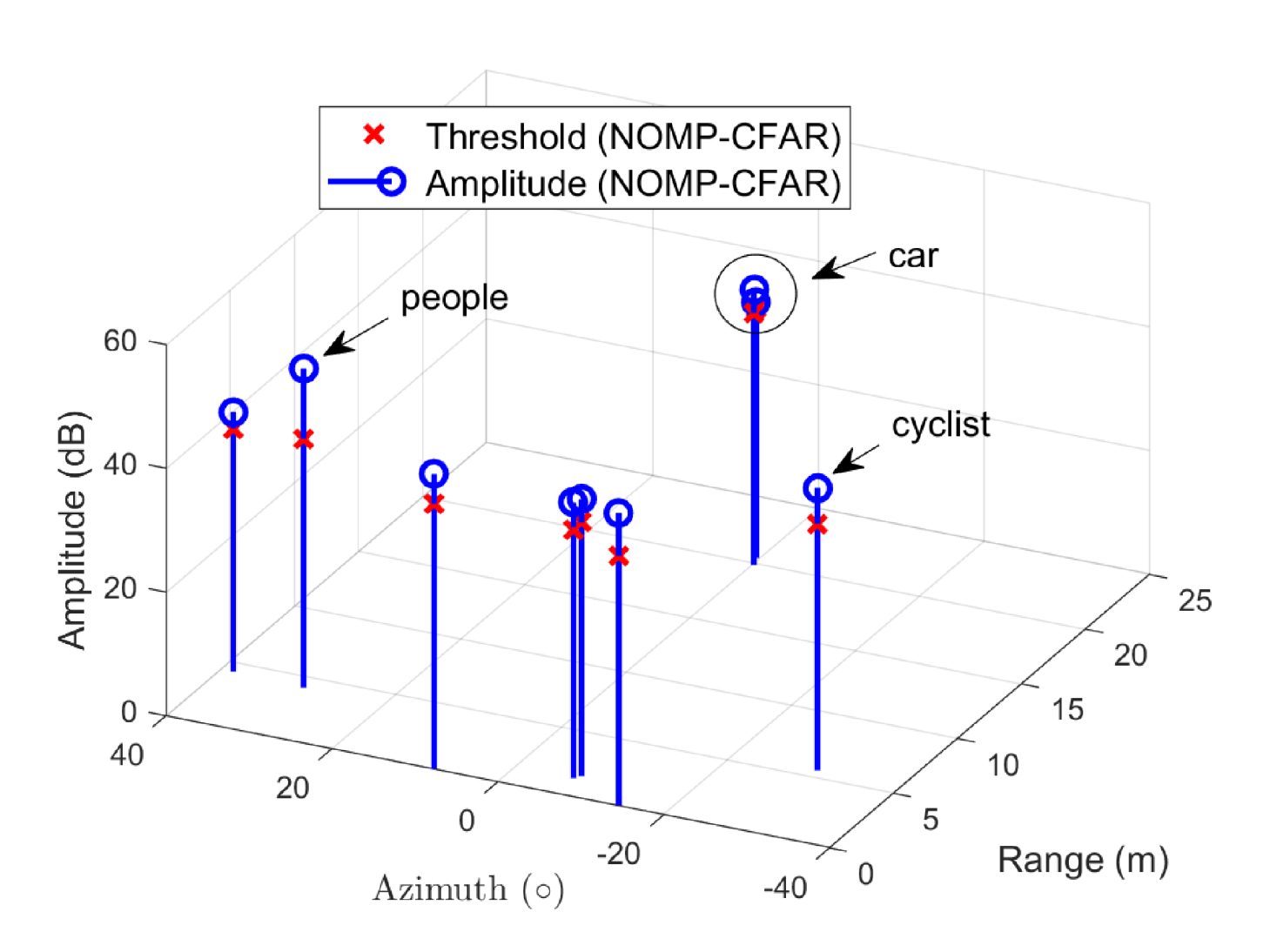}}
 \caption{Range and azimuth estimation results in experiment $3$. (a): NOMP, (b): NOMP-CFAR.}
 \label{exp3car_RangeDOAResult}
\end{figure*}

The range Doppler estimation results are shown in Fig.  \ref{exp3car_RangeDopplerResult}.
The upper bound number of targets for NOMP-CFAR is $K_{\rm max} = 48$;
Fig. \ref{exp3car_RangeDopplerCFAR} shows that the CFAR method detects the people and car but misses the cyclist.
NOMP and NOMP-CFAR detect the people, cyclist and car, and results are shown in Fig. \ref{exp3car_RangeDopplerNOMP} and Fig. \ref{exp3car_RangeDopplerNOMPCFAR}, respectively. Meanwhile, the false alarms generated by NOMP-CFAR is smaller than that of NOMP.

\begin{figure*}
\centering
  \subfigure[]{
  \label{exp3car_RangeDopplerCFAR}
  \includegraphics[width = 50mm]{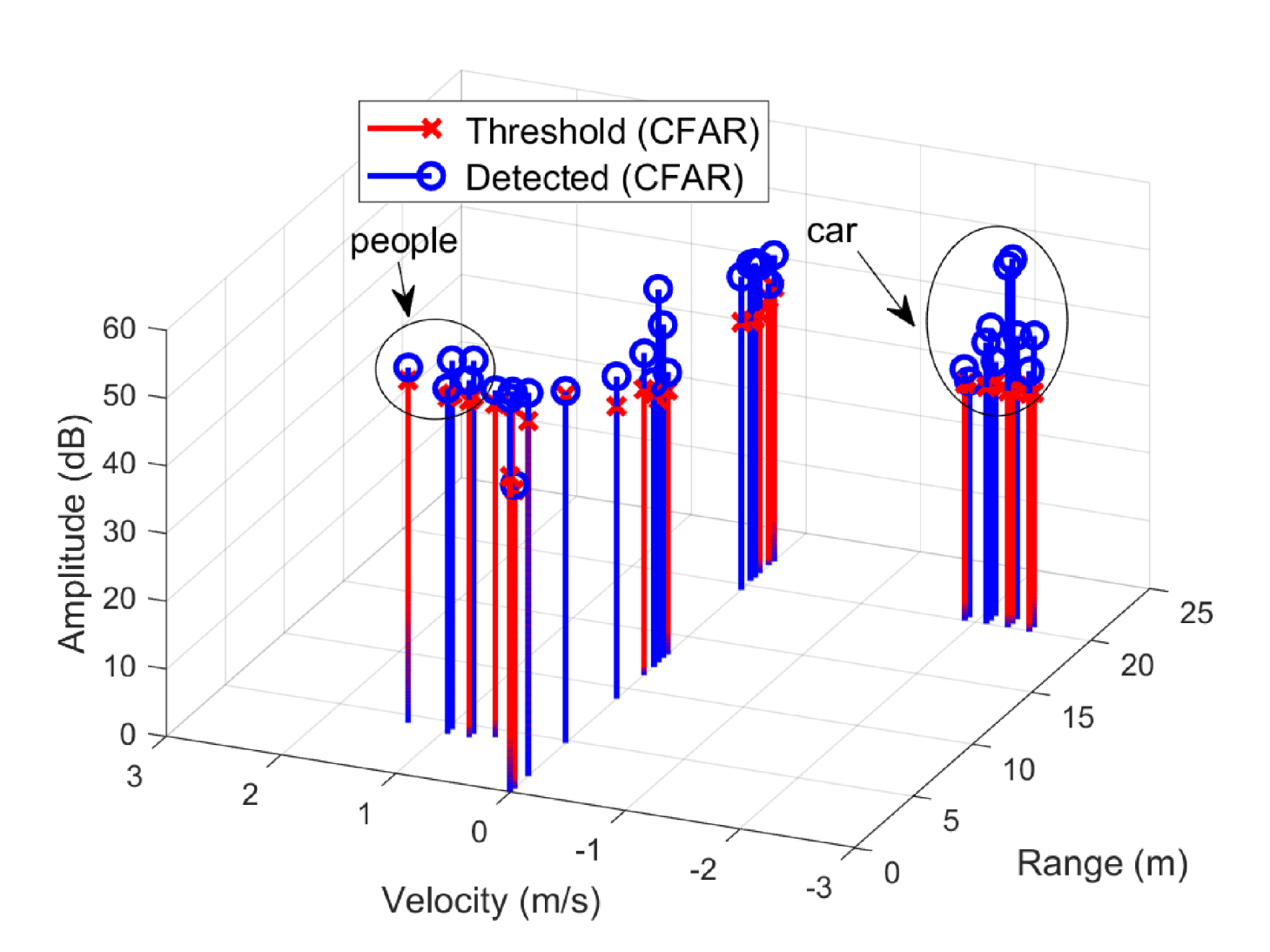}}
  \subfigure[]{
  \label{exp3car_RangeDopplerNOMP}
  \includegraphics[width = 50mm]{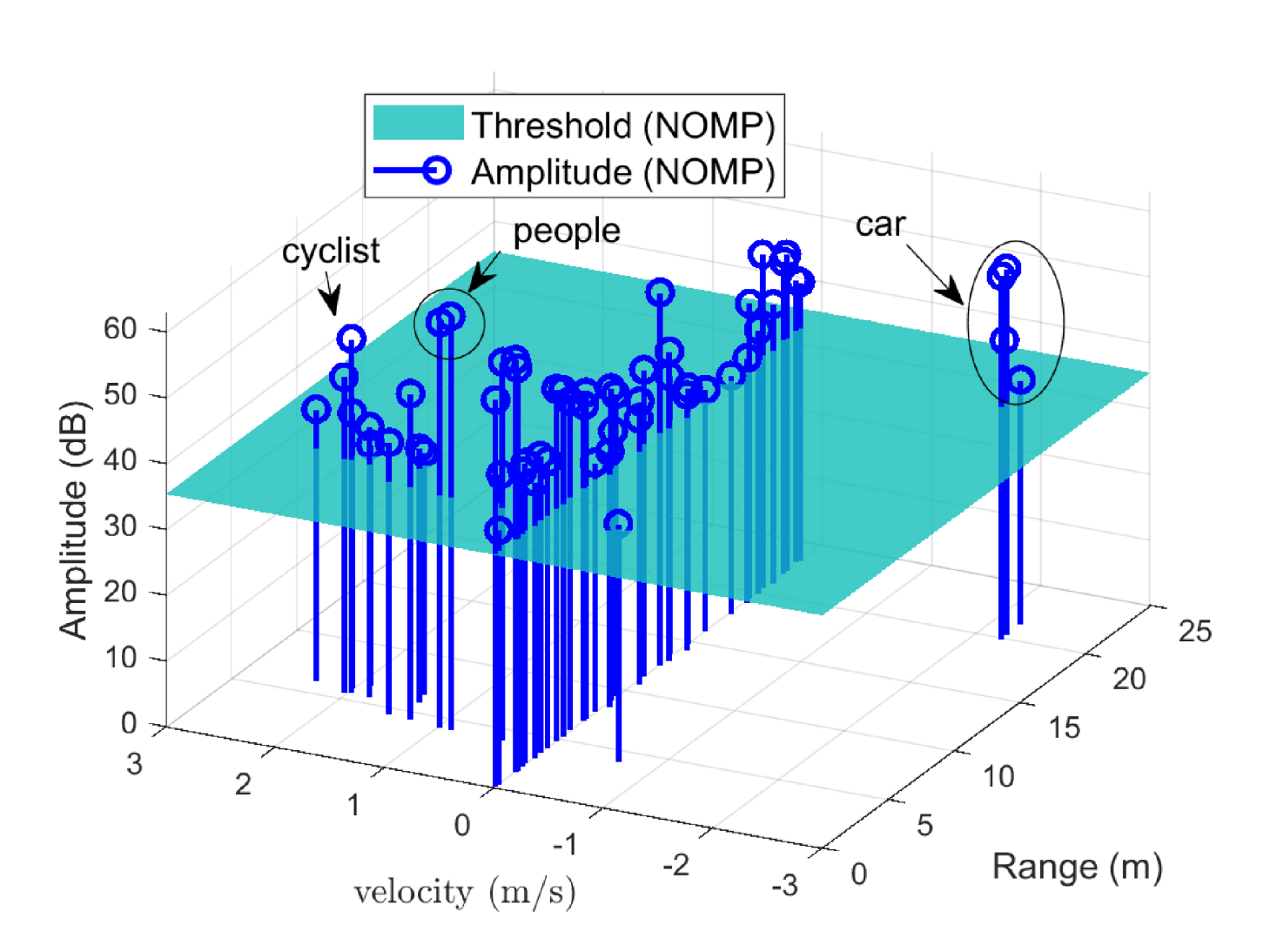}}
  \subfigure[]{
  \label{exp3car_RangeDopplerNOMPCFAR}
  \includegraphics[width = 50mm]{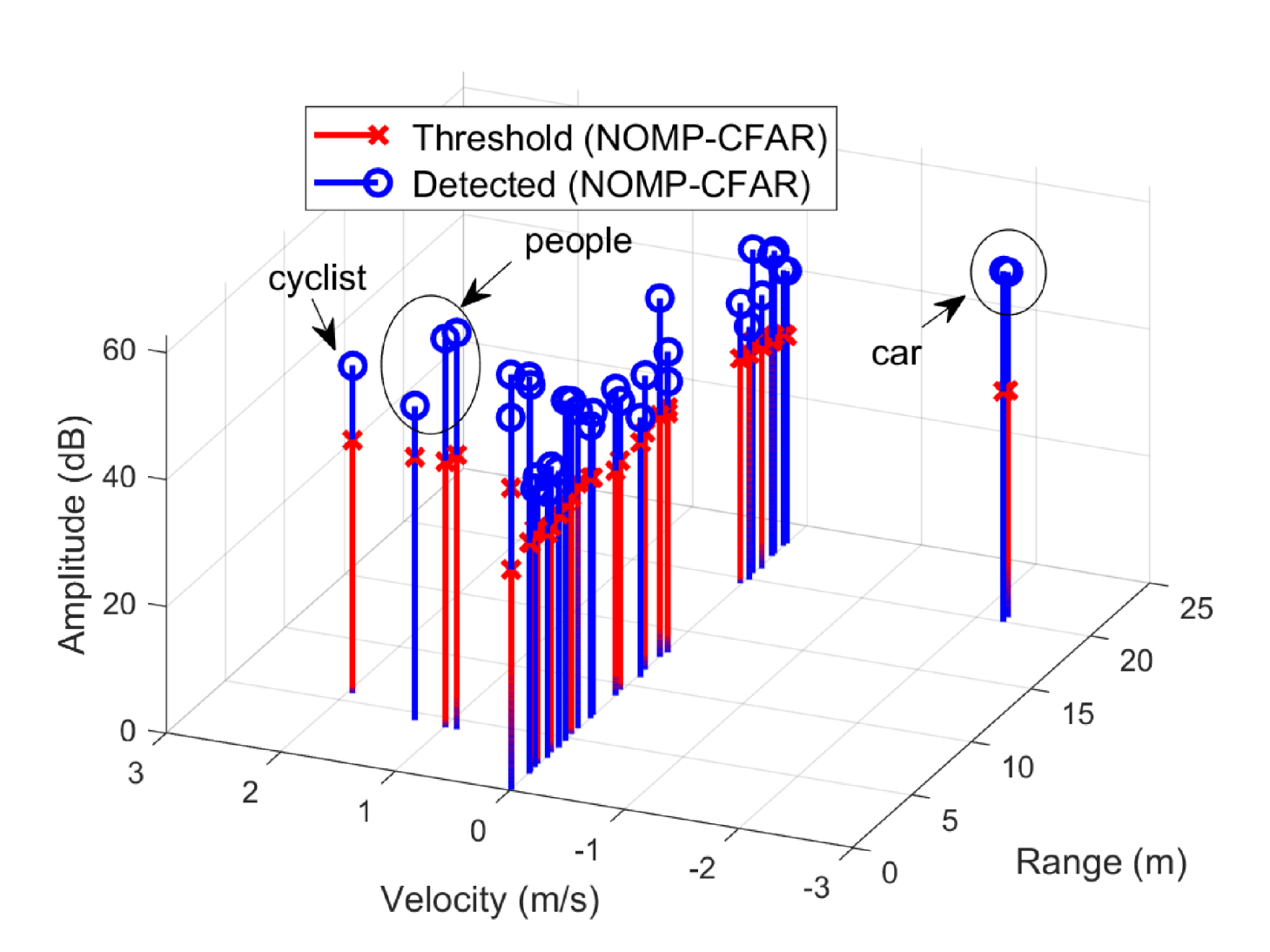}}
 \caption{Range and Doppler estimation results in experiment $3$. (a): CFAR, (b): NOMP, (c): NOMP-CFAR.}
 \label{exp3car_RangeDopplerResult}
\end{figure*}

\section{Conclusion}\label{Conclusion}
We have developed NOMP-CFAR algorithm to achieve the high estimation accuracy and maintain the CFAR behaviour for line spectrum estimation and detection. The algorithm consists of two steps: Initialization and Detection step. For the initialization step, NOMP-CFAR uses the NOMP to provide candidate frequency set in high accuracy, which avoids target masking effects incurred by CFAR. In the Detection step, a soft CFAR detector is introduced to output a quantity which characterizes the confidence of each frequency in the candidate set. A greedy approach is adopted to remove or add one frequency in each iteration, and Newton refinement is then implemented to refine the parameters of the remaining sinusoids. The effectiveness of NOMP-CFAR is verified in substantial numerical experiments and real data.
\section{Appendix}
\subsection{Compute $\bar{\rm P}_{\rm FA}$ for the MMV Scenario}\label{MMVscenario}
The CFAR detector deciding ${\mathcal H}_1$ $T({\mathcal Y})\geq \alpha_{\rm mmv}$ (\ref{GLRT_noiseunknownmmv}) can be equivalently written as
\begin{align}\label{GLRT_noiseunknownmmv}
\frac{\sum\limits_{s=1}^S|\tilde{\mathcal Y}_{{\tilde{\mathbf n}}_{\rm peak}}(s)|^2}{\sigma^2/2}\geq
\frac{\sum\limits_{s=1}^S\sum\limits_{{\tilde{\mathbf n}}\in {\mathcal{T}}_{{\tilde{\mathbf n}}_{\rm peak}}} \left|\tilde{\mathcal Y}_{\tilde{\mathbf n}}(s)\right|^2}{\sigma^2/2} \frac{\alpha_{\rm mmv}}{N_r}\triangleq \bar{T} \frac{\alpha_{\rm mmv}}{N_r}.
\end{align}
Under the null hypothesis, $\frac{\sum\limits_{s=1}^S|\tilde{\mathcal Y}_{{\tilde{\mathbf n}}}(s)|^2}{\sigma^2/2}$ follows a chi-squared distribution with a degrees of freedom $2S$, i.e., $\frac{\sum\limits_{s=1}^S|\tilde{\mathcal Y}_{{\tilde{\mathbf n}}}(s)|^2}{\sigma^2/2}\sim \chi^2(2S)$. Conditioned on $\bar{T}$, the peak of $\frac{\sum\limits_{s=1}^S|\tilde{\mathcal Y}_{{\tilde{\mathbf n}}}(s)|^2}{\sigma^2/2}$ exceeding $\bar{T} \frac{\alpha_{\rm mmv}}{N_r}$ is
\begin{align}
{\rm P}\left(\frac{\sum\limits_{s=1}^S|\tilde{\mathcal Y}_{{\tilde{\mathbf n}}_{\rm peak}}(s)|^2}{\sigma^2/2}\geq\bar{T} \frac{\alpha_{\rm mmv}}{N_r}\right)=1-F_{2S}^N\left(\bar{T} \frac{\alpha_{\rm mmv}}{N_r}\right),
\end{align}
where $N$ denotes the total number of cells, $F_{2S}(x)$ denotes the cumulative distribution function (CDF) of the chi-squared distribution
\begin{align}
F_{2S}(x)=\begin{cases}
&1-{\rm e}^{-\frac{x}{2}}\sum\limits_{s=0}^{S-1}\frac{1}{s!}\left(\frac{x}{2}\right)^{s}, x>0,\\
&0,\quad\quad\quad\quad {\rm otherwise}.
\end{cases}
\end{align}
In addition, $\bar{T}$ follows
\begin{align}\label{chi2def}
\bar{T}\sim \chi^2(2SN_r)=\begin{cases}
&\frac{1}{2(SN_r - 1)!} \mathrm{e}^{-\frac{x}{2}} \left(\frac{x}{2} \right)^{SN_r - 1},x>0,\\
&0,\quad\quad\quad\quad\quad\quad {\rm otherwise}.
\end{cases}.
\end{align}
Averaging ${\rm P}\left(\frac{\sum\limits_{s=1}^S|\tilde{\mathcal Y}_{{\tilde{\mathbf n}}_{\rm peak}}(s)|^2}{\sigma^2/2}\geq\frac{\alpha_{\rm mmv}}{N_r}\bar{T}\right)$ over $\bar{T}$ yields the false alarm
\begin{align}\label{PbarFAmmv}
\bar{\rm P}_{\rm FA}=1-{\rm E}_{\bar{T}\sim \chi^2(2SN_r)}\left[F_{2S}^N\left(\bar{T} \frac{\alpha_{\rm mmv}}{N_r}\right)\right].
\end{align}
Substituting (\ref{chi2def}) in (\ref{PbarFAmmv}) yields (\ref{simple type of Poe}).

\bibliographystyle{IEEEbib}
\bibliography{strings,refs}

\end{document}